\newtheorem{theorem}{Theorem}
\newtheorem{lemma}[theorem]{Lemma}
\newtheorem{corollary}[theorem]{Corollary}
\newtheorem{definition}[theorem]{Definition}
\newcommand{\ignore}[1]{}
\newcommand{\tofix}[1]{{\color{red} #1}}
\newcommand{\rev}[1]{{\color{black} #1}}
\newcommand{\tempcut}[1]{{\color{black} #1}}
\newcommand{\dawei}[1]{{\color{blue}\it Dawei: #1}}
\newcommand{\barzan}[1]{{\color{blue}\it BM: #1}}
\newcommand{\cancut}[1]{{\color{black}#1}}
\newenvironment{enumerate2}
{\begin{enumerate}\setlength{\itemsep}{-0.05cm}\vspace{-0.1cm}}
{\vspace{-0.05cm}\end{enumerate}}
\def\case #1{\medskip{\bf Case}\ {\it #1:}}
\def\twofigs #1{\hbox to \textwidth{#1}}
\def\yclaim #1 {\medskip{\bf Claim #1: }}
\long\def\xclaim #1 {\medskip\noindent{\bf Claim } {\it #1}\medskip}
\def\claim #1 #2 {\medskip\noindent{\bf Claim #1.} {\em #2}\medskip}
\newcommand{\inv}[0]{\vspace*{-0.2cm}}
\newcommand{\sinv}[0]{\vspace*{-0.1cm}}
\def\eg{e.g.}
\def\ie{i.e.}
\newcommand{\E}{\mathrm{E}}
\newcommand{\var}{\mathrm{Var}}
\newcommand{\cov}{\mathrm{Cov}}
\DeclareMathOperator*{\argmax}{arg\,max}
\DeclareMathOperator*{\argmin}{arg\,min}
\newcommand{\UBS}{\mbox{UBS}\xspace}
\newcommand{\SUBS}{\mbox{SUBS}\xspace}
\newcommand{\pmin}{p_{\min}}
\newcommand{\COUNT}[0]{\texttt{COUNT}\xspace}
\newcommand{\SUM}[0]{\texttt{SUM}\xspace}
\newcommand{\AVG}[0]{\texttt{AVG}\xspace}
\newcommand{\VAR}[0]{\texttt{VAR}\xspace}
\newcommand{\STDEV}[0]{\texttt{STDEV}\xspace}
\newcommand{\PERCENTILE}[0]{\texttt{PERCENTILE}\xspace}
\newcommand{\WHERE}[0]{\texttt{WHERE}\xspace}
\newcommand{\OPT}[0]{\texttt{OPT}\xspace}
\newcommand{\TLS}[0]{\texttt{2LV}\xspace}
\newcommand{\SSUF}[0]{\textit{SS\_UF}\xspace}
\newcommand{\partyone}[0]{\texttt{party1}\xspace}
\newcommand{\partytwo}[0]{\texttt{party2}\xspace}
\newcommand{\instacart}[0]{\texttt{Instacart}\xspace}
\newcommand{\movielens}[0]{\texttt{Movielens}\xspace}
\newcommand{\tpch}[0]{\texttt{TPC-H}\xspace}
\newcommand{\synthetic}[0]{\texttt{Synthetic}\xspace}
\newcommand{\syn}[2]{\textit{\texttt{S}\{#1,#2\}}\xspace}
\newcommand{\Jagg}{\hat{J}_{agg}}
\newcommand{\Jcount}{\hat{J}_{\mbox{count}}}
\newcommand{\Jsum}{\hat{J}_{\mbox{sum}}}
\newcommand{\Javg}{\hat{J}_{\mbox{avg}}}
\newcommand{\kkey}{k_{\mbox{key}}}
\newcommand{\ktup}{k_{\operatorname{tuple}}}
\newcommand{\Esum}{E_{\operatorname{sum}}}
\newcommand{\Ecount}{E_{\operatorname{count}}}
\newcommand{\Eavg}{E_{\operatorname{avg}}}
\newcommand{\ph}[1]{\vspace{2mm} \noindent \textbf{#1}---}
\newif\iftechreport
\begin{document}

\title{Joins on Samples: 
A Theoretical Guide for Practitioners}



\numberofauthors{1}

\author{
\alignauthor
Dawei Huang~~~~~~~~~Dong Young Yoon~~~~~~~~~Seth Pettie~~~~~~~~~Barzan Mozafari \vspace{0.3cm} \\
\affaddr{University of Michigan, Ann Arbor}
\email{\{hdawei, dyoon, pettie, mozafari\}@umich.edu}
}

\maketitle

\begin{abstract}
Despite  decades of research on AQP (approximate query processing), our
understanding of sample-based joins has remained 
    limited and, to some extent, even superficial.
The common belief in the community is that joining random samples is futile.
This belief is largely based on
an early result showing that the 
    join of two \emph{uniform} samples is not 
    an independent sample of the original join,
    and that it leads to quadratically fewer output tuples. 
Unfortunately, 
    this early result has little applicability to the key questions practitioners face. For example, 
     the success metric is often
     the final approximation's accuracy, rather than  output cardinality. 
    Moreover, there are many non-uniform sampling strategies that one can employ. 
Is sampling for joins still futile 
         in all of these settings? 
If not, what is the best sampling strategy in each case?
To the best of our knowledge, there is no formal study answering these questions. 

This paper aims to improve our understanding of sample-based joins and 
        offer a guideline for 
        practitioners building and using real-world AQP systems. 
 We study limitations of   offline samples in approximating join queries: given an offline sampling budget, 
 how well can one approximate the join of two tables? 
 We answer this question for two success  metrics: output size and estimator variance. 
We show that maximizing output size is easy, while there is an information-theoretical lower bound on the
        lowest variance achievable 
            by any sampling strategy.
We then define a hybrid sampling scheme that 
captures all combinations of stratified, universe, 
    and Bernoulli sampling,     and show that this scheme with our optimal parameters achieves         the theoretical lower bound within a constant factor.
Since computing these optimal parameters    
    requires shuffling statistics across the network, 
        we also propose a decentralized variant
    in which each node acts autonomously using minimal   statistics.
 We also empirically validate our findings on popular SQL and AQP engines.
\end{abstract}


\section{Introduction}
\label{sec:intro}

Approximate query processing (AQP) has regained significant attention in recent years due to 
    major trends in the industry \cite{mozafari_sigmod2017_invited}. 
Larger datasets, memory wall, and the separation of compute and storage have all made it harder 
    to achieve interactive-speed analytics. 
AQP presents itself as a viable alternative in scenarios where perfect decisions can be 
made with imperfect answers~\cite{mozafari_eurosys2013}.
AQP is most appealing when negligible loss of accuracy can be traded for a significant
    gain in speedup or computational resources.
Adhoc analytics~\cite{oracle-aqp,mozafari_pvldb2012,mozafari_sigmod2014_abm}, visualization~\cite{viz1-brown,viz1-aditya,mozafari_icde2016}, IoT~\cite{mozafari_cidr2017},
A/B testing~\cite{mozafari_sigmod2014_diagnosis}, 
email marketing and customer segmentation~\cite{walmart_talk}, and real-time threat detection~\cite{infobright_acquisition} are examples of such usecases.  

\ph{Sampling and Joins}
Sampling is one of the most widely-used techniques for general-purpose AQP~\cite{green-book}. 
The high level idea is to execute the query on a small sample of the original table(s) 
    to provide a fast, but approximate, answer.
While effective for simple aggregates, using samples for join queries 
    has long remained an open problem~\cite{join_synopses}. 
There are two main approaches to AQP: \emph{offline} or \emph{online}.
Offline approaches~\cite{mozafari_sigmod2018_verdict, join_synopses, mozafari_eurosys2013, aqua1, surajit-optimized-stratified, ganti2000icicles} build samples (or other synopses) prior to query arrival.
 At run time, they simply choose appropriate samples that can yield the best accuracy/performance for each incoming query.
Online approaches, on the other hand, wander-join
    perform much of their sampling at run time based on the query at hand~\cite{dynamicp-sample-selection, online-agg-mr2, online-agg, online-agg-mr1, cosmos, kandula2016quickr}. 
Naturally, offline sampling leads to significantly higher speedup, while online 
    techniques can support a much wider class of queries~\cite{kandula2016quickr}.
The same taxonomy applies to join approximation:
offline techniques perform joins on previously-prepared samples~\cite{join_synopses, random-sampling-joins, wander-join-2018, mozafari_sigmod2018_verdict, two-level-sampling}, while
    online approaches seek to produce a sample of the output of the join at run time~\cite{ripple_join, dbo, ripple_join_scalable, wander-join}.
As mentioned, the latter often means more modest speedups (e.g., 2$\times$~\cite{kandula2016quickr}) which may not be sufficient to justify approximation, 
    or additional requirements (e.g., an index for each join column~\cite{wander-join})
        which may not be acceptable to many applications.
Thus, our focus in this paper---and what is considered an open-problem---is the offline approach:
joins on samples, not sampling the join's output.

\ph{Joins on Samples} 
The simplest strategy is as follows. Given two large tables $T_1$ and
$T_2$, 
create a uniform random sample of each, say $S_1$ and $S_2$ respectively, 
    and then use $S_1 \bowtie S_2$
        to approximate aggregate statistics of $T_1 \bowtie T_2$.
This will lead to significant speedup if   samples are much smaller 
    than   original tables, i.e., $|T_i| \gg |S_i|$. 
    
One of the earliest results in this area shows that this simple strategy is futile for two reasons~\cite{aqua1}. 
First, 
    joining two uniform samples leads to quadratically fewer output tuples,
    i.e., joining two uniform samples that are each $p$  fraction ($0$ $\leq$ $p$ $<1$) of the 
        original tables will only produce $p^2$ of the output tuples of the original join (see 
        Figure~\ref{fig:join_example}).
Second,   
joining  
uniform samples of two tables does not yield 
an independent sample of their join\footnote{Prior work has stated that joining uniform samples is not a \emph{uniform} sample of the join~\cite{join_synopses}.  
We avoid this terminology since
    uniform   means 
    equal probability of inclusion, and in this case each   tuple
    does appear in the join of the uniform 
        samples with   equal probability, but not independently. In other words, 
    joining two i.i.d. samples is an identical, but not independent, sample of the join.} (see  Section~\ref{sec:bg:sampling_in_db} for details).
The dependence of the output tuples can drastically lower the approximation accuracy~\cite{aqua1, random-sampling-joins}.
\ignore{Figure~\ref{fig:join_example} illustrates the first problem.
Here, joining two uniform samples,
    each comprised of $p$=$0.5$ fraction of their respective tables,
results in only $p^2$=$0.25$ of the join tuples (1 instead of 4).}

\begin{figure}[t]
	\inv
	\centering
	\includegraphics[width=0.45\textwidth]{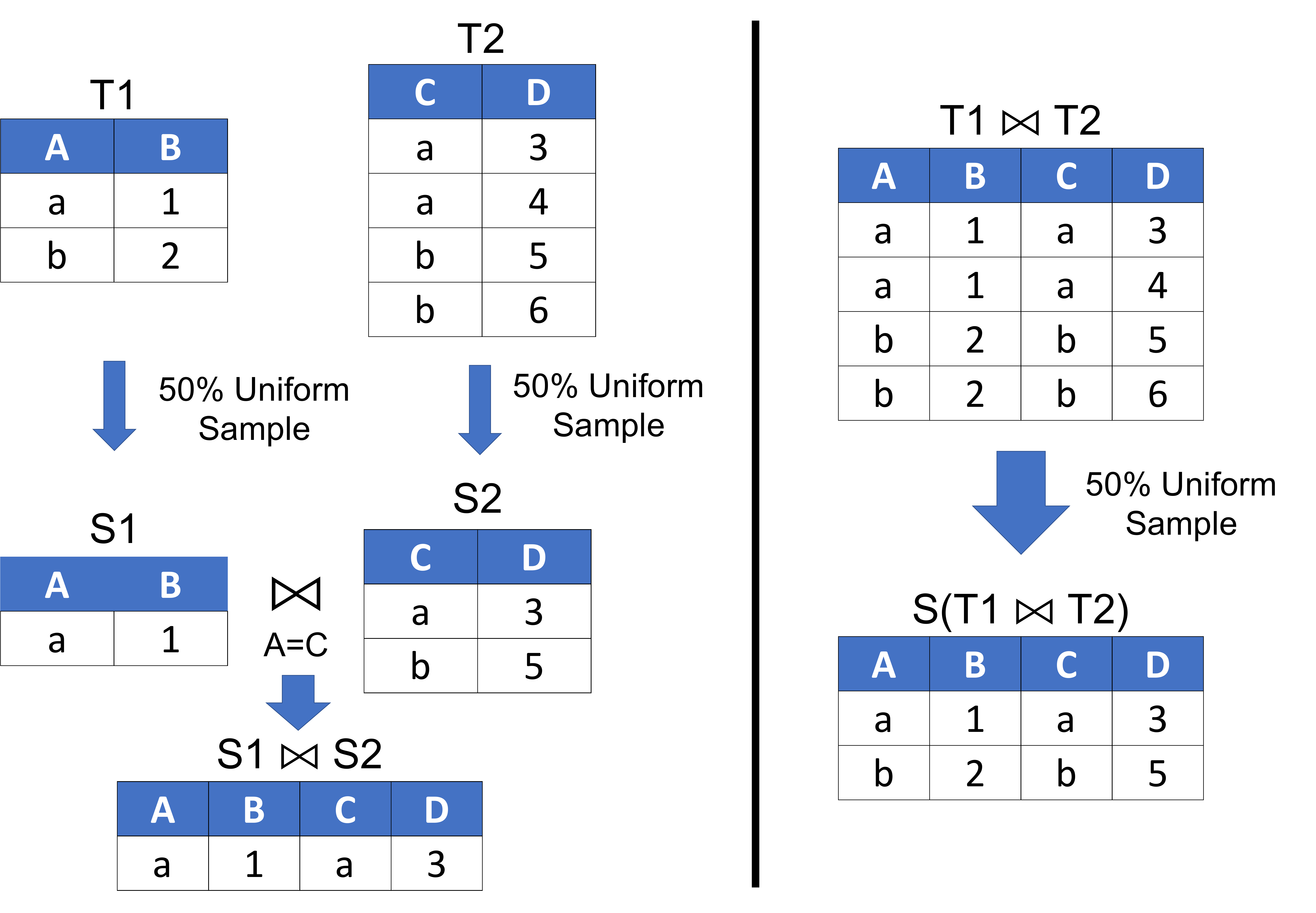}
	\inv
	\caption{\textbf{A toy example of joining two uniform samples (left) versus a uniform sample of the join (right).}}
	\inv
	\label{fig:join_example}
	\inv
\end{figure}

\ph{Prior Work}
\emph{Universe} sampling~\cite{hashed-samples, kandula2016quickr, mozafari_sigmod2018_verdict}  addresses the first drawback of uniform sampling. Although universe sampling avoids   quadratic reduction of output,
its creates even more correlation in its output, leading to much lower accuracy 
 (see Section~\ref{sec:hard:output_size}). 
    
Atserias et al. 
provide a worst case lower bound for any query involving equi-joins on multiple relations,
showing that computing exact joins with a small memory or time budget is
hard~\cite{size-bounds}.
For instance, the maximum possible join size for any cyclic join on three $n$-tuple relations is $\Theta(n^{1.5})$. 
Thus, a natural question is whether approximating joins is also hard with small memory or time. 

\ignore{They also propose an optimal join algorithm that meets this lower bound. 
This bound, however, is still polynomial in the size of the relations involved, which in modern database system can be too large to be practical, and many more application asks for a good approximation to the aggregate using a small amount of space and time.}

\ph{Our Goal}
This paper  focuses on understanding the limitation of using offline samples in approximating join queries. Given a sampling budget, how well can we approximate the join of two tables using their offline samples? 
To answer this question, we must first define what 
    constitutes a ``good'' approximation of a join. 
    We consider two metrics: 
        (1) output cardinality and (2) aggregation accuracy. 
        The former is the number of tuples of the original join that also appear in the join of the samples, whereas the latter is the  error
    of the aggregates estimated from the sample-based join with respect to their true values, if computed on the original join. Because in this paper we only consider unbiased estimators, we measure approximation error in terms of the variance of our estimators.

For the first metric,
    we provide a simple proof showing that universe sampling is optimal,
i.e. no sampling scheme with the same sampling rate can outperform universe sampling in terms of the (expected) output cardinality.
However, as we show in Section~\ref{sec:hard:output_size}, 
retaining a large number of join tuples does not imply accurate 
aggregates.
It is therefore natural to also ask about the lowest variance 
    that can be achieved given a sampling rate. 
    To the best of our knowledge, this has remained an open problem to date. For the first time, we formally  study this problem and offer an 
    information-theoretical lower bound to this question.
We also present a hybrid sampling scheme that matches        this lower bound within a constant factor. 
This scheme involves a centralized computation, 
    which can become prohibitive for large tables
        due to large amounts of statistics that need to be shuffled across the network.
    Thus, we also propose a decentralized variant that only shuffles a minimal amount of information across the nodes---such as the table size and maximum frequency---but still achieves  the same worst case guarantees.
Finally, we generalize our sampling scheme to 
    accommodate \emph{a priori} information about 
    filters (i.e., \WHERE clause).

In this paper, we make the following contributions:
\begin{enumerate}[topsep=0.15cm,itemsep=0.1cm,leftmargin=0.3cm]
    \item We discuss two metrics---output size and estimator's variance ---for measuring the quality of join approximation, and show that universe sampling is optimal for output size and there is an information-theoretical lower bound for variance (Section~\ref{sec:hard}).
    
    \item We formalize a hybrid  scheme, called Stratified-Universe-Bernoulli Sampling (SUBS), which allows for different combinations of stratified,  universe, and Bernoulli sampling. We derive optimal sampling parameters within this 
    scheme, and show that they achieve the 
    theoretical lower bound of variance
        within a constant factor    (Section~\ref{sec:generic_sampling_scheme}--\ref{sec:avg}). 
    We also extend our analysis to accommodate
        additional information regarding the 
        \WHERE clause (Section~\ref{sec:one-sample}).
    \item Through extensive experiments, we also empirically show that our optimal sampling parameters
    achieve lower error than existing sampling schemes
     in both centralized and decentralized scenarios (Section~\ref{sec:expr}).
\end{enumerate}

\section{Background}
\label{sec:bg}

In this section, we provide the necessary background on sampling-based join approximation. We also formally state our problem setting and assumptions.  

\subsection{Sampling in Databases}
\label{sec:bg:sampling_in_db}


The following are the three main popular sampling strategies (operators) 
    used in AQP engines and database systems. 
\begin{enumerate}[topsep=0.15cm,itemsep=0.1cm,leftmargin=0.3cm]
    \item \textbf{Uniform/Bernoulli Sampling.} Any strategy that samples all tuples with the same probability is considered a uniform (random) sample. 
    Since enforcing fixed-size sampling without replacement is expensive in distributed systems, 
        Bernoulli sampling is considered a more efficient strategy~\cite{kandula2016quickr}.
In Bernoulli sampling, each tuple is included in the sample independently, with a fixed sampling probability $p$. In this paper, for simplicity, we use ``uniform'' and ``Bernoulli'' 
        interchangeably.
    As mentioned in Section~\ref{sec:intro},    
        joining two uniform samples leads to quadratically fewer output tuples.
    Further, it does not guarantee an i.i.d. sample of the original join~\cite{join_synopses}:
 the output is a uniform sample of the join but not an independent one.
 Consider an arbitrary tuple of the join, say $(t_1, t_2)$, where $t_1$ is from the first table and $t_2$ is from the second. 
  The probability of this tuple appearing in the join of the samples is always the same value, i.e., $p^2$. The output is thus a uniform sample. However, the tuples are not independent: consider another tuple of the join, say $(t_1, t_2')$ where $t_2'$ is another tuple from the second table joining with $t_1$. 
  If $(t_1, t_2)$ appears in the output, 
    the probability of $(t_1, t_2')$ also appearing  becomes $p$ instead of $p^2$, which would be the probability  if they were independent.

    \item \textbf{Universe Sampling.} Given a  column\footnote{$J$ can also be a set of multiple columns.} $J$, a (perfect) hash function $h: J \mapsto [0,1]$, and a sampling rate $p$, 
        this strategy includes a tuple $t$ in the table if 
        $h(t.J) \leq p$. 
    Universe sampling is often used for equi-joins, in which 
        the same $p$ value and hash function $h$ are applied to the join columns 
            in both tables. 
    This ensures that when a tuple $t_1$ is sampled from one table, any matching tuple $t_2$ from the other table is also sampled, simply because $t_1.J = t_2.J \Leftrightarrow h(t_1.J)=h(t_2.J)$.
    \cancut{
    This is why joining two universe samples of rate $p$ produces $p$ fraction of the original join output \emph{in expectation}.
    The output is a uniform sample of the original join, as each join tuple appears with the same probability $p$. 
    However, there is more dependence among the output     tuples. Consider two join tuples $(t_1, t_2)$ and $(t_1', t_2')$ where $t_1, t_1', t_2, t_2'$ all share the same join key.
    Then, if $(t_1,t_2)$ appears, the probability of $(t_1', t_2')$ also appearing will be $1$. Likewise, if $(t_1, t_2)$ does not appear, the probability of $(t_1', t_2')$ appearing will be $0$.
    Higher dependence means lower accuracy (see Section~\ref{sec:hard:output_size}).
    }

    \item \textbf{Stratified Sampling.} \rev{The  goal of stratified sampling is to ensure that minority groups 
    are sufficiently represented in the sample. 
    Groups are defined according to one or multiple columns, called the \emph{stratified columns}. 
    A group (a.k.a. a stratum) is a set of tuples that share 
    the same value under those stratified columns.   
    Given a set of stratified columns $C$
        and an integer parameter $\ktup$,
        a stratified sampling is a scheme that guarantees at least $\ktup$ tuples are sampled
        uniformly at random from each group.}
    \ignore{For instance, given $C$ and a target frequency $\kkey$, 
        a common stratified sampling scheme 
        is to choose $\kkey$ tuples uniformly at random from each group of tuples with the same value of $C$~\cite{mozafari_eurosys2013}.} When a group has fewer than $\ktup$ tuples, all of them are retained.
\end{enumerate}


\subsection{Quality Metrics}
\label{sec:bg:success_metrics}

Different metrics can be used to assess the quality of a join approximation. In this paper, we focus on the following two, which are used by most AQP systems.

\ph{Output Size/Cardinality}
This metric  is the number of tuples of the original join that also appear in the join of the samples. 
It is mostly relevant  for exploratory usecases, where users visualize or examine a subset of the output.
In other cases, where an aggregate is computed from the join output, 
 retaining a large number of output tuples    does not guarantee accurate answers 
 (we show this in Section~\ref{sec:hard:output_size}).

\ph{Variance} In scenarios where an aggregate function needs to be calculated from the join output, 
    the error of the aggregate approximation is more relevant than the number of intermediate tuples generated. 
For most non-extreme statistics, there are readily available unbiased estimators, e.g., Horvitz-Thompson estimator~\cite{horvitz1952generalization}.  Thus, a popular indicator of accuracy is 
    the variance of the estimator~\cite{mozafari_eurosys2013}, which  determines the size of the confidence interval given a sample size.



\subsection{Problem Statement}
\label{sec:bg:problem_statement}

\begin{table}[t!]
  \caption{\textbf{Notations.}}
  \inv
  \centering
  \begin{tabular}{c|l}
  \hline
  Notation & Definition \\
  \hline
  $T_1$, $T_2$ & Two tables for the join \\
  $S_i$ & A sample generated from table $T_i$ \\
  $J$ & Column(s) used for the join between $T_1$ and $T_2$ \\
  $W$ & Column being aggregated (\eg, \SUM, \AVG)  \\
  $C$ & Column(s) used for filters (\ie, \WHERE clause) \\
  $\mathcal{U}$ & Set of all possible values of $J$ \\
  $a, b$ & Frequency vectors of $T_1$ and $T_2$'s 
  join columns, resp.\\
  $a_v, b_v$ & Number of tuples with join value $v$ in \\
  & $T_1$ and $T_2$, resp. \\
  $\Jagg$ & Estimator for a join query with  \\
  & aggregate function $agg$ \\
  $\epsilon$ & Sampling budget w.r.t. the original table size \\
  $n_1$, $n_2$ & Number of tuples in  $T_1$ and $T_2$, resp. \\
  $h$ & A (perfect) hash function \\
  $\ktup$ & Minimum number of tuples to be kept per group \\
  &  in stratified sampling\\
  $\kkey$ & Minimum number of join keys per group to apply \\
    &  universe sampling \rev{(universe sampling is not applied} \\
    & \rev{to groups with fewer than $\kkey$ join keys)} \\
  $p$ & Sampling rate of universe sampling \\
  $q$ & Sampling rate of uniform sampling \\
  \hline
  \end{tabular}
  \label{tab:notations}
  \inv\inv
\end{table}

In this section, we formally state the problem of 
sample-based join approximation. The notations used throughout the paper are listed in Table~\ref{tab:notations}.

\ph{Query Estimator}
Let  $S_1$ and $S_2$ be two samples generated offline from tables $T_1$ and $T_2$, respectively, and  $q_{agg}$ 
be a query that computes an aggregate function $agg$ on 
the join of $T_1$ and $T_2$. 
A query estimator $\Jagg(S_1, S_2)$ is a function 
that estimates
the value of $agg$ using two samples rather than the original tables.

\ph{Join Sampling Problem}
Given a query estimator $\Jagg$ and a sampling budget $\epsilon \in (0,1]$,
       our goal is to create a pair of samples $S_1$ and $S_2$---from tables $T_1$ and $T_2$, respectively---
        that are optimal in terms of a given success metric, 
            while respecting 
                a given storage budget  $epsilon$ on average.
Specifically, we seek $S_1$ and $S_2$ that minimize $\Jagg$'s variance or maximize its output size
    such that 
    $E[|S_1| + |S_2|] \leq \epsilon \times (|T_1| + |T_2|)$.

\tempcut{Note that we define the sampling budget in terms of an expected size (rather than a strict one), since  sampling schemes are probabilistic in nature and may slightly over- or under-use the budget.}

To formally study this problem, 
    we first need to define a class of reasonable sampling strategies. 
    In Section~\ref{sec:generic_sampling_scheme}, 
        we define a hybrid scheme that can capture different combinations of stratified, universe, and uniform sampling.

\ignore{\subsection{Known Results on Join Sampling}
\label{sec:bg:known_results}

\tofix{Chaudhuri et al.\cite{ChaudhuriMN99} first pointed out that join of uniform samples does not produce a uniform sample from the join. Moreover, it leads to quadratic reduction in the sampling rate. In the same paper, they also give a set of sampling algorithms that aims to produce an almost uniform sample of the join while requiring statistical information from both tables. Universe sampling \cite{VengerovMZC15} is proposed to address the issue in the loss in sampling rate. It can be seen as sampling join keys with a hash function as oppose to sampling tuples as Bernoulli sampling. Its performance suffers when there are high degree join keys that constitute a large portions of the join. End-biased sampling\cite{EstanN06} is a generalization of universe sampling also that favors join keys with high degrees. Chen and Yi \cite{two-level-sampling} proposed a two level sampling scheme that combines universe sampling and independent sampling and applied it to join size estimation.
}\dawei{Barzan take a look.}
}


\subsection{Scope and Limitations}
\label{sec:bg:assumptions}

To simplify our analysis, we limit our scope in this paper.

\ph{Flat Equi-joins} 
We focus on equi (inner) joins as the most common form of joins in practice. 
We also support both $\tt WHERE$ and $\tt GROUP BY$ 
    clauses.
Because our focus is on the join itself, we ignore nested queries and only consider flat (or flattened) queries.
We primarily focus on two-way joins.
However, our results extend to multi-way joins with the same join column(s). 
\ignore{However, in its current forumulation, 
our BVS scheme does not support multi-way joins with different join columns.}

\ph{Aggregate Functions}
Most AQP systems do not support extreme statistics, such as $\tt Min$ or $\tt Max$~\cite{approx_chapter}.
Likewise, we only consider non-extreme aggregates, and primarily focus on the three basic functions, \COUNT, \SUM, and \AVG.
However, we expect our techniques to easily extend to other mean-like
statistics as well, such as \VAR, \STDEV, and \PERCENTILE.


\section{Hardness}
\label{sec:hard}

In this section, we explain why providing a large output size is insufficient for approximating joins, and formally show the hardness of approximating common aggregates based on the theory of communication complexity.

\subsection{Output Size}
\label{sec:hard:output_size}

Uniform sampling leads to  small output size.
If we sample at a rate $q$ from both table $T_1$ and table $T_2$,
the join of samples contains only $q^2$ fraction of  $T_1 \bowtie T_2$ in expectation.
Moreover, the join of two independent samples of the original tables is in general not an  independent sample of  $T_1 \bowtie T_2$,
 which hurts the sample quality.
In contrast, universe sampling~\cite{hashed-samples, kandula2016quickr} with sample rate $p$ can,  in expectation, sample a $p$ fraction of $T_1 \bowtie T_2$.
We prove that this is optimal 
(all omitted proofs are deferred to 
\iftechreport
Appendix~\ref{app:omitted_proof}).
\else
our report~\cite{approx-join-techreport}).
\fi

\begin{restatable}{theorem}{worstexpectation}
No sampling scheme with sample rate $\alpha$ can guarantee more than $\alpha$ fraction of $T_1 \bowtie T_2$ in expectation for all possible inputs.
\end{restatable}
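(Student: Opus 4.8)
The plan is to exhibit a single adversarial instance on which \emph{every} sampling scheme of rate $\alpha$ retains at most an $\alpha$ fraction of the join in expectation; since universe sampling already attains exactly $\alpha$, this settles optimality. The instance I would use is the most correlated one possible: let $T_1$ and $T_2$ each contain $n$ tuples, all sharing a single common join value. Then every tuple of $T_1$ joins with every tuple of $T_2$, so $|T_1 \bowtie T_2| = n^2$.

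Next I would reduce an arbitrary scheme to its marginal inclusion probabilities. For tuple $i$ of $T_1$ write $x_i = P(i \in S_1)$, and for tuple $j$ of $T_2$ write $y_j = P(j \in S_2)$. By linearity of expectation the rate-$\alpha$ budget is exactly $\sum_i x_i + \sum_j y_j = \E[|S_1|] + \E[|S_2|] \le \alpha \cdot 2n$. The expected output size is $\sum_{i,j} P(i \in S_1 \wedge j \in S_2)$, summed over all $n^2$ join pairs. The crucial observation is that I must \emph{not} assume $S_1$ and $S_2$ are independent---universe sampling deliberately correlates them---so I instead invoke the distribution-free bound $P(A \cap B) \le \min(P(A), P(B))$, giving expected output size at most $\sum_{i,j}\min(x_i, y_j)$.

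Finally I would finish with the elementary inequality $\min(a,b) \le (a+b)/2$. This yields
\[
\sum_{i,j}\min(x_i, y_j) \;\le\; \frac{1}{2}\sum_{i,j}(x_i + y_j) \;=\; \frac{n}{2}\Bigl(\sum_i x_i + \sum_j y_j\Bigr) \;\le\; \frac{n}{2}\cdot 2\alpha n \;=\; \alpha n^2,
\]
where the middle equality holds because each $x_i$ occurs in $n$ pairs and each $y_j$ in $n$ pairs. Since $|T_1 \bowtie T_2| = n^2$, the expected output is at most an $\alpha$ fraction of the join, as claimed.

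I expect the only real subtlety---and the step most likely to be questioned---to be the refusal to assume independence between the two samples: the argument must pass through marginals only, via $P(A\cap B)\le\min(P(A),P(B))$, so that it covers correlated schemes such as universe sampling. Choosing the equal-size, single-key instance is what makes the averaging bound $\min(a,b)\le(a+b)/2$ tight enough to land exactly at $\alpha n^2$; an unbalanced instance would demand slightly more careful bookkeeping but leads to the same conclusion.
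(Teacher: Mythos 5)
Your proof is correct, but it reaches the conclusion by a genuinely different route from the paper's. The paper's hard instance is the opposite extreme of yours: two tables with join keys $1,\dots,n$, each appearing exactly once in each table, so the join is one-to-one and has size $n$. There the argument is purely deterministic---every output tuple is determined by its $T_1$-component, so $|S_1\bowtie S_2|\le|S_1|$ holds pointwise on every sample outcome, and taking expectations gives the bound $\alpha n$ with no need to reason about inclusion probabilities or correlation at all. You instead pick the fully degenerate single-key (Cartesian product) instance, which forces you to work with marginals: the distribution-free bound $P(A\cap B)\le\min\bigl(P(A),P(B)\bigr)$ followed by $\min(a,b)\le(a+b)/2$ and the double-counting identity. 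All of these steps are valid and the accounting lands exactly on $\alpha n^2$, so the argument is sound. The paper's instance buys a shorter proof (a set containment rather than a probabilistic averaging argument); yours has the incidental virtue of being the very instance the paper uses in Section~3.1 to show that large output size does not imply low variance, so it simultaneously illustrates that the scheme attaining the $\alpha$ bound here (universe sampling) is a poor estimator. Your insistence on not assuming independence between $S_1$ and $S_2$ is exactly right and is what makes the bound cover correlated schemes; the paper sidesteps that issue entirely because its containment argument holds sample-by-sample.
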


\ignore{
\begin{proof}
We can simply consider two identical tables $T_1$, $T_2$ of $n$ tuples, each having join key $1, 2, \ldots, n$. Their join has size $n$.
Since each tuple of $T_1$ joins with exactly one tuple of $T_2$, the size of the join of the samples must not be larger than the size of sample of $T_1$.
Since, by assumption, the expected size of the sample of $T_1$ is at most $\alpha n$, the expected size of the join of the samples must also be at most $\alpha n$.
\end{proof}
}

However, a large number of tuples retained in the join does not imply that the original join query can be accurately approximated.
As pointed out in \cite{two-level-sampling},
universe sampling shows poor performance in approximating queries when the frequencies of keys are concentrated on a few elements.
Consider the following extreme example with  tables
$T_1$ and $T_2$, each comprised of $n$ tuples with a single value $1$ in their join key.
In this example, universe sampling with the sampling rate $p$ produces an estimator of variance $n^4/p$, while uniform sampling with rate $q$ has a variance of $n^2/q^2$,
which is much lower when $p = q$ and $n$ is large. 
Thus, a larger output size does not necessarily lead to a better approximation of the query. 

\subsection{Approximating Aggregate Queries}
\label{sec:hard:approx_queries}

In this section, we focus on the core question: why is approximating  common aggregates (e.g., \COUNT, \SUM and \AVG) hard when using a small sample (or more generally,  a small summary)?
We address this question using the theory of communication complexity.
Specifically, to show that computing \COUNT on a join is hard, we reduce it to
    set intersection,
    a canonically hard problem in communication complexity.
\ignore{Consider two parties, Alice and Bob, holding information $x$ and $y$, respectively. 
They want to evaluate some function $f(x, y)$  while exchanging as little information as possible. }
  Assume that both Alice and Bob each hold a set of size $k$, say $A$ and $B$, respectively.
They aim to estimate the size of $t = |A \cap B|$.
Pagh et. al~\cite{min-wise-hashing} show that if Alice only sends a small summary to Bob, any unbiased estimator that Bob uses will have a large variance.

\begin{theorem}[See \cite{min-wise-hashing}] \label{thm:intersection}
Any one-way communication protocol that estimates $t$ within relative error $\delta$ with probability at least $2/3$ must send at least $\Omega(k/(t\delta^2))n$ bits.
\end{theorem}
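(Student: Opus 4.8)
The plan is to prove the theorem in two stages: first establish a single-pair lower bound as a lemma, then amplify it by a factor of $n$ through a direct-sum argument over independent copies. For a single pair $(A,B)$ of $k$-element sets with $|A \cap B| = t$, I would show that any one-way protocol estimating $t$ to relative error $\delta$ with probability at least $2/3$ must transmit $\Omega(k/(t\delta^2))$ bits. The theorem then follows by having Alice and Bob hold $n$ independent instances $(A_1,B_1),\dots,(A_n,B_n)$ drawn from the hard single-pair distribution, arguing that a protocol solving the aggregate task must implicitly solve each instance, and invoking additivity of information to multiply the single-pair bound by $n$.

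For the single-pair bound, I would identify the estimation of $t$ with estimating the fraction $\theta = t/k$ of Alice's elements that land in Bob's set. Viewing the indicator vectors of $A$ and $B$ as living in $\{0,1\}^u$, we have $t = \langle \mathbf{1}_A, \mathbf{1}_B\rangle$, so recovering $t$ to additive error $\delta t$ amounts to estimating a proportion near $\theta$ to additive accuracy $\delta\theta$. I would formalize this through a reduction from an augmented-indexing / Gaussian-mean-estimation style primitive and apply Fano's inequality (equivalently, a mutual-information bound): to resolve $\theta$ to accuracy $\delta\theta$ with constant confidence, Alice's message must carry $\Omega\!\big(\theta/(\delta\theta)^2\big) = \Omega\!\big(k/(t\delta^2)\big)$ bits of information about $A$. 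The important feature I would extract here is not merely a communication bound but a lower bound on the \emph{information cost} of the protocol, since that is what the next stage composes.

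The factor of $n$ comes from the direct-sum step, which I expect to be the main obstacle. I would define the $n$-fold problem in which the players must report a good estimate of $t_i = |A_i \cap B_i|$ for each $i$ (in the join reduction this corresponds to the per-key contributions that make up $\COUNT$), with the instances independent and each distributed as the hard single-pair instance. Because direct-sum theorems fail for communication complexity in general, I would work with internal information complexity: the information a single transcript reveals about $n$ independent inputs is at least the sum of the per-coordinate informations, so the $n$-fold information cost is $\Omega(n \cdot k/(t\delta^2))$, and converting back via communication $\ge$ information yields the claimed bound. The two delicate points are (i) showing that a protocol for the combined task induces, for each coordinate, a valid single-pair estimator meeting the $2/3$ success requirement---handled by a conditioning/averaging argument over the other coordinates together with a constant amount of success amplification---and (ii) ensuring the single-pair bound of the first stage is genuinely an information-cost bound, so that the additivity step is actually applicable.
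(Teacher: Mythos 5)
A preliminary point before assessing the argument itself: the paper contains no proof of Theorem~\ref{thm:intersection} at all---it is imported verbatim from Pagh et al.~\cite{min-wise-hashing}, and it does not appear in the appendix of omitted proofs---so your attempt can only be measured against the cited source. Measured that way, your second stage is aimed at a phantom. The trailing factor ``$n$'' in the statement is a typo (most likely a leftover from the source's notation, where the sets have size $n$): the setup defines only two sets of size $k$, no quantity $n$ exists in that context, and every downstream use---the variance bound $\Omega(kt/s)$ of Corollary~\ref{cor:lb} and the $t(1+\delta)$-versus-$t(1-\delta)$ distinguishing step in the \texttt{AVG} reduction---invokes only the single-instance bound $\Omega(k/(t\delta^2))$. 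Your entire direct-sum apparatus over $n$ independent copies, with its information-additivity and per-coordinate success-amplification machinery, therefore proves (at best) a statement about a different, $n$-fold problem that neither the paper nor the cited theorem asserts or needs. That stage is unnecessary rather than wrong; the real burden is your stage one.

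Stage one, however, has a genuine gap at its pivotal quantitative step. You assert that, by Fano's inequality, resolving $\theta = t/k$ to additive accuracy $\delta\theta$ forces Alice's message to carry $\Omega(\theta/(\delta\theta)^2) = \Omega(k/(t\delta^2))$ bits. But Fano, or any packing/mutual-information argument over the scalar parameter $\theta$, lower-bounds communication by the logarithm of the number of distinguishable parameter values---here $O(\log(1/\delta))$ at best, and for the two-alternative gap version only $O(1)$ bits. The quantity $\theta/(\delta\theta)^2$ is the \emph{sample complexity} of estimating a Bernoulli bias, and transplanting it into a one-way communication bound is precisely the hard part of the theorem: one must show that an $s$-bit message about Alice's $k$ coordinates, of which only the coordinates indexed by Bob's set (unknown to Alice) are relevant, leaves the two gap alternatives statistically indistinguishable unless $s = \Omega(k/(t\delta^2))$. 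This is accomplished in the literature either by an explicit embedding reducing from Indexing/Gap-Hamming (the route of \cite{min-wise-hashing}; the block construction by which relative error $\delta$ in $t$ decodes a queried bit \emph{is} the proof, and it is absent from your sketch) or by per-coordinate Hellinger/strong-data-processing arguments as in distributed estimation---not by Fano alone. A second unaddressed subtlety: the theorem is worst-case, while your argument is distributional; under the natural product distribution you gesture at, $t$ concentrates within $\pm\sqrt{t}$ of its mean, so the zero-communication estimator $k\theta$ already succeeds with constant probability whenever $\delta \gtrsim 1/\sqrt{t}$, and a valid hard distribution must consist of gap instances engineered so that the trivial estimator fails. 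Your skeleton (hard distribution plus an information bound) is the right genre, but the step that produces the $1/\delta^2$ factor is exactly the content you have deferred.
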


\begin{corollary}\label{cor:lb}
Any estimator to $|A \cap B|$ produced by Bob that is based on an $s$-bits summary by Alice must have a variance of at least $\Omega(kt/s)$.
\end{corollary}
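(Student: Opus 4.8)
The plan is to derive the corollary from Theorem~\ref{thm:intersection} by a standard ``variance-to-error'' reduction: any sufficiently low-variance unbiased estimator automatically yields a relative-error estimation protocol, and the communication lower bound of Theorem~\ref{thm:intersection} then forces the variance to be large. The whole argument is essentially a contrapositive of the theorem phrased in terms of $\var$ rather than relative error.

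First I would set up the reduction. Suppose Bob holds an unbiased estimator $\hat t$ of $t = |A \cap B|$ whose only information about $A$ is an $s$-bit summary sent by Alice, and whose variance is $V = \var(\hat t)$. Since the summary consists of $s$ bits, the procedure ``Alice sends the summary, Bob outputs $\hat t$'' is precisely a one-way communication protocol of cost $s$ bits. The goal is to show $V = \Omega(kt/s)$.

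Next I would apply Chebyshev's inequality to turn the variance bound into a relative-error guarantee. Because $\hat t$ is unbiased with variance $V$, we have $\Pr[\,|\hat t - t| \geq \lambda \sqrt{V}\,] \leq 1/\lambda^2$ for every $\lambda > 0$. Choosing $\lambda = \sqrt{3}$ gives $\Pr[\,|\hat t - t| < \sqrt{3V}\,] \geq 2/3$, so $\hat t$ estimates $t$ with additive error $\sqrt{3V}$, i.e.\ within relative error $\delta = \sqrt{3V}/t$, with probability at least $2/3$ --- exactly the success criterion assumed in Theorem~\ref{thm:intersection}. Note that no probability amplification is needed, so the protocol's cost stays at $s$ bits. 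Invoking Theorem~\ref{thm:intersection} with this $\delta$ gives $s = \Omega\!\big(k/(t\delta^2)\big)$; substituting $\delta^2 = 3V/t^2$ yields $s = \Omega\!\big(kt/V\big)$, and rearranging gives $V = \Omega(kt/s)$, as claimed.

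The argument is short, so I expect only minor obstacles. The main one is bookkeeping around randomization: both Alice's summary and Bob's estimator may use internal coins, but this is harmless, since a randomized $s$-bit summary is still a valid $s$-bit one-way protocol to which Theorem~\ref{thm:intersection} applies, and the Chebyshev step holds over the combined randomness. I would also double-check that the intended reading of the theorem's bound is $\Omega(k/(t\delta^2))$ bits (the trailing $n$ appears to be a typo), since the corollary's $\Omega(kt/s)$ form matches cleanly only under that reading.
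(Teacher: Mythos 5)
Your proposal is correct and is exactly the intended derivation: the paper states Corollary~\ref{cor:lb} without proof as an immediate consequence of Theorem~\ref{thm:intersection}, and the standard way to fill that gap is precisely your Chebyshev-based contrapositive (unbiased estimator with variance $V$ $\Rightarrow$ relative error $\sqrt{3V}/t$ with probability $2/3$ $\Rightarrow$ $s = \Omega(kt/V)$). Your side remarks are also right --- the trailing $n$ in the theorem statement is a typo, and unbiasedness (which the paper assumes throughout) is genuinely needed, since for a biased estimator one would have to argue via mean squared error rather than variance.
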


Any sample of size $s$ can be encoded using $O(\log{k \choose s})$ bits, implying that any estimator to \COUNT  that is based on a sample of size $s$ from one of the tables must have a variance of at least $\Omega(kt/s)$.

Estimating \SUM queries is at least as hard as estimating \COUNT queries, since any \COUNT  can be reduced to a \SUM  by setting all  entries in the \SUM column to $1$.

From the hard instance of set intersection,
we can also derive a hard instance for \AVG queries.
Based on Theorem~\ref{thm:intersection}, any summary of $T_1$ that can distinguish between intersection size $t(1+\delta)$ and $t(1 - \delta)$ must be at least of size $\Omega(k/(t\delta^2))$ bits.
Now we reduce this problem to estimating an \AVG query. 

Here, the two tables consist of $k + \sqrt{t}$ tuples each.
The first $k$ tuples of $T_1$ and $T_2$ are from the hard instance of set intersection,
and the values of their \AVG column are set to $2r$.
The join column of the last $\sqrt{t}$ tuples is set to some common key $v'$ that is in the first $k$ tuples, and their \AVG column is set to $0$.
Therefore, the intersection size from the first $k$ tuples is at least $t(1+\delta)$ (or at most $(t(1-\delta))$) if and only if the result of the \AVG query is at least $\frac{2rt(1+\delta)}{t(2 + \delta)} = (1+O(\delta))r$ (or at most $\frac{2rt(1+\delta)}{t(2 + \delta)} = (1-O(\delta)r)$).
By re-scaling $\delta$ by a constant factor, we can get the following theorem:

\begin{theorem}
Any summary of $T_1$ that can estimate an \AVG query with precision $\delta$ with probability at least $2/3$ must have a size of          at least $\Omega(n/(t\delta^2))$. 
\end{theorem}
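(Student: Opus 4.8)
The plan is to establish the lower bound by a reduction from the set-intersection estimation problem, reusing the hard instance already sketched before the statement and invoking Theorem~\ref{thm:intersection} as a black box. Concretely, I would start from an arbitrary set-intersection instance $(A,B)$ with $|A|=|B|=k$ and $|A\cap B|$ equal to either $t(1+\delta)$ or $t(1-\delta)$, and turn it into the pair of tables $T_1,T_2$ described above: the first $k$ tuples of each table carry the elements of $A$ (resp.\ $B$) as join keys and the constant value $2r$ in the aggregate column, while the trailing $\sqrt{t}$ tuples all share a fixed key $v'\in A\cap B$ and carry value $0$. Each table then has $n=k+\sqrt{t}$ tuples.

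Next I would compute the \AVG of the join as a function of the intersection size. The matching keys among the first $k$ tuples produce one value-$2r$ join tuple per common element, i.e.\ a number of such tuples equal to $|A\cap B|$, whereas the $\sqrt{t}$ trailing tuples on each side produce $\sqrt{t}\cdot\sqrt{t}=t$ value-$0$ join tuples. Hence the query answer is $\frac{2r\,|A\cap B|}{|A\cap B|+t}$, which evaluates to $\frac{2rt(1+\delta)}{t(2+\delta)}=(1+O(\delta))r$ in the first case and to $(1-O(\delta))r$ in the second. Thus the two intersection sizes map to \AVG values separated by a $\Theta(\delta)$ relative gap, so any procedure that estimates the \AVG to relative precision $c\delta$ (for a suitable constant $c$) with probability $2/3$ distinguishes the two cases.

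The final step is to transfer the communication lower bound. A summary of $T_1$ of size $s$ that supports such an \AVG estimate yields a one-way protocol for set intersection: Alice encodes $A$ into $T_1$, sends the $s$-bit summary, and Bob recovers the intersection size from his $T_2$ together with the summary. By Theorem~\ref{thm:intersection}, distinguishing intersection sizes $t(1\pm\delta)$ requires $s=\Omega(k/(t\delta^2))$ bits, and rescaling $\delta$ by the constant $c$ only changes the hidden constant. Since $t\le k$ forces $\sqrt{t}\le k$ and hence $n=k+\sqrt{t}=\Theta(k)$, this bound is $\Omega(n/(t\delta^2))$, which is the claim.

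I expect the only delicate point to be checking that the reduction is exactly faithful despite lower-order artifacts: the key $v'$ is counted both among the first-$k$ value-$2r$ tuples and among the trailing value-$0$ tuples, and $\sqrt{t}$ need not be an integer. I would argue that these contribute only $O(\sqrt{t})$ additive corrections to counts of order $\Theta(t)$, so the $\Theta(\delta)$ separation between the two cases survives and is absorbed into the constant rescaling of $\delta$; making this slack precise, rather than the reduction itself, is where the real care is needed.
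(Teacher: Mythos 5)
Your proposal is correct and follows essentially the same route as the paper: the identical construction with $k+\sqrt{t}$ tuples per table, the first $k$ carrying the set-intersection instance with aggregate value $2r$ and the trailing $\sqrt{t}$ sharing a common key $v'$ with value $0$, the same computation mapping intersection size to the \AVG value, and the same black-box invocation of Theorem~\ref{thm:intersection} followed by a constant rescaling of $\delta$. Your closing remark about the $O(\sqrt{t})$ cross-terms from $v'$ is a point the paper silently glosses over, and your handling of it is the right one.
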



\section{Generic Sampling Scheme}
\label{sec:generic_sampling_scheme}

To formally argue about the optimality of a sampling strategy, we must first define a \emph{class} of sampling schemes. As discussed in Section~\ref{sec:bg:sampling_in_db}, there are three well-known sampling operators: stratified, universe, and Bernoulli (uniform). 
However, these atomic operators can themselves be combined. For example, one can apply universe sampling of rate $0.1$ and then Bernoulli sampling of rate  $0.2$ for an overall effective sampling rate of 
$0.02$.\footnote{\rev{Statistically,
it does not matter which sampling is applied 
    first: whether a tuple passes the universe sampler and whether it passes the Bernoulli sampler are completely independent decisions, and hence, the output distribution is the same. 
Here, we apply universe sampling first only for convenience and
without loss of generality.\ignore{(conditioned on universe sampling, the distribution is a collection of Binomial random variables, whereas the distribution is more complex if we   condition on  Bernoulli sampling)}}}  
To account for such hybrid schemes, 
	we define a generic scheme that combines 
	universe and Bernoulli sampling, called UBS.\footnote{\rev{Even if we do not care about 
	    output cardinality, universe sampling 
	        can still help improve the approximation quality.
	 For example, given two tables of size $n$ with a one-to-one join relationship, the count estimator's
	 variance is $n/q^2$ under Bernoulli sampling but 
	 $n/p$ under universe sampling, which is much lower when $p$$=$$q$.}}
We also define a more generic scheme that  combines all three of stratified, universe and Bernoulli sampling, called SUBS. 
It is easy to show that the basic sampling operators are a special case of SUBS. 
First, we define the effective sample rate. 	 

\begin{definition}[Effective sampling rate]
We define the effective sampling rate of a sampling scheme as 
    the expected ratio of the size of the resulting sample to that of the original table.
\end{definition}

\begin{definition}[Universe-Bernoulli Sampling (UBS)]
Given a  table $T$ and a column (or set of columns) $J$ in $T$, 
 a UBS scheme is defined by a pair $(p, q)$, where
$0<$$p$$\leq 1$ is a universe sampling rate and
$0<$$q$$\leq 1$ is a Bernoulli (or uniform) sampling rate.
Let $h: \mathcal{U} \mapsto [0, 1]$ be a perfect hash function.
Then, a sample of $T$ produced by this scheme, $S = \mbox{UBS}_{p, q}(T, J)$, is produced 
 as follows:
\begin{algorithm}[H]
\SetAlgoLined
 $S \leftarrow \emptyset$\;
 \For{each tuple $t$} {
 \If{$h(t.J) < p$} {
 Include $t$ in $S$ independently w/ prob. $q$.
 }
 }
 \caption{$\textsc{UBS}_{p,q}(T, J)$}
\end{algorithm}
\end{definition}
It is easy to see that the effective sampling rate of a UBS scheme $(p,q)$ is 
$p \cdot q$.
Thus, the effective sampling rate here is independent of the actual distribution of the values in the table (and column(s) $J$). 

The goal of this sampling paradigm is to optimize the trade-off between universe sampling and Bernoulli sampling in different instances.
At one extreme, when each join value appears exactly once in both table,
universe sampling leads to lower variance than Bernoulli sampling.
This is because independent Bernoulli sampling has trouble matching tuples with the same join value, while universe sampling guarantees that when a tuple is sampled,
all matching tuples in the other table are also sampled.
At the other extreme, if all tuples have the same join value in both tables
(\ie, the join becomes a Cartesian product of the two tables),
universe sampling will either sample the entire join, or sample nothing at all, while uniform sampling will have a sample size concentrated around $qN$, thus giving an estimator of much lower variance. In section~\ref{sec:count} to ~\ref{sec:avg}, 
we give a comprehensive discussion on how to optimize $p$ and $q$ for different tables and different queries. 

The Stratified-Universe-Bernoulli Sampling Scheme applies to a table $T$ that is divided into $K$ groups (i.e., strata), denoted as $G_1, G_2, \ldots, G_k$.

\begin{definition}[Stratified-Universe-Bernoulli Sampling (SUBS)]
Given a  table $T$ of $N$ rows and a column (or set of columns) $J$ in $T$, 
 a SUBS scheme is defined by a tuple $(p_1, p_2, \ldots, p_K, q_1, q_2, \ldots, q_K)$, where
$0$$<$$p_i, q_i$$\leq$$1$ are the universe and Bernoulli sampling rates. 
Given a perfect hash function
$h$$: \mathcal{U}$$\mapsto [0, 1]$, 
a sample of $T$ produced by this scheme, $S = \mbox{UBS}_{p, q}(T, J)$, is produced 
 as follows: 
\begin{algorithm}[H]
\SetAlgoLined
 $S \leftarrow \emptyset$\;
 
 \For{each group $G_i$}{
 
 \For{each tuple $t$ in $G_i$} {
 \If{$h(t.J) < p_i$} {
 Include $t$ in $S$ independently w/ prob. $q_i$.
 }
 }
 }
 \caption{$\textsc{SUBS}_{p_1, \ldots, p_K, q_1, \ldots, q_K}(T, G, J)$}
\end{algorithm}
\end{definition}

Let $|G_i|$ denote the number of tuples in group $G_i$. Then the effective sampling rate of a SUBS scheme is  $\sum_i p_i \cdot q_i \cdot |G_i|/N$. 
We call $\epsilon_i = p \cdot q_i$ the effective sampling rate for group $G_i$. 

In both UBS and SUBS schemes, the user specifies  $\epsilon$ as  their desired sampling budget, given which our goal is to determine optimal sampling parameters $p$ and $q$ (or $p_i$ and $q_i$ values) such that the variance of our join estimator is minimized. In Section~\ref{sec:algorithm},
    we derive the optimal $p$ and $q$ for UBS. 
For SUBS, in addition to $\epsilon$,
    the user also provides two additional parameters $\kkey$ and $\ktup$ (explained below).
Next, we show how to     
    determine the effective sampling rate $\epsilon_i$ for each group $G_i$ based on these parameters in SUBS.
Given $\epsilon_i$ for each group, 
    the problem is then reduced to finding the 
        optimal parameters for UBS for that group (i.e., 
        $p_i$ and $q_i$).  Moreover, as we will show in Sections~\ref{sec:count}--\ref{sec:avg},  particularly in Lemma~\ref{lem:same-rate}, the universe sampling rate for every group must be the same, and must be the same as the universe sampling rate of the other table in two-way joins. Hence, we use a single universe sampling rate $p = p_1 = \ldots = p_k$ across all  groups.
        
As mentioned in Section~\ref{sec:bg:sampling_in_db},
\rev{$\ktup$} is a user-specified lower bound on the minimum
number of tuples\tempcut{\footnote{The lower bound holds only on average, due to the probabilistic nature of sampling.}} in each group the sample must retain. 
$\kkey$ is an additional user-specified parameter required  for the SUBS scheme. It specifies a threshold at which to activate
    the universe sampler. In particular, if a group contains too few (i.e., less than $\kkey$) \emph{join keys}, we do not perform any universe sampling as it will have a high chance of filtering out all tuples. 
    Hence, we apply universe sampling only to those groups with $\geq \kkey$ join keys. For groups with fewer than $\kkey$ join keys, we will only apply  Bernoulli sampling with rate $\epsilon_i$.

We call a group \emph{large} if it contains at least $\kkey$ join keys, otherwise, we call it a \emph{small} group. We use $N_{b}$ to denote the total number of tuples in all large groups, and $N_{s}$ to denote the total number of tuples in all small groups. Similarly, let $M_{b}$ and $M_{s}$ denote the number of large and small groups, respectively. Then, we decide the sampling budget $\epsilon_i$ for each group $G_i$ as follows:
 \begin{enumerate}[nosep,leftmargin=0.4cm,topsep=0.1cm]
     \item \rev{If $M_{s} \ktup > \epsilon N_{s}$ or $M_{b}   \ktup > \epsilon N_{b} $,} we notify the user that creating a sample given their parameters is infeasible. 
     
     \item Otherwise, 
      \begin{itemize}[topsep=0.0cm]
         \item \rev{Let $\epsilon_{s}' = \frac{K_{s} \cdot \ktup}{ N_{s}}$ and let $\epsilon_{s}'' = \epsilon - \epsilon_{s}'$. Then for each small group $G_i$, the sampling budget is $\epsilon_i = \frac{\ktup}{ |G_i|} + \epsilon_{s}''$.}
         \item \rev{Let $\epsilon_{b}' = \frac{K_{b} \cdot  \ktup}{ N_{b}}$ and let $\epsilon_{b}'' = \epsilon - \epsilon_{b}'$. Then for each large group $G_i$, the sampling budget  is $\epsilon_i = \frac{\ktup}{|G_i|} + \epsilon_{b}''$.} 
         
     \end{itemize}
 \end{enumerate}
 
Once  $\epsilon_i$ is determined for each group,     the problem of deciding optimal SUBS parameters is reduced to deciding the optimal  SUBS parameters  for $K$ separate groups.
  This effective sampling rate $\epsilon_i$ guarantees that each
large
group will have at least $t$ tuples in the sample on average, and the remaining budget is divided evenly. 
 Thus, the corresponding uniform sampling rate
    for each large group is $q_i = \epsilon_i / p$. Moreover, we pose the constraint
 that the universe sampling rate $p$ should be at least $1/s$ to guarantee that, on average, there is at least one join key passing through the universe sampler. 
 
 For small groups, we simply apply uniform sampling with rate $\epsilon_i$. This is equivalent to  setting $p = 1$ for these groups.
  
 Overall, this strategy provides the following guarantees:
 \begin{enumerate}[nosep,leftmargin=0.4cm]
 \item Each group will have at least $t$ tuples in the sample, on average.
 
 \item The probability of each group being missed is at most $(1$ $-$ $1/s)^s$ $<$ $0.367$. In general, if we set $p$$>$$c/s$ for some constant $c$$>$$1$,  this probability will become $0.367^c$.\ignore{The probability of a group being missed by at most $(1 - p)^s$, which is at most $(1 - 1/s)^{s}$, which is at most $1/e$, if you replace $1/s$ as $c/s$, then it is at most $1/e^c$.  }
 
 \item The approximation of the original query will  be optimal in terms of its variance (see Sections~\ref{sec:count}--\ref{sec:avg}).
 
 \end{enumerate}


\section{Optimal Sampling}
\label{sec:algorithm}

As shown in Section~\ref{sec:generic_sampling_scheme},
     finding the optimal sampling parameters within
        the SUBS scheme can be reduced
            to finding those 
                within the UBS scheme. 
Thus, in this section, we focus on deriving 
 the UBS parameters that 
        minimize error for each aggregation type
        ($\tt COUNT$, $\tt SUM$, and $\tt AVG$).
Initially, we also assume there is no
    $\tt WHERE$ clause. 
Later, in Section~\ref{sec:one-sample},
    we show how to
    handle $\tt WHERE$ conditions and how to
    create a single sample
    instead of creating one per each 
    aggregation type and $\tt WHERE$ condition.

\ph{Centralized vs. Decentralized} For each aggregation type, we analyze two scenarios: centralized and decentralized. 
Centralized setting is when 
the frequencies of the join keys in both tables are known. This represents situations where both tables are stored on the same server, or each server communicates its full frequency statistics to other parties.
Decentralized setting 
	is a
scenario where the two tables are each stored on a separate server~\cite{mozafari_sigmod2018_rdma}, and exchanging full frequency statistics across the network is costly.\footnote{Here, we focus on two servers, but the math can  easily be generalized to decentralized networks of multiple servers.} 

\rev{
\ph{Decentralized Protocols}
In a decentralized setting, 
	each party (i.e., server) only has access  to full statistics  of its own table (\eg, frequencies,  join column distribution).
The goal then is for each party to determine 
    its sampling strategy, while minimizing communications with the other party.
Depending on the amount of information exchanged,  
    one can pursue different protocols for achieving this goal.
In this paper, 
    we study a simple 
    sampling protocol, which we call \textsc{Dictatorship}.
Here, one server, say \partyone, is chosen as the dictator.
We also assume that 
the parties know each other's sampling budgets and table sizes ($\epsilon_1$, $\epsilon_2$, $|T_1|$, and $|T_2|$).
The dictator observes the distributional information of its own table, say $T_1$, 
	and decides a shared universe sampling rate $p$ between $\max\{\epsilon_1, \epsilon_2\}$ and $1$.
This $p$ is sent to the other server (\partytwo) and both servers use $p$ as their  universe sampling rate.\footnote{\rev{Using the same universe 
sampling rate  is justified by Lemma~\ref{lem:same-rate}.}}
Their uniform sampling rates will thus be $q_1 = \epsilon_1 / p$ and $q_2 = \epsilon_2 / p$, respectively. 

Since \partyone only has 
    $T_1$'s frequency information, 
    it chooses 
        an optimal value of $p$ that minimizes the \emph{worst case} variance of $\Jagg$,
\ie, the variance when the frequencies in $T_2$ are  chosen adversarially.
This can be formulated as a robust 
optimization~\cite{mozafari_sigmod2015}:
\begin{equation} 
    p^* = \argmin_{\max\{\epsilon_1, \epsilon_2\}\leq p\leq 1} \max_{b} \var[\Jagg] 
    \label{opt:dis:agg}
\end{equation}
where $b$ ranges over all possible frequency vectors of $T_2$. 
In the rest of this paper, we use \textsc{Dictatorship} in our decentralized  analysis (we defer more complex
     protocols
to}
\iftechreport
        \rev{Appendix~\ref{app:other-decentralized}).}
\else   
        \rev{\cite{approx-join-techreport}).}
\fi

\ignore{Suppose that the two tables are distributed across two parties (e.g., two servers).
We call them \partyone and \partytwo,
where \partyone holds $T_1$ and \partytwo holds $T_2$.
This setting is also applicable when both tables are on the same server, but due to performance considerations,
we desire a sampling strategy that operates on a single table at a time without requiring
detailed information about the other table.}

\ignore{
This is mainly because if we require two parties that have the same universe sampling rate $p$,
$p$ should be at least $\max\{\epsilon_1, \epsilon_2\}$ if we want the effective sampling rate for $T_1$ and $T_2$ to be \textit{exactly} $\epsilon_1$ and $\epsilon_2$. 
This assumption simplifies the analysis and presentation of our results.
}

\subsection{Join Size Estimation: Count on Joins}
\label{sec:count}

We start by considering the following simplified query:
\begin{verbatim}
    select count(*) from T1 join T2 on J
\end{verbatim}
where $T_1$ and $T_2$ are two tables joined on column(s) $J$.
Consider $S_1=\UBS_{(p_1, q_1)} (T_1,J)$ and $S_2=\UBS_{(p_2, q_2)} (T_2,J)$.
Then, we can define an unbiased estimator for the above query, $\Ecount = |T_1 \bowtie_J T_2|$,
using $S_1$ and $S_2$ as follows.
Observe that given any pair of tuples $t_1 \in T_1$ and $t_2 \in T_2$, where $t_1.J = t_2.J$,
the probability that  $(t_1, t_2)$ enters $S_1$$\bowtie$$S_2$ is $\pmin q_1 q_2$, where $\pmin$$=$$\min\{p_1,$ $p_2\}$.
Hence, the following is an unbiased estimator for $\Ecount$.
\begin{equation}
    \Jcount(p_1, q_1, p_2, q_2, S_1, S_2) = \frac{1}{\pmin q_1 q_2} |S_1 \bowtie S_2|.
\end{equation}
When the arguments $p_1, q_1, p_2, q_2, S_1, S_2$ are clear from the context, we omit them and 
simply write $\Jcount$.

\ignore{
\begin{definition}[Join Size Estimation]
Given sampling budgets $\epsilon_1, \epsilon_2$, our goal is to 
find  parameters $(p_1, q_1)$ and $(p_2, q_2)$ that minimize the variance of  $\Jcount$
subject to $p_1 q_1 = \epsilon_1$  and $p_2 q_2 = \epsilon_2$. 
\end{definition}
}
\begin{restatable}{lemma}{jcountvar}
\label{lem:jcount:var}
Let $S_1 = \UBS_{p_1, q_1}(T_1, J)$ and \\ $S_2 = \UBS_{p_2, q_2}(T_2, J)$. The variance of   $\Jcount$ is as follows:
\begin{align*}
    \var(\Jcount) = \frac{1-p}{p} \gamma_{2,2} + \frac{1 - q_2}{pq_2} \gamma_{2,1} \\ + \frac{1 - q_1}{pq_1} \gamma_{1,2}  + \frac{(1-q_1)(1-q_2)}{pq_1q_2} \gamma_{1,1}.
\end{align*}
where \rev{$\gamma_{i,j} = \sum_{v} a_v^i b_v^j$}.
\end{restatable}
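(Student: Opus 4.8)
The plan is to express the sample join size $|S_1 \bowtie S_2|$ as a sum of independent contributions indexed by join value, so that its variance decomposes term by term. First I would write $|S_1 \bowtie S_2| = \sum_{v \in \mathcal{U}} A_v B_v$, where $A_v$ (resp.\ $B_v$) is the number of tuples with join value $v$ that survive into $S_1$ (resp.\ $S_2$). For a single value $v$, universe sampling makes a single decision shared by all $a_v$ (resp.\ $b_v$) tuples of that value, after which each surviving tuple is kept independently with probability $q_1$ (resp.\ $q_2$). Hence I would write $A_v = U_v\,\beta_v^{(1)}$ and $B_v = V_v\,\beta_v^{(2)}$, where $U_v = \mathbf{1}[h(v) < p_1]$ and $V_v = \mathbf{1}[h(v) < p_2]$ are the universe-survival indicators and $\beta_v^{(1)} \sim \mathrm{Bin}(a_v, q_1)$, $\beta_v^{(2)} \sim \mathrm{Bin}(b_v, q_2)$ are the Bernoulli-sampling counts, all mutually independent given the hash.

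The key structural observation---and the step I expect to carry the most weight---is that both tables share the same hash function, so the two universe samplers are coupled: $U_v V_v = \mathbf{1}[h(v) < p_1]\,\mathbf{1}[h(v) < p_2] = \mathbf{1}[h(v) < \pmin]$, a single $\mathrm{Bernoulli}(\pmin)$ variable $Z_v$ with $\pmin = \min\{p_1, p_2\}$. Writing $W_v = A_v B_v = Z_v\,\beta_v^{(1)}\beta_v^{(2)}$, I would then argue that the $W_v$ are mutually independent across distinct values $v$: the perfect hash makes the $Z_v$ independent, the counts $\beta_v^{(1)}$ (resp.\ $\beta_v^{(2)}$) involve disjoint tuple sets, and the three families are mutually independent. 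Independence yields the clean decomposition $\var(|S_1 \bowtie S_2|) = \sum_v \var(W_v)$, which is exactly what makes the final answer a sum of $\gamma_{i,j}$ terms.

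It then remains to compute $\var(W_v)$ by a routine moment calculation. Using $\E[Z_v] = \E[Z_v^2] = \pmin$ and the binomial second moments $\E[(\beta_v^{(1)})^2] = a_v q_1(1-q_1) + a_v^2 q_1^2$ (and analogously for $\beta_v^{(2)}$), together with independence of the three factors, I would obtain $\E[W_v] = \pmin q_1 q_2\, a_v b_v$ (confirming unbiasedness of $\Jcount$) and an explicit $\E[W_v^2]$. Forming $\var(W_v) = \E[W_v^2] - \E[W_v]^2$, I expand and group by the monomials $a_v^i b_v^j$: the $a_v^2 b_v^2$ coefficient is $\pmin(1-\pmin) q_1^2 q_2^2$ (note the cancellation that leaves the factor $1-\pmin$), while the monomials $a_v b_v$, $a_v^2 b_v$, and $a_v b_v^2$ pick up the respective $(1-q)$ factors. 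Finally, dividing by $(\pmin q_1 q_2)^2$ and summing over $v$ with $\gamma_{i,j} = \sum_v a_v^i b_v^j$ gives the four stated terms, where $p$ in the statement denotes $\pmin$. The only real care needed beyond bookkeeping is the coupling of the universe samplers and the resulting $1-\pmin$ factor on the $\gamma_{2,2}$ term.
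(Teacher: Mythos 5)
Your proof is correct and follows essentially the same route as the paper: decompose the sampled join size by join value, exploit the shared hash to couple the two universe samplers into a single $\mathrm{Bernoulli}(\pmin)$ factor, and compute binomial second moments per value. The only cosmetic difference is that the paper packages the $\pmin$ factor via the law of total variance ($\var(Z_v) = p\,\var(W_v) + p(1-p)\E^2[W_v]$) while you compute $\E[W_v^2]-\E[W_v]^2$ directly; the resulting coefficients, including the $(1-\pmin)$ factor on $\gamma_{2,2}$, agree.
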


To minimize $\var(\Jcount)$ under a fixed sampling budget, 
the two tables should always use the same \emph{universe} sampling rate.
If $p_1$$>$$p_2$,  the \emph{effective universe sampling rate} is only $p_2$, \ie, only $p_2$ fraction of the join keys inside $T_1$ appear in the join of the samples, and the remaining $p_1$$-$$p_2$ fraction is simply \emph{wasted}.
Then, we can change the universe sampling rate of $T_1$ to $p_2$ and increase its uniform sampling rate to obtain a lower variance.

\begin{restatable}{lemma}{samerate}
\label{lem:same-rate}
Given tables $T_1$, $T_2$ joined on column(s) $J$, a fixed sampling parameter $(p_1, q_1)$ for $T_1$, and a fixed effective sampling rate $\epsilon_2$ for $T_2$, the variance of $\Jcount$ is minimized when $T_2$ uses $p_1$ as its universe sampling rate and correspondingly $\epsilon_2 / p_1$ as its uniform sampling rate. 
\end{restatable}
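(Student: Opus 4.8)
The plan is to treat this as a one-dimensional optimization. With $(p_1,q_1)$ fixed and the effective rate $\epsilon_2 = p_2 q_2$ pinned, the only free variable for $T_2$ is its universe rate $p_2 \in [\epsilon_2, 1]$ (the lower bound ensures $q_2 = \epsilon_2/p_2 \le 1$), so I would substitute $q_2 = \epsilon_2/p_2$ into the variance formula of Lemma~\ref{lem:jcount:var}. The one complication is that $p = \pmin = \min\{p_1,p_2\}$ switches behaviour at $p_2 = p_1$, so I would split the feasible interval into the two regimes $p_2 \ge p_1$ and $p_2 \le p_1$ and show that in each the variance is minimized at the shared endpoint $p_2 = p_1$.

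First I would handle the easy regime $p_2 \ge p_1$, where $p = p_1$ is constant. Here the dependence on $p_2$ enters only through $q_2 = \epsilon_2/p_2$, specifically through the factor $\tfrac{1-q_2}{q_2} = \tfrac{1}{q_2} - 1$ that multiplies the $\gamma_{2,1}$ term and (after absorbing the $q_1$ factor) the $\gamma_{1,1}$ term. Since all $\gamma_{i,j}\ge 0$ and $1/q_2 = p_2/\epsilon_2$ is increasing in $p_2$, the variance is nondecreasing in $p_2$ on this regime, and its minimum is attained at the left endpoint $p_2 = p_1$.

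The heart of the argument is the regime $p_2 \le p_1$, where $p = p_2$ so that both $p$ and $q_2$ vary. Substituting $q_2 = \epsilon_2/p_2$ (so that $p_2 q_2 = \epsilon_2$ and $q_2/\epsilon_2 = 1/p_2$) and collecting terms, I expect the variance to take the form $A/p_2 + B$ with $B$ independent of $p_2$ and
\[
A = (\gamma_{2,2} - \gamma_{2,1}) + \tfrac{1-q_1}{q_1}\,(\gamma_{1,2} - \gamma_{1,1}).
\]
The key inequalities are $\gamma_{2,2} \ge \gamma_{2,1}$ and $\gamma_{1,2} \ge \gamma_{1,1}$, which follow from $\gamma_{2,2}-\gamma_{2,1} = \sum_v a_v^2 b_v(b_v-1) \ge 0$ and $\gamma_{1,2}-\gamma_{1,1} = \sum_v a_v b_v(b_v-1) \ge 0$, since each $b_v$ is a nonnegative integer and hence $b_v(b_v-1)\ge 0$. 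Together with $1-q_1 \ge 0$ these give $A \ge 0$, so $A/p_2 + B$ is nonincreasing in $p_2$ and its minimum on this regime is again at the right endpoint $p_2 = p_1$.

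Finally I would observe that the variance is continuous at $p_2 = p_1$ (the two regimes agree there), so combining the two monotonicity statements yields a global minimum over $[\epsilon_2,1]$ at $p_2 = p_1$, giving $q_2 = \epsilon_2/p_1$ as claimed; this implicitly uses $p_1 \ge \epsilon_2$ so that $p_2 = p_1$ is feasible, which holds under the standing constraint $p \ge \max\{\epsilon_1,\epsilon_2\}$. The main obstacle is the bookkeeping in the second regime: correctly isolating the $1/p_2$ coefficient $A$ and recognizing that the discrete-frequency facts $b_v(b_v-1)\ge 0$ are precisely what force $A \ge 0$.
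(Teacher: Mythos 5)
Your proposal is correct and follows essentially the same route as the paper's proof: split on the two regimes $p_2 \ge p_1$ (where only $q_2$ varies and the variance is nondecreasing) and $p_2 \le p_1$ (where the variance collects into a term proportional to $1/p_2$ with nonnegative coefficient plus a constant), using exactly the inequalities $\gamma_{2,2}\ge\gamma_{2,1}$ and $\gamma_{1,2}\ge\gamma_{1,1}$ from the integrality of the frequencies. Your explicit remarks on continuity at $p_2=p_1$ and feasibility ($p_1\ge\epsilon_2$) are minor additions the paper leaves implicit.
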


Note that Lemma~\ref{lem:same-rate} applies to both centralized and decentralized settings, \ie, it applies to any feasible sampling parameter $(p_1, q_1)$ and $(p_2, q_2)$, regardless 
of how the sampling parameter is decided. 
Next, we analyze each setting. 

\subsubsection{Centralized Sampling for Count}
\label{sec:count:centralized}

We have the following result.
\begin{restatable}{theorem}{thmvarcountcent}
\label{thm:var-count-centralize} 
When $T_1$ and $T_2$ use
sampling parameters $(p, \epsilon_1 / p)$ and $(p, \epsilon_2/p)$,  $\Jcount$'s variance is given by:
\begin{align*} 
      &\var[\Jcount] = (\frac{1}{p} - 1) \rev{\gamma_{2,2}} + (\frac{1}{\epsilon_2} - \frac{1}{p})\rev{\gamma_{2,1}} \\ 
    &+ (\frac{1}{\epsilon_1} - \frac{1}{p})\rev{\gamma_{1,2}} + (\frac{p}{\epsilon_1 \epsilon_2} - \frac{1}{\epsilon_1} - \frac{1}{\epsilon_2} + \frac{1}{p})\rev{\gamma_{1,1}}.
\end{align*}
\end{restatable}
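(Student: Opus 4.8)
The plan is to obtain Theorem~\ref{thm:var-count-centralize} as an immediate specialization of Lemma~\ref{lem:jcount:var}. By Lemma~\ref{lem:same-rate}, at optimality both tables share a single universe sampling rate, so I set $p = p_1 = p_2$ and apply the variance formula of Lemma~\ref{lem:jcount:var} with this common $p$. The effective sampling rate constraints then read $p\,q_1 = \epsilon_1$ and $p\,q_2 = \epsilon_2$, i.e.\ $q_1 = \epsilon_1/p$ and $q_2 = \epsilon_2/p$. The entire proof is the substitution of these two identities into the four-term expression followed by a regrouping of coefficients.

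Concretely, I would handle the four $\gamma_{i,j}$ coefficients one at a time. The leading coefficient $\frac{1-p}{p}$ of $\gamma_{2,2}$ is already $\frac1p - 1$, so it needs no work. For the $\gamma_{2,1}$ and $\gamma_{1,2}$ terms I rewrite $\frac{1-q_i}{p\,q_i} = \frac{1}{p\,q_i} - \frac1p$ and use $p\,q_2 = \epsilon_2$ and $p\,q_1 = \epsilon_1$ to get $\frac1{\epsilon_2} - \frac1p$ and $\frac1{\epsilon_1} - \frac1p$, respectively. The only coefficient requiring care is the cross term of $\gamma_{1,1}$: I expand $(1-q_1)(1-q_2) = 1 - q_1 - q_2 + q_1 q_2$, divide by $p\,q_1 q_2$, and note that $q_1 q_2 = \epsilon_1\epsilon_2/p^2$, so $p\,q_1 q_2 = \epsilon_1\epsilon_2/p$. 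This yields $\frac{p}{\epsilon_1\epsilon_2} - \frac1{\epsilon_2} - \frac1{\epsilon_1} + \frac1p$, which matches the stated coefficient after reordering. Collecting the four terms gives the claimed formula.

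There is essentially no obstacle at the level of this theorem --- it is bookkeeping once Lemma~\ref{lem:jcount:var} is in hand --- so the substantive work lives in that lemma, which I would establish first. The core of that derivation is to write $|S_1 \bowtie S_2| = \sum_v U_v A_v B_v$, where $U_v = \mathbf{1}[h(v) < p]$ is the (shared) universe indicator for join key $v$, and $A_v, B_v$ are the Binomial counts of retained $T_1$- and $T_2$-tuples with key $v$ under independent Bernoulli$(q_1)$ and Bernoulli$(q_2)$ trials. Because distinct keys use independent hash and Bernoulli draws, $\var(\Jcount)$ decomposes as a sum over $v$ of $\var(U_v A_v B_v)/(p\,q_1 q_2)^2$; evaluating $\E[(U_v A_v B_v)^2] - (\E[U_v A_v B_v])^2$ via independence together with the Binomial moments $\E[A_v] = a_v q_1$ and $\E[A_v^2] = a_v q_1(1-q_1) + a_v^2 q_1^2$ (and likewise for $B_v$) produces exactly the moment sums $\gamma_{i,j} = \sum_v a_v^i b_v^j$. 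The one subtlety to track there is that the universe decision is correlated across the two tables and across all tuples sharing a key, which is what makes the $\frac{1-p}{p}\gamma_{2,2}$ term appear rather than a product of independent factors.

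Finally, I would sanity-check the resulting formula in two limiting regimes: setting $p = 1$ (pure Bernoulli) should collapse the $\gamma_{2,2}$ contribution and recover the variance of joining two independent Bernoulli samples, and the one-to-one join where every $a_v = b_v = 1$ (so all $\gamma_{i,j}$ equal the number of distinct keys) should reproduce the $n/p$-versus-$n/q^2$ comparison highlighted in Section~\ref{sec:generic_sampling_scheme}.
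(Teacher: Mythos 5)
Your proposal is correct and matches the paper's proof, which likewise obtains Theorem~\ref{thm:var-count-centralize} by direct substitution of $q_1=\epsilon_1/p$ and $q_2=\epsilon_2/p$ into Lemma~\ref{lem:jcount:var}; your coefficient-by-coefficient algebra, including the cross term $\frac{p}{\epsilon_1\epsilon_2}-\frac{1}{\epsilon_1}-\frac{1}{\epsilon_2}+\frac{1}{p}$, checks out. Your sketch of the underlying lemma (per-key binomial counts, a shared universe indicator, and the law of total variance) is also the same decomposition the paper uses there.
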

\rev{Since each term in Theorem~\ref{thm:var-count-centralize} that depends on $p$ is  proportional either to $p$ or $1/p$, to find a $p$ that minimizes the variance, one can simply set the first order derivatives (with respect to $p$) to $0$.}

\begin{restatable}{theorem}{optrate}
\label{thm:opt-rate}
Let $T_1$ and $T_2$ be two tables joined on column(s) $J$.
Let $a_v$ and $b_v$ be the frequency of value $v$ in column(s) $J$ of tables  $T_1$ and $T_2$, respectively.
 Given their  sampling rates $\epsilon_1$ and $\epsilon_2$, the optimal sampling parameters $(p_1, q_1)$ and $(p_2, q_2)$ are given by:\\
\begin{equation*}
p_1\text{=}p_2\text{=}\min\{1, \max\{\epsilon_1, \epsilon_2,
\rev{\sqrt{\frac{\epsilon_1\epsilon_2\gamma_{2,2} - \gamma_{1,2} - \gamma_{2,1} + \gamma_{1,1}}{\gamma_{1,1}}}}\}\}
\end{equation*}
and $q_1 \text{=} \epsilon_1 / p$, $q_2 \text{=} \epsilon_2 / p$.
\end{restatable}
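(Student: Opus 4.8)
The plan is to reduce the optimization to a single-variable convex minimization over the universe rate $p$, exploiting the two prior results. First I would invoke Lemma~\ref{lem:same-rate}, which guarantees that at the optimum both tables must share a common universe rate $p_1 = p_2 = p$ (otherwise the larger rate is partly wasted and variance can be strictly reduced). Once this is established, the feasibility constraints $p_1 q_1 = \epsilon_1$ and $p_2 q_2 = \epsilon_2$ pin down $q_1 = \epsilon_1/p$ and $q_2 = \epsilon_2/p$, so the entire problem collapses to choosing the scalar $p$. The requirement that $q_1, q_2 \le 1$ forces $p \ge \max\{\epsilon_1, \epsilon_2\}$, and of course $p \le 1$, which is exactly the feasible interval appearing in the theorem statement.

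Next I would take the closed-form variance from Theorem~\ref{thm:var-count-centralize} and regroup its terms by their dependence on $p$, writing it in the form
\begin{equation*}
\var[\Jcount] = \frac{A}{p} + B\,p + C,
\end{equation*}
where $A = \gamma_{2,2} - \gamma_{2,1} - \gamma_{1,2} + \gamma_{1,1}$, $B = \gamma_{1,1}/(\epsilon_1 \epsilon_2)$, and $C$ collects the terms independent of $p$. The key preliminary observation is that the coefficient $A$ is nonnegative: using $\gamma_{i,j} = \sum_v a_v^i b_v^j$, it factors as
\begin{equation*}
A = \sum_v a_v b_v \,(a_v - 1)(b_v - 1) \ge 0,
\end{equation*}
since every frequency contributing to the sum satisfies $a_v, b_v \ge 1$. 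Likewise $B > 0$ whenever the join is nonempty ($\gamma_{1,1} > 0$). This makes the square root in the stated formula well-defined and the objective a genuine convex function of $p$ on the positive reals.

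With convexity in hand, I would find the unconstrained minimizer by setting the first-order derivative to zero, $-A/p^2 + B = 0$, giving $p^* = \sqrt{A/B} = \sqrt{\epsilon_1\epsilon_2(\gamma_{2,2} - \gamma_{1,2} - \gamma_{2,1} + \gamma_{1,1})/\gamma_{1,1}}$, which matches the inner expression of the theorem. The final step is to project this critical point onto the feasible interval $[\max\{\epsilon_1,\epsilon_2\},\,1]$: because $A/p + Bp$ is convex, if $p^*$ falls below the lower endpoint the constrained optimum is attained at $\max\{\epsilon_1,\epsilon_2\}$, and if $p^*$ exceeds $1$ it is attained at $1$; otherwise $p^*$ itself is optimal. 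This is precisely the clamping $\min\{1, \max\{\epsilon_1, \epsilon_2, p^*\}\}$ in the statement, and substituting back yields $q_1 = \epsilon_1/p$, $q_2 = \epsilon_2/p$.

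The main obstacle I anticipate is not the calculus but the two bookkeeping points that make the formula legitimate: establishing the nonnegativity of $A$ via the $(a_v-1)(b_v-1)$ factorization (so the square root is real), and carefully arguing the constraint-handling so that the clamped value is genuinely the constrained minimizer rather than merely a boundary artifact. The convexity of the objective is what ties these together and justifies the simple projection, so I would make sure that argument is stated explicitly rather than assumed.
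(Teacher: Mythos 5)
Your proposal is correct and follows essentially the same route as the paper's proof: invoke Lemma~\ref{lem:same-rate} to force $p_1=p_2=p$, reduce the variance from Theorem~\ref{thm:var-count-centralize} to the form $A/p + Bp + C$, establish $A \ge 0$ via the factorization $\sum_v a_v(a_v-1)b_v(b_v-1)$, and minimize over $[\max\{\epsilon_1,\epsilon_2\},1]$ (the paper cites AM--GM and monotonicity where you use the first-order condition and convex projection, but these are the same argument). Your explicit attention to the sign of $A$ and the boundary clamping matches the paper's reasoning.
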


Substituting this into Lemma~\ref{lem:jcount:var},
the resulting variance is only a constant factor of Theorem~\ref{thm:intersection}'s theoretical limit. 
\rev{For instance, consider a
primary-key-foreign-key join query where $a_v\in \{0, 1\}$ and $b_v$ is smaller than some constant, say $5$, 
and $\epsilon_1$$=$$\epsilon_2$$=$$\epsilon$ for any $\epsilon$,  Theorem~\ref{thm:opt-rate} chooses 
$p_1$$=$$p_2$$=$$\epsilon$}. Then the variance given by Theorem~\ref{thm:var-count-centralize} becomes $(1/\epsilon - 1) J$ where $J = \sum_v a_v b_v$ is the size of the
join. Since $\epsilon$ is the expected ratio of the sample
to table size, the expression $(1/\epsilon - 1)J$ matches the lower bound in Corollary~\ref{cor:lb} except for a constant factor.





\subsubsection{Decentralized Sampling for Count}
\label{sec:count:decentralized}

\rev{Motivated by Lemma~\ref{lem:same-rate}, 
the \textsc{Dictatorship} protocol 
	uses the same universe sampling rate $p$ for both parties 
	in the decentralized setting, by solving the following 
    robust optimization problem:
\begin{equation*} 
    \argmin_{\max\{\epsilon_1, \epsilon_2\}\leq p\leq 1} \max_{b} \var[\Jcount] 
\end{equation*}
}

Based on Lemma~\ref{lem:jcount:var} and \ref{thm:opt-rate}, 
given the  effective sampling rates $\epsilon_1$ and $\epsilon_2$, we can express $\var[\Jcount]$ as a function of frequencies $\{a_v\}$ and $\{b_v\}$, and universe sampling rate $p$ as follows.
\begin{equation} 
\begin{gathered}
    \var[\Jcount] = (\frac{1}{p} - 1) \gamma_{2,2}+ (\frac{1}{\epsilon_2} - \frac{1}{p})\gamma_{2,1}\\ 
    + (\frac{1}{\epsilon_1} - \frac{1}{p})\gamma_{1,2} + (\frac{p}{\epsilon_1 \epsilon_2} - \frac{1}{\epsilon_1} - \frac{1}{\epsilon_2} + \frac{1}{p})\gamma_{1,1}.
\end{gathered}
\label{eqn:count}
\end{equation}

\cancut{
\begin{restatable}{lemma}{robustcount}
\label{lem:robust-count}
    Let $a_*$ be the maximum frequency in table $T_1$, $v_*$ be any value that has that frequency, and $n_b$ be the total number of tuples in $T_2$. The optimal value for  $\max_{\bm{b} \in \mathcal{K}_{n_b}} \var[\Jcount]$ is given by $(\frac{1}{p} - 1) a_*^2 n_b^2  + (\frac{1}{\epsilon_2} - \frac{1}{p})a_*^2 n_b + (\frac{1}{\epsilon_1} - \frac{1}{p}) a_* n_b^2 + (\frac{p}{\epsilon_1 \epsilon_2} - \frac{1}{\epsilon_1} - \frac{1}{\epsilon_2} - \frac{1}{p}) a_* n_b$
\end{restatable}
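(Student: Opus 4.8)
The plan is to treat $\var[\Jcount]$ from Equation~\ref{eqn:count} as a function of the free frequency vector $\bm{b}=\{b_v\}$ of $T_2$, holding $\{a_v\}$, $\epsilon_1$, $\epsilon_2$, and $p$ fixed, and to maximize it over the feasible set $\mathcal{K}_{n_b}=\{\bm{b}\ge 0:\sum_v b_v=n_b\}$. Expanding $\gamma_{i,j}=\sum_v a_v^i b_v^j$, the variance is a weighted sum
\[\var[\Jcount]=c_{22}\sum_v a_v^2 b_v^2 + c_{21}\sum_v a_v^2 b_v + c_{12}\sum_v a_v b_v^2 + c_{11}\sum_v a_v b_v,\]
with $c_{22}=\frac1p-1$, $c_{21}=\frac1{\epsilon_2}-\frac1p$, $c_{12}=\frac1{\epsilon_1}-\frac1p$, and $c_{11}=\frac{p}{\epsilon_1\epsilon_2}-\frac1{\epsilon_1}-\frac1{\epsilon_2}+\frac1p$. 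The first step is to record that, because the \textsc{Dictatorship} protocol searches only over $p\in[\max\{\epsilon_1,\epsilon_2\},1]$, every coefficient is non-negative: $c_{22}\ge 0$ since $p\le 1$, and $c_{21},c_{12}\ge 0$ since $p\ge\epsilon_2,\epsilon_1$. For the last coefficient, clearing denominators yields $c_{11}=\frac{(p-\epsilon_1)(p-\epsilon_2)}{\epsilon_1\epsilon_2\,p}\ge 0$ (which also pins down the sign of its final term).

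Given non-negative weights, the terms $\sum_v a_v^2 b_v^2$ and $\sum_v a_v b_v^2$ are separable \emph{convex} quadratics in $\bm{b}$ (their per-coordinate leading coefficients $c_{22}a_v^2+c_{12}a_v$ are $\ge 0$), while the remaining two terms are linear. Hence $\var[\Jcount]$ is a convex function of $\bm{b}$ on the compact polytope $\mathcal{K}_{n_b}$. The second step invokes the standard fact that a convex function on a compact polytope attains its maximum at an extreme point; the vertices of $\mathcal{K}_{n_b}$ are exactly the mass-concentrated vectors $n_b\,\mathbf{e}_v$, i.e.\ all $n_b$ tuples of $T_2$ carrying a single join value $v$. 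Since these vertices are themselves integral frequency vectors, the maximum over integral $\bm{b}$ coincides with the maximum over the relaxed simplex.

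The final step compares the vertices. Evaluating the objective at $\bm{b}=n_b\mathbf{e}_v$ collapses every sum to its $v$-th term, giving $g(a_v)=(c_{22}n_b^2+c_{21}n_b)\,a_v^2+(c_{12}n_b^2+c_{11}n_b)\,a_v$. All coefficients of this univariate function are non-negative, so $g$ is non-decreasing in $a_v\ge 0$ and is maximized by the largest available frequency, $a_v=a_*$ (attained at $v_*$). Substituting $a_v=a_*$ back into $g$ reproduces exactly the claimed expression, completing the argument.

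I expect the only genuine obstacle to be the non-negativity of $c_{11}$, namely recognizing the factorization $c_{11}=(p-\epsilon_1)(p-\epsilon_2)/(\epsilon_1\epsilon_2 p)$ and that it hinges on the protocol's restriction $p\ge\max\{\epsilon_1,\epsilon_2\}$; without this, convexity (and hence the vertex argument) could fail and the adversarial $\bm{b}$ might instead spread mass across several keys. Everything past convexity is the textbook principle that a convex function over a simplex is maximized at a vertex, and the monotonicity of $g$ in $a_v$ is immediate.
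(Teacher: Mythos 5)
Your proof is correct and follows essentially the same route as the paper's: both argue that $\var[\Jcount]$ is convex in $\bm{b}$ over the polytope $\mathcal{K}_{n_b}$, hence maximized at an extreme point $n_b\mathbf{e}_v$, and then select the vertex with the largest $a_v$; your version is in fact more careful, since the paper merely asserts convexity and positivity of the coefficients while you verify them, including the key factorization $c_{11}=(p-\epsilon_1)(p-\epsilon_2)/(\epsilon_1\epsilon_2\,p)\ge 0$ under $p\ge\max\{\epsilon_1,\epsilon_2\}$. One nit: your derivation yields $+\frac{1}{p}$ in the last coefficient, consistent with equation~(\ref{eqn:count}); the $-\frac{1}{p}$ appearing in the lemma statement is a sign typo.
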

}

In equation (\ref{eqn:count}),
	given $\{a_v\}$ and a fixed $p$,
		the variance  is a convex function  of the frequency vector $\{b_v\}$.
Thus, the frequency vector $\{b_v\}$ that maximizes the variance,
\ie, the worst case $\{b_v\}$, is one where exactly one join key has a non zero frequency. This join key should be the one with the maximum frequency in $T_1$.
This is not a representative case and using it to decide a sampling rate  might drastically hinder the performance on average.
We therefore require that both servers also share a simple piece of information regarding the \emph{maximum frequency} of the join keys in each table, say 
$F_a = \max_v a_v$ and $F_b = \max_v b_v$.
With this information, the new optimal sampling rate is given by:



\ignore{
Notice that in equation~\ref{eqn:count}, given the frequency statistics $a_v$ and a fixed $p$, the variance function is convex in terms of the frequency vector $\{b_v\}$. Hence the frequency vector $\{b_v\}$ that maximize the variance, i.e. the worst case $\{b_v\}$ must necessarily be a vector where all the frequency is distributed over one particular value. In particular, it must be on the join key where its frequency in $T_1$ is maximized. This is formalized in the following theorem:

\begin{lemma} \label{lem:robust-count}
 Let $a_*$ be the maximum frequency in in table $T_1$, $v_*$ be any value that has that frequency, and $n_b$ be the total number of tuples in $T_2$. The optimal value for the problem $\max_{b} \var[\Jcount]$ is given by $(\frac{1}{p} - 1) a_*^2 n_b^2  + (\frac{1}{\epsilon_2} - \frac{1}{p})a_*^2 n_b + (\frac{1}{\epsilon_1} - \frac{1}{p}) a_* n_b^2 + (\frac{p}{\epsilon_1 \epsilon_2} - \frac{1}{\epsilon_1} - \frac{1}{\epsilon_2} - \frac{1}{p}) a_* n_b$
\end{lemma}

And the optimal $p$ is given by:

\begin{lemma} \label{lem:robust-count}
\end{lemma}

\begin{lemma}
\label{lem:count_dist}
Let {\footnotesize$p^* = \sqrt{\frac{\epsilon_1\epsilon_2\sum_v(a_v^2 b_v^2 - a_v^2b_v - a_vb_v^2 + a_vb_v)}{\sum_v{a_vb_v}}}$}. 
Then,
{\footnotesize $p^* \leq [\epsilon_1\epsilon_2(F_aF_b$$-$$F_a$$-$$F_b$$+$$1)]^{0.5}$}.
\end{lemma}
\barzan{Dawei, 1) you say if $p^*= A$ then $p^*\leq B$. why not just say $A \leq B$?}

We can now use this lemma to calculate the optimal parameters from a worst case perspective, as presented next:
}

\begin{restatable}{theorem}{thmcountdist}
Given $\epsilon_1$ and $\epsilon_2$, the optimal UBS parameter $(p, q_1)$ and $(p, q_2)$ for \COUNT in the decentralized setting are given by  
\begin{equation*}
    p = \min\{1, \max\{\epsilon_1, \epsilon_2, \sqrt{\epsilon_1\epsilon_2
    (F_aF_b - F_a - F_b + 1)}\}\}
\end{equation*}
and $q_1 = \epsilon_1 / p$, $q_2 = \epsilon_2/p$.
\end{restatable}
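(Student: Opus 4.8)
The plan is to reduce the decentralized problem to the centralized one already solved in Section~\ref{sec:count:centralized}, and then replace the unknown, adversarial frequencies of $T_2$ by the single shared statistic $F_b$. First I would start from the variance expression in equation~(\ref{eqn:count}) (equivalently Theorem~\ref{thm:var-count-centralize}), which, once both tables share the universe rate $p$ as justified by Lemma~\ref{lem:same-rate}, has the form $\var[\Jcount]=A/p+(\gamma_{1,1}/(\epsilon_1\epsilon_2))\,p+C$, where $A=\gamma_{2,2}-\gamma_{2,1}-\gamma_{1,2}+\gamma_{1,1}$ is the coefficient of $1/p$, the coefficient of $p$ is $\gamma_{1,1}/(\epsilon_1\epsilon_2)$, and $C$ gathers the $p$-independent terms. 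Since $A,\gamma_{1,1}\ge0$ and $p$ ranges over $[\max\{\epsilon_1,\epsilon_2\},1]$, the expression is convex in $p$, and setting its derivative to zero gives the per-instance minimizer $p^*=\sqrt{\epsilon_1\epsilon_2\,A/\gamma_{1,1}}$ (clamped to the feasible interval), which coincides with the centralized optimum of Theorem~\ref{thm:opt-rate}.

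The heart of the argument is to control $A/\gamma_{1,1}$ when $\{b_v\}$ is unknown. I would rewrite the numerator term by term as $A=\sum_v a_v b_v(a_v-1)(b_v-1)$, so that
\[
\frac{A}{\gamma_{1,1}} \;=\; \frac{\sum_v a_v b_v\,(a_v-1)(b_v-1)}{\sum_v a_v b_v}
\]
is a weighted average of the quantities $(a_v-1)(b_v-1)$ with nonnegative weights $a_v b_v$. Because every present frequency is a positive integer, each factor is nonnegative and satisfies $(a_v-1)(b_v-1)\le (F_a-1)(F_b-1)=F_aF_b-F_a-F_b+1$; hence the weighted average, and therefore $A/\gamma_{1,1}$, never exceeds $(F_a-1)(F_b-1)$, with equality when all join weight sits on keys having $a_v=F_a$ and $b_v=F_b$. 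Substituting this worst-case ratio into $p^*$ and clamping to $[\max\{\epsilon_1,\epsilon_2\},1]$ (so that $q_1=\epsilon_1/p\le1$ and $q_2=\epsilon_2/p\le1$ remain valid rates) yields exactly the stated $p=\min\{1,\max\{\epsilon_1,\epsilon_2,\sqrt{\epsilon_1\epsilon_2(F_aF_b-F_a-F_b+1)}\}\}$.

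To turn this bound into a statement about the robust optimum $\argmin_p\max_b \var[\Jcount]$, I would exhibit the adversarial vector $\hat b$ that places all of $T_2$'s join mass on keys with $a_v=F_a,\ b_v=F_b$; for this $\hat b$ the $p$-dependent part of the variance collapses to a constant multiple of $(F_a-1)(F_b-1)/p + p/(\epsilon_1\epsilon_2)$, whose unique minimizer over the feasible interval is precisely the claimed $p$. Combined with the convexity of $\var[\Jcount]$ in $\{b_v\}$ for each fixed feasible $p$ (the coefficient of $b_v^2$ equals $(1/p-1)a_v^2+(1/\epsilon_1-1/p)a_v\ge0$), which forces the worst case onto an extreme point of the frequency polytope, this would show that $(\hat p,\hat b)$ is a saddle point and hence solves the minimax.

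The step I expect to be the main obstacle is this last one. Convexity guarantees the worst-case $b$ sits at an extreme point, but confirming that the \emph{particular} extreme point concentrating on $(F_a,F_b)$ keys maximizes the full variance---not merely the ratio $A/\gamma_{1,1}$---requires handling the $p$-independent term $C(b)$, which also varies with $b$. Equivalently, one must check that the variance-maximizing $b$ does not drift with $p$ across the feasible interval; otherwise the clean weighted-average bound only certifies that the stated $p$ is a worst-case-robust upper bound rather than the exact minimax solution. I would resolve this by verifying directly that, at $p=\hat p$, no reallocation of frequency mass away from the $(F_a,F_b)$ keys can increase $\var[\Jcount]$, which completes the saddle-point verification.
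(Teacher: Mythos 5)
Your overall route is the paper's: reduce to the centralized variance formula with a shared universe rate $p$ (Lemma~\ref{lem:same-rate}, Theorem~\ref{thm:var-count-centralize}), identify the worst-case frequencies consistent with the shared statistics $F_a,F_b$, and substitute them into the centralized optimum of Theorem~\ref{thm:opt-rate}; your identity $\gamma_{2,2}-\gamma_{2,1}-\gamma_{1,2}+\gamma_{1,1}=\sum_v a_vb_v(a_v-1)(b_v-1)$ and the weighted-average bound $A/\gamma_{1,1}\le (F_a-1)(F_b-1)$ are correct and give the right formula. The genuine gap is exactly the one you flag at the end, and your proposed repair uses the wrong tool. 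Convexity in $\{b_v\}$ and extreme points of a ``frequency polytope,'' together with language about ``reallocating join mass,'' belong to the earlier fixed-$n_b$ analysis (Lemma~\ref{lem:robust-count}), where the adversary is constrained to a simplex and the worst case is a point mass on a single key. That is precisely the regime the paper abandons as unrepresentative before stating this theorem. Once the only shared information is the caps $F_a$ and $F_b$, the adversary ranges over the \emph{box} $\prod_v\bigl([0,F_a]\times[0,F_b]\bigr)$, and the issue you worry about---whether the maximizing $b$ drifts with $p$, and what happens to the $p$-independent terms---dissolves by monotonicity rather than convexity: for every feasible $p\in[\max\{\epsilon_1,\epsilon_2\},1]$, all four coefficients $\tfrac{1}{p}-1$, $\tfrac{1}{\epsilon_2}-\tfrac{1}{p}$, $\tfrac{1}{\epsilon_1}-\tfrac{1}{p}$, and $\tfrac{p}{\epsilon_1\epsilon_2}-\tfrac{1}{\epsilon_1}-\tfrac{1}{\epsilon_2}+\tfrac{1}{p}$ (the last being nonnegative because $(p-\epsilon_1)(p-\epsilon_2)\ge 0$) are nonnegative, so $\var[\Jcount]$ is coordinatewise nondecreasing in every $a_v$ and $b_v$ over the nonnegative orthant. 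Hence the inner maximum is attained at the single corner $a_v=F_a$, $b_v=F_b$ for all $v$, \emph{uniformly in $p$} and including the $p$-independent terms, and the minimax reduces to minimizing the variance at that corner---which is Theorem~\ref{thm:opt-rate} with $a_v=F_a$, $b_v=F_b$, yielding $\sqrt{\epsilon_1\epsilon_2(F_aF_b-F_a-F_b+1)}$ after clamping. With that one-line monotonicity observation replacing your saddle-point verification, your argument closes; without it, you have only shown that the stated $p$ upper-bounds every per-instance optimizer, not that it solves $\argmin_p\max_{a,b}\var[\Jcount]$.
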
 

\subsection{Sum on Joins}
\label{sec:sum}
Let $\Esum$ be the output of the following simplified query:
\begin{verbatim}
    select sum(T1.W)
    from T1 join T2 on J
\end{verbatim}
Let $F$  be the sum of column $W$ in the joined samples $S_1 \bowtie S_2$. Then,
 the following is an unbiased estimator for $\Esum$:
 \begin{equation}
 \Jsum = \frac{1}{p_{\min} q_1 q_2} F
 \end{equation}
 where $\pmin =\min\{p_1, p_2\}$.
 \begin{restatable}{lemma}{lemsumunbias}
$\E[\Jsum] = \Esum$. 
 \end{restatable}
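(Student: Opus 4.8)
The plan is to establish unbiasedness via linearity of expectation over the individual join pairs, mirroring the probability calculation already used for $\Jcount$ in Section~\ref{sec:count}. First I would write $F$ explicitly as a sum over matching pairs: for every pair $(t_1, t_2)$ with $t_1 \in T_1$, $t_2 \in T_2$ and $t_1.J = t_2.J$, let $X_{t_1,t_2}$ be the indicator that both $t_1 \in S_1$ and $t_2 \in S_2$. Then $F = \sum_{(t_1,t_2)} X_{t_1,t_2}\, t_1.W$, where the sum ranges over all matching pairs, i.e. exactly the pairs contributing to $\Esum = \sum_{(t_1,t_2)} t_1.W$. By linearity of expectation, it suffices to compute $\E[X_{t_1,t_2}] = \Pr[t_1 \in S_1 \text{ and } t_2 \in S_2]$ for a single matching pair.

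The key step is this per-pair probability, and it is where the $\pmin$ factor (rather than $p_1 p_2$) comes from. Writing $v = t_1.J = t_2.J$, inclusion of $t_1$ in $S_1$ requires $h(v) < p_1$ together with an independent Bernoulli success of probability $q_1$, and similarly $t_2 \in S_2$ requires $h(v) < p_2$ and an independent Bernoulli success of probability $q_2$. Crucially, both tables share the same hash function $h$, so the hash events $\{h(v) < p_1\}$ and $\{h(v) < p_2\}$ are nested rather than independent: they hold simultaneously if and only if $h(v) < \min\{p_1, p_2\} = \pmin$. Since $h$ is a perfect hash, $h(v)$ is uniform on $[0,1]$ and this probability is exactly $\pmin$. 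The two Bernoulli coin flips are independent of the hash and of each other, so
\begin{equation*}
\E[X_{t_1,t_2}] = \Pr[h(v) < \pmin]\cdot q_1 \cdot q_2 = \pmin\, q_1\, q_2.
\end{equation*}

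Combining these gives $\E[F] = \sum_{(t_1,t_2)} \pmin q_1 q_2\, t_1.W = \pmin q_1 q_2 \,\Esum$, and therefore $\E[\Jsum] = \frac{1}{\pmin q_1 q_2}\E[F] = \Esum$, as claimed. I expect no genuine obstacle here: the entire argument is a one-line linearity-of-expectation computation once the per-pair probability is pinned down. The only point that needs care—and the only place where correctness could slip—is the observation that the shared hash function makes the two universe-sampling events nested, yielding $\pmin$; this is the same subtlety underlying the $\pmin q_1 q_2$ normalization of the count estimator, so it is already justified by the discussion preceding Lemma~\ref{lem:jcount:var}.
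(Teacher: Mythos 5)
Your proof is correct and follows essentially the same route as the paper's: linearity of expectation over matching pairs, each of which survives into $S_1 \bowtie S_2$ with probability $\pmin q_1 q_2$. The only difference is that you spell out why the shared hash function makes the two universe-sampling events nested (yielding $\pmin$ rather than $p_1 p_2$), a point the paper takes as already established from the $\Jcount$ discussion.
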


\ignore{
\begin{proof}
Similar to $\Jcount$, each pair of tuples $(t_1, t_2)$ in the join appears in the join of the sample with probability $\pmin q_1q_2$. We have:
\begin{align*}
     &E[\Jsum] = \frac{1}{\pmin q_1q_2} E[SUM_W] \\
    =& \frac{1}{\pmin q_1q_2} \sum_{\substack{(t_1, t_2): \\ t_1 \in T_1, t_2 \in T_2 \\ t_1.J = t_2.J}} ((\pmin q_1q_2)t_1.c + (1 - \pmin q_1q_2)\cdot 0) \\
    =& \sum_{\substack{(t_1, t_2): t_1 \in T_1, t_2 \in T_2 \\ t_1.J = t_2.J}} t_1.c
\end{align*}
\end{proof}
}

Let $\mu_v$ and $\sigma_v^2$ be respectively the mean and variance of attribute $W$ of the tuples in $S_1$ that have the join value $v$. Further, recall that $a_v$ is the number of tuples in $T_1$ with join value $v$. The following lemma gives the variance of $\Jsum$.

\begin{restatable}{lemma}{varsum}
\label{lem:jsum:var}
The variance of $\Jsum$ is given by:
\begin{equation}
\begin{gathered}
    \var[\Jsum] = \frac{1 - q_2}{pq_2} \beta_1 + \frac{1 - q_1}{pq_1} \beta_2 \\
    + \frac{(1 - q_1)(1 - q_2)}{p q_1 q_2} \beta_3 + \frac{1 - p}{p} \beta_4
\end{gathered}
\label{eqn:var-sum}
\end{equation}
\rev{where $\beta_1 = \sum_v a_v^2 \mu_v^2 b_v$, $\beta_2 = a_v(\mu_v^2 + \sigma_v^2) b_v^2$, $\beta_3 =  a_v (\mu_v^2 + \sigma_v^2) b_v$ and $\beta_4 =  a_v^2 \mu_v^2 b_v^2$.}
\end{restatable}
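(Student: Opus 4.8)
The plan is to express $\Jsum$ through indicator variables, decompose the summed quantity $F$ over distinct join values, and compute the variance by exploiting independence both across values and, within each value, between the shared universe filter and the two Bernoulli layers.

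First I would introduce, for each join value $v \in \mathcal{U}$, the \emph{shared} universe indicator $X_v = \mathbf{1}[h(v) < \pmin]$, so that $\E[X_v] = \pmin$; this uses the fact that a pair with join value $v$ survives both universe filters iff $h(v) < \min\{p_1, p_2\} = \pmin$, since the same hash $h$ is applied to both tables. Alongside it I define $P_v = \sum_{t_1 : t_1.J = v} (t_1.W)\, B_{t_1}$, a Bernoulli($q_1$)-weighted sum over the $a_v$ matching tuples of $T_1$, and $Q_v = \sum_{t_2 : t_2.J = v} B_{t_2}$, a Bernoulli($q_2$) count over the $b_v$ matching tuples of $T_2$. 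Because the estimator credits $t_1.W$ to each surviving join pair, we get $F = \sum_v X_v P_v Q_v$ and $\Jsum = \frac{1}{\pmin q_1 q_2} F$.

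The key structural step is that, since $h$ is a perfect hash and the Bernoulli draws are independent, the triples $(X_v, P_v, Q_v)$ are mutually independent across distinct $v$, so $\var[F] = \sum_v \var[X_v P_v Q_v]$. For a fixed $v$, the universe indicator $X_v$ (a function of $h(v)$ alone) is independent of $P_v$ and $Q_v$ (functions of the Bernoulli draws), and $P_v$ is independent of $Q_v$; using $X_v^2 = X_v$ this yields
\[
\var[X_v P_v Q_v] = \pmin\,\E[P_v^2]\,\E[Q_v^2] - \pmin^2 (\E[P_v])^2 (\E[Q_v])^2 .
\]
Next I would compute the first two moments of $P_v$ and $Q_v$ from the per-stratum statistics, recalling that $\mu_v,\sigma_v^2$ are the mean and variance of $W$ among the $T_1$-tuples of value $v$, so $\sum_{t_1.J=v} (t_1.W)^2 = a_v(\mu_v^2 + \sigma_v^2)$. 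This gives $\E[P_v] = q_1 a_v \mu_v$, $\E[P_v^2] = q_1(1-q_1) a_v(\mu_v^2+\sigma_v^2) + q_1^2 a_v^2 \mu_v^2$, $\E[Q_v] = q_2 b_v$, and $\E[Q_v^2] = q_2(1-q_2) b_v + q_2^2 b_v^2$.

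Substituting these into the displayed identity, multiplying $\pmin\,\E[P_v^2]\E[Q_v^2]$ out produces four products. The fully-squared product ($q_1^2 a_v^2\mu_v^2 \cdot q_2^2 b_v^2$) combines with the subtracted $\pmin^2(\cdots)^2$ term to leave a factor $\pmin(1-\pmin)$, while the other three products carry factors $q_1(1-q_1)q_2(1-q_2)$, $q_1(1-q_1)q_2^2$, and $q_1^2 q_2(1-q_2)$. Finally I would sum over $v$ and normalize by $1/(\pmin^2 q_1^2 q_2^2)$: after cancellation the four groups collapse to the coefficients $\frac{1-q_2}{pq_2}$, $\frac{1-q_1}{pq_1}$, $\frac{(1-q_1)(1-q_2)}{pq_1q_2}$, and $\frac{1-p}{p}$ (with $p=\pmin$) multiplying $\beta_1 = \sum_v a_v^2\mu_v^2 b_v$, $\beta_2 = \sum_v a_v(\mu_v^2+\sigma_v^2)b_v^2$, $\beta_3 = \sum_v a_v(\mu_v^2+\sigma_v^2)b_v$, and $\beta_4 = \sum_v a_v^2\mu_v^2 b_v^2$, which is exactly the claimed formula. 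The main obstacle is the final bookkeeping: tracking the coupling from the shared hash so the universe factor is $\pmin$ (not $p_1 p_2$) and so the subtracted square reduces only the fully-expanded term to a $(1-p)$ factor, and then matching each expanded product to its correct $\beta$ after normalization. The independence claims themselves are routine once $F = \sum_v X_v P_v Q_v$ is set up.
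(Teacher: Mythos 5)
Your proposal is correct and follows essentially the same route as the paper: decompose the estimator over distinct join values, exploit independence of the per-value universe indicator and the two Bernoulli layers, and compute $\var = \pmin\E[P_v^2]\E[Q_v^2] - \pmin^2\E^2[P_v]\E^2[Q_v]$ per value before summing and normalizing (the paper phrases this same identity via the law of total variance, conditioning on the universe event). Your moment formulas for $P_v$ and $Q_v$ and the final coefficient bookkeeping all match the paper's derivation.
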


Analogous to Lemma~\ref{lem:same-rate}, we have the following result.

\begin{restatable}{lemma}{sameratesum}
\label{lem:same-rate-sum}
Given tables $T_1$, $T_2$ joined on column(s) $J$, fixed sampling parameters $(p_1, q_1)$ for $T_1$, and a fixed effective sampling rate $\epsilon_2 \leq p_1$ for $T_2$,
the variance of $\Jsum$ is minimized when $T_2$ also uses $p_1$ as its universe sampling rate and correspondingly, $\epsilon_2 / p_1$ as its uniform sampling rate. 
\end{restatable}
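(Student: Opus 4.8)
The plan is to mirror the argument used for the \COUNT case in Lemma~\ref{lem:same-rate}. I fix $(p_1,q_1)$ and the effective rate $\epsilon_2$, so the only free variable on $T_2$'s side is its universe rate $p_2$, with the uniform rate pinned to $q_2=\epsilon_2/p_2$. Because the estimator normalizes by $\pmin q_1 q_2$, the ``$p$'' appearing in Lemma~\ref{lem:jsum:var} is really $\pmin=\min\{p_1,p_2\}$, so I treat $\var[\Jsum]$ as a one-variable function of $p_2$ on the feasible interval $[\epsilon_2,1]$ (the lower end from $q_2\le 1$, and the hypothesis $\epsilon_2\le p_1$ guarantees $p_2=p_1$ is feasible). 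The goal is to show this function is minimized at $p_2=p_1$, which forces $q_2=\epsilon_2/p_1$.

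First I would handle the regime $p_2\ge p_1$, where $\pmin=p_1$ is constant and the universe budget of $T_2$ above $p_1$ is wasted, exactly as in the \COUNT case. Here $\var[\Jsum]$ depends on $p_2$ only through $q_2=\epsilon_2/p_2$, and inspecting the four terms of Lemma~\ref{lem:jsum:var} shows each enters $q_2$ only through $1/q_2$ with a non-negative coefficient (using $\beta_1,\beta_3\ge 0$ and $0<q_1\le 1$), so every term is non-increasing in $q_2$. Since $q_2$ decreases as $p_2$ grows, $\var[\Jsum]$ is non-decreasing on $[p_1,1]$, so the minimum in this regime is at the left endpoint $p_2=p_1$.

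Next I would handle $\epsilon_2\le p_2\le p_1$, where $\pmin=p_2$. Substituting $q_2=\epsilon_2/p_2$ (so $p_2 q_2=\epsilon_2$ is constant) into Lemma~\ref{lem:jsum:var}, every remaining dependence on $p_2$ collapses to the single variable $x=1/p_2$, making $\var[\Jsum]$ an affine function $S\,x+\text{const}$ with slope $S=(\beta_4-\beta_1)+\tfrac{1-q_1}{q_1}(\beta_2-\beta_3)$. The crux of the proof is to show $S\ge 0$. I would do this termwise: from the definitions, $\beta_4-\beta_1=\sum_v a_v^2\mu_v^2\, b_v(b_v-1)$ and $\beta_2-\beta_3=\sum_v a_v(\mu_v^2+\sigma_v^2)\, b_v(b_v-1)$, and since the frequencies $b_v$ are non-negative integers we have $b_v(b_v-1)\ge 0$, while $a_v,\mu_v^2,\sigma_v^2\ge 0$ and $(1-q_1)/q_1\ge 0$. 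Hence $S\ge 0$, so $\var[\Jsum]$ is non-decreasing in $x=1/p_2$ and therefore non-increasing in $p_2$; within this regime the minimum is again at the right endpoint $p_2=p_1$.

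Finally, since the two pieces agree at $p_2=p_1$ (the function is continuous there, as both give $\pmin=p_1$ and $q_2=\epsilon_2/p_1$) and each is minimized at that point, $p_2=p_1$ is the global minimizer over $[\epsilon_2,1]$, with $q_2=\epsilon_2/p_1$, as claimed. I expect the only genuine obstacle to be verifying the sign of the slope $S$; everything else is monotonicity bookkeeping. The factorization of each summand through $b_v(b_v-1)\ge 0$ is what makes $S\ge 0$ transparent, and it is the direct \SUM-analogue of the $\gamma_{2,2}\ge\gamma_{1,1}$-type comparisons underlying the \COUNT argument.
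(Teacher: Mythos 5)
Your proof is correct and follows essentially the same route as the paper's: split on $p_2\ge p_1$ versus $p_2\le p_1$, show the variance is non-decreasing in $p_2$ in the first regime (only $q_2=\epsilon_2/p_2$ shrinks while $\pmin$ stays at $p_1$) and non-increasing in the second (the coefficient of $1/p_2$ reduces to sums weighted by $b_v^2-b_v\ge 0$). Your packaging of the second case as an affine function of $1/p_2$ with slope $S=(\beta_4-\beta_1)+\tfrac{1-q_1}{q_1}(\beta_2-\beta_3)$ is exactly the paper's termwise computation, just written more compactly.
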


\subsubsection{Centralized Sampling for Sum}
\label{sec:sum:centralized}

Based on Lemma~\ref{lem:same-rate-sum}, we use the same universe sampling rate $p \geq \epsilon_1, \epsilon_2$ for both tables, with their corresponding uniform sampling rates being $q_1 = \epsilon_1/p$ and $q_2 = \epsilon_2/p$.
 Then we can further simplify equation~\ref{eqn:var-sum} into:
\begin{restatable}{theorem}{thmsumcent} \label{thm:var-sum-centralize}
When $T_1$ and $T_2$ both use the universe sampling rate $p$ and respectively use the uniform sampling rate $q_1 = \epsilon_1/p$ and $q_2 = \epsilon_2/p$, the variance of $\Jsum$ is given by:
\begin{align*}
  \var[\Jsum] =& \sum_v (\frac{1}{\epsilon_2} - \frac{1}{p}) \rev{\beta_1} + (\frac{1}{\epsilon_1} - \frac{1}{p}) \rev{\beta_2} \\
  &+ (\frac{p}{\epsilon_1\epsilon_2} - \frac{1}{\epsilon_1} - \frac{1}{\epsilon_2} + \frac{1}{p}) \rev{\beta_3} + (\frac{1}{p} - 1) \rev{\beta_4}.
\end{align*}
\end{restatable}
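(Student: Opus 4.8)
The plan is to derive the claimed expression by a direct substitution of the constrained sampling rates into the general variance formula of Lemma~\ref{lem:jsum:var}, so that the argument is essentially algebraic once the setup is fixed. First I would invoke the hypothesis that both tables share the same universe rate $p$ (the choice identified as optimal in Lemma~\ref{lem:same-rate-sum}), so that $\pmin = p$ and the uniform rates are pinned to $q_1 = \epsilon_1/p$ and $q_2 = \epsilon_2/p$. Since the four coefficients appearing in Lemma~\ref{lem:jsum:var} depend only on $p,\epsilon_1,\epsilon_2$ and not on the join value $v$, they factor out of the summations defining $\beta_1,\ldots,\beta_4$; the entire task thus reduces to simplifying each of the four scalar coefficients under the substitution.

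Concretely, I would handle the four terms one at a time. For the $\beta_1$ term, $\frac{1-q_2}{pq_2}$ becomes $\frac{1-\epsilon_2/p}{\epsilon_2}=\frac{1}{\epsilon_2}-\frac{1}{p}$ after clearing the factor of $p$ from the denominator; the $\beta_2$ term is symmetric and yields $\frac{1}{\epsilon_1}-\frac{1}{p}$. For the $\beta_4$ term, the coefficient $\frac{1-p}{p}$ is already independent of $q_1,q_2$ and equals $\frac{1}{p}-1$ with no substitution needed. These three are immediate.

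The only slightly delicate step---and the one I expect to be the main (though still routine) obstacle---is the cross term $\beta_3$, whose coefficient $\frac{(1-q_1)(1-q_2)}{pq_1q_2}$ involves both rates. Here I would first substitute to obtain $\frac{p\,(1-\epsilon_1/p)(1-\epsilon_2/p)}{\epsilon_1\epsilon_2}$, then expand $(1-\epsilon_1/p)(1-\epsilon_2/p)=1-\epsilon_1/p-\epsilon_2/p+\epsilon_1\epsilon_2/p^2$, multiply through by $p$, and divide by $\epsilon_1\epsilon_2$ to arrive at $\frac{p}{\epsilon_1\epsilon_2}-\frac{1}{\epsilon_1}-\frac{1}{\epsilon_2}+\frac{1}{p}$. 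Collecting the four simplified coefficients in front of $\beta_1,\beta_2,\beta_3,\beta_4$ and retaining the outer summation over join values $v$ reproduces exactly the stated formula, which completes the proof. Note that no convexity or optimality reasoning is needed at this stage, since the optimal choice of $p$ is deferred to the subsequent analysis.
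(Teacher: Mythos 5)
Your proposal is correct and matches the paper's own proof, which likewise obtains the result by substituting $q_1 = \epsilon_1/p$ and $q_2 = \epsilon_2/p$ into the general variance expression of Lemma~\ref{lem:jsum:var}. The coefficient simplifications you carry out, including the expansion of the cross term, are exactly the routine algebra the paper leaves implicit.
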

\begin{restatable}{theorem}{thmoptratesum} \label{thm:opt-rate-sum}
 Given effective sampling rates $\epsilon_1, \epsilon_2$, the optimal sampling parameters for \SUM in a centralized setting are given by $p\text{=}\min\{1, \max\{\epsilon_1, \epsilon_2,\sqrt{\epsilon_1\epsilon_2 \rev{\frac{\beta_1 + \beta_3 - \beta_2 - \beta_4}{\beta_3}} }\}\}$, 
$q_1\text{=}\frac{\epsilon_1}{p}$ and 
$q_2\text{=}\frac{\epsilon_2}{p}$.
\end{restatable}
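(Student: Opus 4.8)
The plan is to mirror the argument used for \COUNT in Theorem~\ref{thm:opt-rate} and reduce the problem to a single-variable optimization over the universe rate $p$. By Lemma~\ref{lem:same-rate-sum}, an optimal scheme necessarily uses a common universe rate $p$ for both tables with $q_1 = \epsilon_1/p$ and $q_2 = \epsilon_2/p$, so I may start directly from the simplified variance of Theorem~\ref{thm:var-sum-centralize}, which is now a function of the single free variable $p$ (the $\beta$ aggregates and $\epsilon_1, \epsilon_2$ being fixed constants). The feasibility requirements $q_1, q_2 \le 1$ together with $p \le 1$ restrict $p$ to the interval $[\max\{\epsilon_1,\epsilon_2\},\,1]$.

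Next I would collect the $p$-dependence. Inspecting the four bracketed coefficients in Theorem~\ref{thm:var-sum-centralize}, only the $\beta_3$ term contributes a factor proportional to $p$ (namely $\frac{p}{\epsilon_1\epsilon_2}\beta_3$), while each of the four terms contributes a factor proportional to $1/p$, and everything else is constant in $p$. Hence the variance takes the form $\var[\Jsum] = A\,p + B/p + C$, where $A = \beta_3/(\epsilon_1\epsilon_2)$ and $B$ is the signed sum of the $1/p$-coefficients of the four $\beta$ terms, exactly the analogue of the quantity $\gamma_{2,2}-\gamma_{1,2}-\gamma_{2,1}+\gamma_{1,1}$ appearing in the \COUNT proof with each $\gamma$ replaced by its corresponding $\beta$. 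Both $p$ and $1/p$ are convex on $p>0$ and $A\ge 0$, so on the region where $B\ge 0$ the objective is convex and its unconstrained minimizer follows from the first-order condition $A - B/p^2 = 0$, giving $p^\star = \sqrt{B/A} = \sqrt{\epsilon_1\epsilon_2\,(\cdots)/\beta_3}$, which is precisely the square-root expression in the statement.

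Finally I would impose feasibility. Because $Ap + B/p$ is convex (when $B\ge0$) and strictly increasing once $p$ exceeds the unconstrained minimizer, the constrained optimum over $[\max\{\epsilon_1,\epsilon_2\},1]$ is the projection of $p^\star$ onto this interval; writing that projection out yields the nested form $\min\{1,\max\{\epsilon_1,\epsilon_2,p^\star\}\}$, with $q_1=\epsilon_1/p$ and $q_2=\epsilon_2/p$ following immediately. The main obstacle is the treatment of $B$: unlike the \COUNT case, where $B=\sum_v a_vb_v(a_v-1)(b_v-1)\ge 0$ is manifestly nonnegative, for \SUM the corresponding combination of the $\beta$ aggregates is a signed quantity that need not be nonnegative once the per-group variances $\sigma_v^2$ are large. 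I would handle this by observing that whenever the quantity under the radical is nonpositive, we have $A - B/p^2 > 0$ throughout the feasible interval, so $Ap+B/p$ is strictly increasing and the optimum sits at the left endpoint $\max\{\epsilon_1,\epsilon_2\}$, which is exactly what the $\max$ in the stated formula selects when the square-root term fails to bind. Verifying that this clamping argument is valid across all regimes (real versus imaginary critical point, and that projecting the minimizer of a convex univariate function indeed gives the constrained minimizer) is the one step requiring genuine care; the remaining algebra is the same routine collection of terms carried out for \COUNT.
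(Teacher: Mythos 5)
Your proposal is correct and follows essentially the same route as the paper, whose proof is literally ``analogous to Theorem~\ref{thm:opt-rate}'': substitute $q_i=\epsilon_i/p$, write the variance as $Ap+B/p+C$ with $A=\beta_3/(\epsilon_1\epsilon_2)$, and clamp the unconstrained minimizer $\sqrt{B/A}$ to $[\max\{\epsilon_1,\epsilon_2\},1]$. Two remarks: first, your observation that $B$ is not manifestly nonnegative for \SUM (each summand is $a_vb_v(b_v-1)\bigl[(a_v-1)\mu_v^2-\sigma_v^2\bigr]$, which can be negative when $\sigma_v^2$ is large) is a genuine subtlety that the paper's proof-by-analogy silently skips, and your endpoint argument for that regime is the right fix; second, the coefficient of $1/p$ you derive is $\beta_3+\beta_4-\beta_1-\beta_2$, which matches the paper's own expanded derivation but \emph{not} the expression $\beta_1+\beta_3-\beta_2-\beta_4$ printed in the theorem statement --- the statement appears to have $\beta_1$ and $\beta_4$ transposed, so your formula is the correct one.
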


\subsubsection{Decentralized Sampling for Sum}
\label{sec:sum:decentralized}

\rev{Lemma~\ref{lem:same-rate-sum} implies that, in a decentralized setting for \SUM estimation,
    the universe sampling rate $p$
	must be decided by the party that has 
	$T_1$, i.e.,
	the table with the aggregate column.}

	Given a fixed $T_1$ and $p$,
		$\var[\Jsum]$ is a strictly convex  function of $T_2$'s frequency vector.
Hence, the worst case instance is a point distribution where all tuples in $T_2$ share the same join key. However, for \SUM, the worst case distributions in $T_2$ are \emph{not} the same for all possible sampling parameters $p$.
Define $h_v(p)$ to be $\var[\Jsum]$ as a function of $p$ where $T_2$'s frequency vector  is all  concentrated on the join key $v$, 
and define $h^*(p) = \max_v h_v(p)$.
\iftechreport
Since all $h_v(p)$'s are convex in $p$,  $h^*(p)$ is still convex and 
its exact minimum can be computed using a sweep line 
algorithm (see \cite[\S 8]{CLRS} for details).
In a nutshell, the algorithm  sweeps  all possible values of $p$ and uses a data structure to keep track of  $\max_v h_v(p)$ at that particular value.
The data structure uses $O(|\mathcal{U}|)$ memory, which can be costly in practice. 
\else
\rev{
Since $h^*(p)$ is convex and piece-wise quadratic, its minimum can be attained using a sweepline algorithm (see \cite[\S 8]{CLRS} for details). However, the memory usage is too costly in practice. 
}
\fi

Therefore, we propose a simple sampling scheme whose worst case variance is at most twice the variance of the optimal scheme.
Instead of using $h^*(p)$ to keep track of the maximum of all $h_v(p)$, we use an approximate $h'(p) = \max\{ h_{v_1}(p), h_{v_2}(p)\}$, where $v_1 = \argmax_v a_v^2 \mu^2$ and $v_2 = \argmax_v a_v(\mu_v^2 + \sigma_v^2)$ 
 to approximate $h^*(p)$.
 \iftechreport
Since $h_v(p)$ is a function in the form
of $h(p) = Ap + B/p + C$ for some constant $A, B, C > 0$, the value of $p^* = \argmin_{\{\epsilon_1, \epsilon_2 \leq p \leq 1\}} h'(p)$ can be easily  solved using quadratic equations and basic case analysis. \else
\rev{
The function $h'$ is much simpler and its minimum can be easily found using quadratic equations and basic case analysis.}
\fi
For more details on the algorithm, refer to
\iftechreport
Appendix~\ref{app:omitted_algo}.
\else
Appendix B in~\cite{approx-join-techreport}.
\fi

Let $p' = \argmin h'(p)$ and $p^* = \argmin h^*(p)$. 
\iftechreport
We claim that choosing
 $p'$ as our sampling parameter can only increase the optimal worst case variance by a factor of $2$. This is follows from the simple fact that $h_{v_1}(p)$ uppers bounds the terms in $h_{v}(p)$ that depends on $a_v^2\mu_v^2$, and $h_{v_2}(p)$ upper bounds the terms that depends on $a_v(\mu_v^2 + \sigma_v^2)$. Hence their maximum is at least half of $h_v(p)$, for any $v$ and $p$.
\else
    We have:
\fi

\begin{restatable}{lemma}{lemapprox}
\label{lem:approx}
For any $p$$\geq$$\epsilon_1,$$\epsilon_2$, we have 
$\frac{h^*(p)}{2}$$\leq$$h'(p)$$\leq$$h^*(p)$.
\end{restatable}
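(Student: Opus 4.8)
The plan is to prove the two-sided bound $\frac{h^*(p)}{2} \le h'(p) \le h^*(p)$ by exploiting the explicit structure of each point-mass variance $h_v(p)$, which we read off from the centralized \SUM variance formula in Theorem~\ref{thm:var-sum-centralize}. First I would observe that when all of $T_2$'s frequency is concentrated on a single join key $v$, the sums $\beta_1,\beta_2,\beta_3,\beta_4$ collapse so that $h_v(p)$ is built from exactly two distributional quantities from $T_1$: the term $a_v^2\mu_v^2$ (appearing in $\beta_1$ and $\beta_4$) and the term $a_v(\mu_v^2+\sigma_v^2)$ (appearing in $\beta_2$ and $\beta_3$). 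Writing $h_v(p)$ explicitly as a sum of a ``$\mu$-part'' $P_v(p)$ multiplying $a_v^2\mu_v^2$ and a ``$\sigma$-augmented part'' $Q_v(p)$ multiplying $a_v(\mu_v^2+\sigma_v^2)$, where $P_v(p)$ and $Q_v(p)$ are the same nonnegative coefficient functions of $p,\epsilon_1,\epsilon_2$ for every $v$, makes the dependence on $v$ factor cleanly through these two scalars.

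The upper bound $h'(p)\le h^*(p)$ is immediate, since $h'(p)=\max\{h_{v_1}(p),h_{v_2}(p)\}$ is a maximum over a subset of the keys whose full maximum defines $h^*(p)=\max_v h_v(p)$. For the lower bound, the key step is to bound an arbitrary $h_v(p)$ from above by $2h'(p)$. I would split $h_v(p) = P_v(p)\,a_v^2\mu_v^2 + Q_v(p)\,a_v(\mu_v^2+\sigma_v^2)$ into its two nonnegative summands and control each separately. By the definition $v_1=\argmax_v a_v^2\mu_v^2$, the first summand satisfies $P_v(p)\,a_v^2\mu_v^2 \le P_v(p)\,a_{v_1}^2\mu_{v_1}^2 \le h_{v_1}(p) \le h'(p)$, using that $P_v(p)$ is the same function for all $v$ (so it equals $P_{v_1}(p)$) and that dropping the nonnegative $Q$-summand only decreases $h_{v_1}(p)$. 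Symmetrically, by $v_2=\argmax_v a_v(\mu_v^2+\sigma_v^2)$, the second summand is at most $h_{v_2}(p)\le h'(p)$. Adding the two gives $h_v(p)\le 2h'(p)$, and taking the maximum over $v$ yields $h^*(p)\le 2h'(p)$, i.e. $\frac{h^*(p)}{2}\le h'(p)$.

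The main obstacle I anticipate is purely bookkeeping: verifying that the coefficient functions multiplying $a_v^2\mu_v^2$ and $a_v(\mu_v^2+\sigma_v^2)$ are genuinely $v$-independent and nonnegative for all $p\ge\epsilon_1,\epsilon_2$. Concretely, from Theorem~\ref{thm:var-sum-centralize} the coefficient attached to $a_v^2\mu_v^2$ (collecting the $\beta_1$ and $\beta_4$ contributions) is $(\frac{1}{\epsilon_2}-\frac{1}{p})b_v + (\frac{1}{p}-1)b_v^2$, and since under a point mass $b_v$ is the full size of $T_2$ regardless of which $v$ carries the mass, this coefficient is the same across all candidate keys; the same holds for the $a_v(\mu_v^2+\sigma_v^2)$ coefficient from $\beta_2,\beta_3$. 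I would check nonnegativity of each coefficient using the constraint $p\ge\max\{\epsilon_1,\epsilon_2\}$ (which forces $\frac{1}{\epsilon_i}-\frac{1}{p}\ge 0$) together with $p\le 1$; this guarantees both summands of $h_v(p)$ are nonnegative, which is exactly what the ``drop the other summand'' steps above require. Once these sign and $v$-independence facts are confirmed, the argument is a short two-line domination on each of the two summands.
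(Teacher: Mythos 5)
Your proposal is correct and follows essentially the same route as the paper's proof: the paper likewise splits the point-mass variance into the two groups of terms carried by $a_v^2\mu_v^2$ and $a_v(\mu_v^2+\sigma_v^2)$ (its $f_1$ and $f_2$), bounds each group by the corresponding $h_{v_1}$ or $h_{v_2}$ using nonnegativity of the other summand, and sums to obtain the factor of $2$. Your explicit verification that the coefficient functions are $v$-independent and nonnegative for $\max\{\epsilon_1,\epsilon_2\}\le p\le 1$ is a detail the paper leaves implicit, but it is exactly the right check and the argument goes through.
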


\begin{corollary}
We have:
$h^*(p') \leq 2 h^*(p^*)$.
\end{corollary}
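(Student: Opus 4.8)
The plan is to chain together the two halves of the sandwich bound in Lemma~\ref{lem:approx}, exploiting the fact that $p'$ is by definition the minimizer of the surrogate function $h'$. The entire content of the corollary is a ``sandwiched surrogate'' argument: because $h'$ approximates $h^*$ from below within a factor of two while we optimize $h'$ \emph{exactly}, the point $p'$ we select cannot be much worse than the true optimum $p^*$ measured under $h^*$.

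Concretely, I would proceed in three steps. First, apply the left inequality of Lemma~\ref{lem:approx} at the point $p'$: since $h^*(p')/2 \leq h'(p')$, we immediately get $h^*(p') \leq 2 h'(p')$. Second, invoke the definition $p' = \argmin h'(p)$ over the feasible range $\max\{\epsilon_1,\epsilon_2\} \leq p \leq 1$; because $p^*$ lies in the same feasible range, minimality gives $h'(p') \leq h'(p^*)$. Third, apply the right inequality of Lemma~\ref{lem:approx} at the point $p^*$, namely $h'(p^*) \leq h^*(p^*)$. Chaining these yields
\[
    h^*(p') \leq 2 h'(p') \leq 2 h'(p^*) \leq 2 h^*(p^*),
\]
which is exactly the claim.

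There is no real obstacle in the corollary itself; the inequalities compose mechanically once Lemma~\ref{lem:approx} is in hand. The one point requiring care is ensuring that $p'$ and $p^*$ are optimized over the \emph{same} feasible interval $[\max\{\epsilon_1,\epsilon_2\},\,1]$, so that the minimality step $h'(p') \leq h'(p^*)$ is legitimate; this is guaranteed by the constraint $p \geq \epsilon_1, \epsilon_2$ under which Lemma~\ref{lem:approx} is stated. The genuine work lies entirely in Lemma~\ref{lem:approx}, whose proof must verify that $h_{v_1}$ and $h_{v_2}$ together dominate every term of each $h_v$ (giving $h' \leq h^*$) while their maximum recovers at least half of any single $h_v$ (giving $h^*/2 \leq h'$); this two-sided bound follows from $v_1$ and $v_2$ being chosen to maximize the $a_v^2\mu_v^2$ and $a_v(\mu_v^2+\sigma_v^2)$ coefficients, respectively.
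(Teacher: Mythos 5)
Your proof is correct and follows exactly the argument the paper intends: the chain $h^*(p') \leq 2h'(p') \leq 2h'(p^*) \leq 2h^*(p^*)$, combining the two-sided bound of Lemma~\ref{lem:approx} with the minimality of $p'$ for $h'$ over the common feasible interval. Your added remark about both minimizations ranging over the same interval $[\max\{\epsilon_1,\epsilon_2\},1]$ is a worthwhile point of care that the paper leaves implicit.
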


\subsection{Average on Joins}
\label{sec:avg}

Let $\Eavg$ be the output of the following simplified query:
\begin{verbatim}
    select avg(T1.W) 
    from T1 join T2 on J
\end{verbatim}
In general, producing an unbiased estimator for \AVG is hard.\footnote{The denominator, \ie, the size of the sampled join, can even be zero. Furthermore, the expectation of a random variable's reciprocal is not equal to the reciprocal of its expectation.}
Instead, we define and analyze the following estimator. 
Let $S$ and $C$ be the \SUM and \COUNT of column $W$ in $S_1 \bowtie S_2$. 
We define our estimator as $\Javg = S/C$. 
There are two advantages over using separate samples to evaluate \SUM and \COUNT: (1) we can use a larger sample to estimate both queries, and (2) since \SUM and \COUNT will be positively correlated, the variance of their ratio will be lower.
Due to the lack of a close form expression for the variance of the ratio of two random variables, next we present a first order bivariate Taylor expansion to approximate the ratio. 

\ignore{
For any $f(X, Y)$, the bivariate Taylor expansion around $(\theta_x, \theta_y)$ is:
\[
    f(X, Y) = f(\theta_x, \theta_y) + f_x'(\theta_x, \theta_y)(x - \theta_x) + f'(\theta_x, \theta_y)(y - \theta_y) + R.
\]
where $R$ is a remainder of lower order terms. The expectation of $f(X, Y)$ can be approximated using the expansion around the expectation of $X$ and $Y$, $\mu_X$ and $\mu_Y$:
\begin{align*}
     &E[f(, Y)] \\ 
     \approx& E[f(\mu_X, \mu_Y) + f_x'(\mu_X, \mu_Y)(x - \mu_x) + f'(\mu_x, \mu_y)(y - \mu_y)] \\
        =& E[f(\mu_X, \mu_Y)] + 0 + 0 \\
        =& f(\mu_X, \mu_Y)].
\end{align*}

Let $S$ and $C$ be the random variables denoting the sum and the size of the join of samples, and let $f(X, Y) = X/Y$.
This shows that $\E[S/C] \approx \E[S]/\E[C]$, which is exactly equal to the average over join.
Although, the estimator is not unbiased, its expectation tends to the truth value when the join size tends to infinity. 

Now we can analyze its variance, and with some more involved analysis we can show that:
\begin{align}
    \var[S/C] &\approx (\frac{E[S]^2}{E[C]^2})( \frac{\var[S]}{E[S]^2} - \frac{2\cov[S, C]}{E[S]E[C]} + \frac{\var[C]}{\E[C]^2}).
\end{align}
}

\iftechreport

\begin{restatable}{theorem}{thmvaravg}
 \label{thm:var-avg}
Let $S$ and $C$ be  random variables denoting the sum and cardinality of the join of two samples produced by applying UBS sampling parameters $(p_1, q_1)$ to $T_1$ and $(p_2, q_2)$ to $T_2$. Let $\pmin = \min\{p_1, p_2\}$. We have:
\begin{align}
    \var[S/C] &\approx (\frac{E[S]^2}{E[C]^2})( \frac{\var[S]}{E[S]^2} - \frac{2\cov[S, C]}{E[S]E[C]} + \frac{\var[C]}{\E[C]^2})
\end{align}
where 
\small
\begin{align*}
    \E[S] =& \pmin q_1q_2\sum_{v} \mu_v a_vb_v\\
    \E[C] =&\pmin q_1q_2 \sum_{v} a_vb_v\\
    \var[S] =& \pmin q_1q_2 (1-q_2) \lbrack q_1 \sum_v a_v^2 \mu_v^2 b_v 
    + q_2 \sum_v a_v(\mu_v^2 + \sigma_v^2) b_v^2 \\
    &\text{$+$$(1$$-$$q_1) \sum_v a_v (\mu_v^2$$+$$\sigma_v^2) b_v \rbrack$ 
       $+$$\pmin (1$$-$$\pmin)q_1^2q_2^2 a_v^2 \mu_v^2 b_v^2$} \\
    \var[C] =& \pmin q_1q_2 \lbrack (1-q_2) \sum_v a_v^2b_v + (1-q_1)q_2 \sum_v a_vb_v^2 \\
    & + (1-q_1)(1-q_2) \sum_v a_vb_v 
    + (1-\pmin)q_1q_2 \sum_v a_v^2b_v^2 \rbrack\\
   \cov[S,C] =& \pmin q_1 q_2 \lbrack (1 - q_2)q_1 \sum_v a_v^2\mu_v b_v 
    + (1 - q_1)q_2 \sum_v a_v\mu_v b_v^2   \\
    &+ \text{$(1$$-$$q_1)(1$$-$$q_2)$} \sum_v a_v\mu_vb_v \text{$+$$(1$$-$$\pmin)$} q_1 q_2 \sum_v a_v^2 \mu_v b_v^2\rbrack\\
\end{align*}
\normalsize
\end{restatable}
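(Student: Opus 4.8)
The plan is to handle the theorem in two independent parts: first justify the ratio-variance approximation by the delta method, then compute the five moments $\E[S]$, $\E[C]$, $\var[S]$, $\var[C]$, and $\cov[S,C]$ exactly. For the approximation, I would take the first-order bivariate Taylor expansion of $f(x,y)=x/y$ about $(\E[S],\E[C])$. With $f_x=1/y$ and $f_y=-x/y^2$, the standard delta-method formula $\var[f(S,C)]\approx f_x^2\var[S]+2f_xf_y\cov[S,C]+f_y^2\var[C]$, evaluated at $(\E[S],\E[C])$, gives exactly the stated expression once $\E[S]^2/\E[C]^2$ is factored out. The ``$\approx$'' is necessary since $\E[S/C]\neq\E[S]/\E[C]$ and $C$ may even be zero; as the surrounding text notes, the neglected remainder is of lower order and vanishes as the join size grows.

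The moment computation rests on a clean indicator decomposition. For each join value $v$, let $H_v$ be the indicator of the event $h(v)<\pmin$, so that $\E[H_v]=\pmin$; because every tuple of value $v$ in either table shares the single hash $h(v)$, a candidate join pair of value $v$ survives iff $H_v=1$ and both of its independent Bernoulli coins are heads. Writing the per-group Bernoulli sums $A_v=\sum_{t_1.J=v} w_{t_1} X_{t_1}$, $A'_v=\sum_{t_1.J=v} X_{t_1}$, and $D_v=\sum_{t_2.J=v} Y_{t_2}$ (where $X,Y$ are the Bernoulli indicators and $w_{t_1}=t_1.W$), we obtain the compact forms $S=\sum_v H_v A_v D_v$ and $C=\sum_v H_v A'_v D_v$, in which $H_v$, $A_v$ (or $A'_v$), and $D_v$ are mutually independent, and the triples for distinct $v$ are independent. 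The first moments $\E[S]=\pmin q_1 q_2\sum_v \mu_v a_v b_v$ and $\E[C]=\pmin q_1 q_2 \sum_v a_v b_v$ then follow immediately from $\E[A_v]=q_1 a_v\mu_v$, $\E[A'_v]=q_1 a_v$, and $\E[D_v]=q_2 b_v$.

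For the second moments, independence across groups lets me split $\E[S^2]$, $\E[C^2]$, and $\E[SC]$ into a diagonal part $\sum_v\pmin\,\E[(\cdot)_v^2]\,\E[D_v^2]$ and an off-diagonal ($v\neq v'$) part that factorizes; the off-diagonal part cancels exactly against $\E[S]^2$, $\E[C]^2$, and $\E[S]\E[C]$, leaving the diagonal sum minus a $\pmin^2 q_1^2 q_2^2\sum_v(\cdot)$ correction that combines into the $\pmin(1-\pmin)$-weighted degree-four term. Inside a group I would use the Binomial second moments $\E[A'^2_v]=q_1(1-q_1)a_v+q_1^2 a_v^2$ and $\E[D_v^2]=q_2(1-q_2)b_v+q_2^2 b_v^2$, together with $\E[A_v^2]=q_1(1-q_1)a_v(\mu_v^2+\sigma_v^2)+q_1^2 a_v^2\mu_v^2$ and $\E[A_v A'_v]=q_1(1-q_1)a_v\mu_v+q_1^2 a_v^2\mu_v$. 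Multiplying each diagonal product out and collecting by the monomials in $a_v,b_v,\mu_v,\sigma_v$ reproduces the four-term expressions claimed for $\var[S]$, $\var[C]$, and $\cov[S,C]$.

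The main obstacle is the correlation bookkeeping in these within-group second moments. One must carefully separate the same-tuple contributions, where $\E[X_{t_1}^2]=q_1$ and the $W$-second-moment $\mu_v^2+\sigma_v^2$ appears, from the distinct-tuple contributions, where $\E[X_{t_1}]\E[X_{t_1'}]=q_1^2$ and only $\mu_v^2$ remains; symmetrically one must track when the shared universe indicator contributes $\pmin$ (same join value) versus $\pmin^2$ (distinct join values). Getting every $q_1$-versus-$q_1^2$ and $\pmin$-versus-$\pmin^2$ factor correct is the only delicate step, since the independence across groups already makes the off-diagonal cancellation automatic and the rest is routine expansion.
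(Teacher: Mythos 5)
Your proposal is correct and follows essentially the same route as the paper: the ratio approximation is the first-order bivariate Taylor (delta-method) expansion of $x/y$ about $(\E[S],\E[C])$, and the moment formulas come from the same per-join-value decomposition into a shared universe-sampling indicator times independent Bernoulli sums, with binomial second moments and the $\pmin$-versus-$\pmin^2$ bookkeeping yielding the $\pmin(1-\pmin)$ term (the paper does this via the law of total variance in its \texttt{COUNT} and \texttt{SUM} lemmas and then cites those results here). Your explicit diagonal/off-diagonal cancellation is just a rewriting of that same computation, so no gap.
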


\else

\begin{restatable}{theorem}{thmvaravg}
 \label{thm:var-avg}
Let $S$ and $C$ be  random variables denoting the sum and cardinality of the join of two samples produced by applying UBS sampling parameters $(p_1, q_1)$ to $T_1$ and $(p_2, q_2)$ to $T_2$. Let $\pmin = \min\{p_1, p_2\}$. We have:
\small
\begin{align}
    \var[S/C] &= (\frac{E[S]^2}{E[C]^2})( \frac{\var[S]}{E[S]^2} - \frac{2\cov[S, C]}{E[S]E[C]} + \frac{\var[C]}{\E[C]^2}) + R
\end{align}
\normalsize
where $R$ is a remainder of lower order terms, and
\small
\begin{align*}
    \cov[S,C] =& \pmin q_1 q_2 \lbrack (1 - q_2)q_1 \sum_v a_v^2\mu_v b_v + (1-q_1)q_2\sum_v a_v\mu_v b_v^2  \\
    +&(1-q_1)(1-q_2) \sum_v a_v\mu_vb_v +(1-\pmin) q_1 q_2 \sum_v a_v^2 \mu_v b_v^2 \rbrack
\end{align*}
and other expectation and
variance terms are given by Theorems \ref{thm:var-count-centralize} and \ref{thm:var-sum-centralize}.
\normalsize
\end{restatable}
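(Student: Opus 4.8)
The plan is to prove the statement with the delta method (a first-order bivariate Taylor expansion) and then reduce every moment of the pair $(S,C)$ to an independent, per-join-value computation. For the first part I would expand $f(x,y)=x/y$ to first order around $(\E[S],\E[C])$, giving $S/C \approx \frac{\E[S]}{\E[C]} + \frac{1}{\E[C]}(S-\E[S]) - \frac{\E[S]}{\E[C]^2}(C-\E[C])$. Taking the variance of this linear approximation and collecting the coefficients $\frac{1}{\E[C]^2}$, $-\frac{2\E[S]}{\E[C]^3}$, and $\frac{\E[S]^2}{\E[C]^4}$ in front of $\var[S]$, $\cov[S,C]$, and $\var[C]$ yields exactly the factored form $\frac{\E[S]^2}{\E[C]^2}\big(\frac{\var[S]}{\E[S]^2} - \frac{2\cov[S,C]}{\E[S]\E[C]} + \frac{\var[C]}{\E[C]^2}\big)$, with all discarded second- and higher-order terms of the expansion absorbed into the remainder $R$.

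The observation that makes all the moments tractable is that, since $t_1.J=t_2.J=v$ forces $h(t_1.J)<p_1$ and $h(t_2.J)<p_2$ to hold simultaneously iff $h(v)<\pmin$, each join value $v$ carries a single shared universe indicator $U_v$ with $\Pr[U_v{=}1]=\pmin$, and these are independent across distinct values. Letting $\alpha_v\sim\mathrm{Bin}(a_v,q_1)$ and $\tilde\alpha_v$ denote the unweighted and $W$-weighted counts of surviving $T_1$-tuples with value $v$, and $\beta_v\sim\mathrm{Bin}(b_v,q_2)$ the surviving $T_2$-tuples (the three factors mutually independent), we can write $C=\sum_v U_v\alpha_v\beta_v$ and $S=\sum_v U_v\tilde\alpha_v\beta_v$. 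The expectations $\E[C]=\pmin q_1q_2\sum_v a_vb_v$ and $\E[S]=\pmin q_1q_2\sum_v \mu_v a_vb_v$ then fall out immediately.

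Because distinct values are independent, $\var[C]=\sum_v\var[U_v\alpha_v\beta_v]$ and likewise for $S$, and each summand is evaluated from the binomial moments $\E[\alpha_v^2]=a_vq_1(1-q_1)+a_v^2q_1^2$, $\E[\beta_v^2]=b_vq_2(1-q_2)+b_v^2q_2^2$, and $\E[U_v^2]=\pmin$. Grouping the result by the monomials $a_v^ib_v^j$ (weighted by $\mu_v^2$, resp.\ $\mu_v^2+\sigma_v^2$, in the $S$ case) reproduces the count and sum variances of Lemmas~\ref{lem:jcount:var} and~\ref{lem:jsum:var} up to the global factor $(\pmin q_1q_2)^2$, consistent with $C=\pmin q_1q_2\Jcount$ and $S=\pmin q_1q_2\Jsum$; so these two terms are already known. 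The genuinely new object is $\cov[S,C]=\sum_v\big(\pmin\,\E[\tilde\alpha_v\alpha_v]\,\E[\beta_v^2] - \pmin^2\E[\tilde\alpha_v]\E[\alpha_v]\E[\beta_v]^2\big)$, where the mixed moment $\E[\tilde\alpha_v\alpha_v]=q_1a_v\mu_v+q_1^2a_v(a_v-1)\mu_v$ carries a \emph{single} weight $\mu_v$ (not $\mu_v^2+\sigma_v^2$), which is precisely what produces the $\sum_v a_v^2\mu_v b_v^2$- and $\sum_v a_v\mu_v b_v$-style terms of the claimed covariance.

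I expect the conceptual content to be light, so the main obstacle is twofold and largely a matter of care. First is the bookkeeping: correctly matching each overlap pattern (identical pair, shared $T_1$-tuple, shared $T_2$-tuple, shared value only) to its power of $q_1,q_2$ and to the right weight ($\mu_v$ versus $\mu_v^2+\sigma_v^2$), and collapsing the $a_v(a_v-1)$ and $b_v(b_v-1)$ factors into the reported $a_v^2,b_v^2$. Second, and more delicate, is justifying that $R$ is truly lower order: this relies on the denominator $C$ concentrating around $\E[C]$, so that the neglected quadratic contributions are governed by higher central moments of $(S,C)$ that become asymptotically negligible relative to the leading variance as the expected join-sample size grows. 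Turning that concentration argument into a rigorous bound, rather than the heuristic that accompanies the plain delta method, is the step that demands the most attention.
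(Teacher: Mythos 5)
Your proposal follows essentially the same route as the paper: a first/second-order bivariate Taylor expansion of $S/C$ around $(\E[S],\E[C])$ to get the factored variance formula with remainder $R$, followed by the per-join-value decomposition (shared universe indicator with probability $\pmin$ times independent binomial tuple counts) to reuse the already-derived $\E[S],\E[C],\var[S],\var[C]$ and to compute $\cov[S,C]$ via the mixed moment $\E[\tilde\alpha_v\alpha_v]$ --- and your covariance bookkeeping checks out against the stated formula. The only difference is that you spell out the covariance calculation and flag the (unaddressed) issue of rigorously bounding $R$, whereas the paper simply asserts the covariance is obtained ``similarly'' and likewise leaves $R$ as an unquantified lower-order remainder.
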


\fi

\subsubsection{Centralized Sampling for Average}
\label{sec:avg:centralized}
 In a centralized setting  
where  $a_v$, $b_v$, $\mu_v$ and $\sigma_v$  are given for all $v$, every term in the expression $\frac{E[S]^2}{E[C]^2}( \frac{\var[S]}{E[S]^2} - 2\frac{\cov[S, C]}{E[S]E[C]} + \frac{\var[C]}{\E[C]^2}$ that depends on $p$ is proportional to either $p$ or $1/p$.\footnote{Notice that $E[S]/E[C]$ is independent of $p$.}
\iftechreport
The terms proportional to $\frac{1}{p}$  are $\frac{1}{p}(A - 2B + C)$
where 
\begin{align*}
A =& \frac{\sum_v a_v (\mu_v^2 + \sigma_v^2) b_v}{(\sum_v a_v\mu_vb_v)^2} + \frac{a_v^2 \mu_v^2 b_v^2}{(\sum_v a_v\mu_vb_v)^2} \\
 & - \frac{\sum_v a_v^2 \mu_v^2 b_v}{(\sum_v a_v\mu_vb_v)^2} -  \frac{\sum_v a_v(\mu_v^2 + \sigma_v^2) b_v^2}{(\sum_v a_v\mu_vb_v)^2} \\
B =& \frac{1}{\sum_v a_vb_v} + \frac{\sum_v a_v^2b_v^2\mu_v }{(\sum_v a_vb_v) (\sum_v a_v\mu_vb_v)} \\
  & - \frac{\sum_v a_v^2\mu_v b_v}{(\sum_v a_vb_v) (\sum_v a_v\mu_vb_v)} -  \frac{\sum_v a_v \mu_v b_v^2}{(\sum_v a_vb_v) (\sum_v a_v\mu_vb_v)} \\
C =&  \frac{\sum_v a_vb_v}{(\sum_v a_vb_v)^2} + \frac{\sum_v a_v^2b_v^2}{(\sum_v a_vb_v)^2} 
   - \frac{\sum_v a_v^2b_v}{(\sum_v a_vb_v)^2} -  \frac{\sum_v a_vb_v^2}{(\sum_v a_vb_v)^2} \\
\intertext{The term proportional to $p$ is $p D$ where:}
    D =& \frac{1}{\epsilon_2 \epsilon_2} (\frac{\sum_v a_v (\mu_v^2 + \sigma_v^2) b_v}{(\sum_v a_v\mu_vb_v)^2}  - \frac{2}{\sum_v a_vb_v}  + \frac{\sum_v a_vb_v}{(\sum_v a_vb_v)^2}).
\end{align*}
We can find a $p$ that minimizes $\frac{1}{p}(A$$-$$2B$$+$$C)$$+$$pD$ as  follows.

\begin{restatable}{theorem}{thmoptrateavg} ~\label{thm:opt-rate-avg}
In the centralized setting, set $p^- = \max\{\epsilon_1, \epsilon_2\}$, $p^+ = 1$ and $p^* = \min\{1, \max\{\epsilon_1, \epsilon_2, \sqrt{\frac{A - 2B + C}{D}}\}\}$. Then the optimal sampling parameter is given by:
\begin{align*}
    p = 
    \begin{cases}
    p^-  \hspace{2.9cm}\mbox{if $A - 2B + C \leq 0$ and $D > 0$}\\
    p^+  \hspace{2.9cm}\mbox{if $A - 2B + C > 0$ and $D \leq 0$} \\
    p^*  \hspace{2.9cm}\mbox{if both $A -2B + C$ and $D > 0$} \\
    \argmin_{p \in \{p^-, p^+\}}\frac{1}{p}(A - 2B + C)  + pD. \hspace{0.5cm}\mbox{otherwise}
    \end{cases}
\end{align*}
\end{restatable}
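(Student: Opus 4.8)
The plan is to treat the (approximate) variance as a single real function of the universe rate $p$ and to minimize it by elementary calculus combined with a sign-based case analysis. By Theorem~\ref{thm:var-avg} and the decomposition stated just before the theorem, after substituting $q_1 = \epsilon_1/p$ and $q_2 = \epsilon_2/p$ every $p$-dependent term of $\var[S/C]$ is proportional either to $1/p$ or to $p$, so the objective collapses to
\begin{equation*}
  f(p) = \frac{1}{p}\,(A - 2B + C) + p\,D + \mathrm{const},
\end{equation*}
where the constant is independent of $p$ (the leading factor $\E[S]^2/\E[C]^2$ and the ratio $\E[S]/\E[C]$ carry no dependence on $p$). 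Minimizing $\var[S/C]$ thus reduces to minimizing $f$ over the closed interval $[p^-,p^+]$ with $p^- = \max\{\epsilon_1,\epsilon_2\}$ and $p^+ = 1$; the lower endpoint is forced because $q_i = \epsilon_i/p \le 1$ requires $p \ge \max\{\epsilon_1,\epsilon_2\}$, and the upper endpoint because $p$ is itself a sampling rate.

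First I would record $f'(p) = -(A-2B+C)/p^2 + D$ and $f''(p) = 2(A-2B+C)/p^3$, so that on $p>0$ the curvature of $f$ is governed by the sign of $A-2B+C$ and the monotone drift by the sign of $D$; the four cases of the theorem then match the four sign patterns. When $A-2B+C \le 0$ and $D>0$, the term $\tfrac{1}{p}(A-2B+C)$ is a nonpositive multiple of the decreasing function $1/p$ and hence nondecreasing, while $pD$ is increasing, so $f$ is nondecreasing and its minimum is at $p^-$. Symmetrically, when $A-2B+C>0$ and $D\le 0$, $f$ is nonincreasing and the minimum is at $p^+$. When both quantities are positive, $f''>0$, so $f$ is strictly convex with unconstrained minimizer $\sqrt{(A-2B+C)/D}$; clipping this to $[p^-,p^+]$ yields exactly $p^* = \min\{1,\max\{\epsilon_1,\epsilon_2,\sqrt{(A-2B+C)/D}\}\}$. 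In the remaining case both are nonpositive, so $f''\le 0$, $f$ is concave, and a concave function on a closed interval attains its minimum at an endpoint, giving $\argmin_{p\in\{p^-,p^+\}} f(p)$.

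The differentiation and the monotonicity bookkeeping in the first two cases are routine, and the convex/concave split closes the argument immediately once the objective is in the form above. The step I expect to be the main obstacle is justifying the reduction itself: one must verify that substituting $q_1=\epsilon_1/p$ and $q_2=\epsilon_2/p$ into the $\var[S]$, $\var[C]$, and $\cov[S,C]$ expressions of Theorem~\ref{thm:var-avg} really does make all surviving $p$-dependence collapse onto the pure $1/p$ and $p$ terms captured by $A-2B+C$ and $D$, with no hidden $p$ inside the prefactors. This is a careful but mechanical grouping of terms; after it is done, the case analysis above finishes the proof.
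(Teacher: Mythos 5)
Your proposal is correct and follows essentially the same route as the paper: the paper's own proof simply points back to Theorems~\ref{thm:opt-rate} and~\ref{thm:opt-rate-sum}, i.e., reduce the variance to the form $\frac{1}{p}(A-2B+C)+pD+\mathrm{const}$ on $[\max\{\epsilon_1,\epsilon_2\},1]$ and minimize by monotonicity/AM-GM. Your write-up is in fact more complete than the paper's, since the cited proofs only cover the case of nonnegative coefficients, whereas you explicitly carry out the four-way sign analysis (monotone, convex with clipped stationary point, concave with endpoint minimum) that the \AVG{} case actually requires.
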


\else

\rev{Thus, similar to Theorems~\ref{thm:opt-rate} and~\ref{thm:opt-rate-sum}, we can again find a $p$ that minimizes the variance given by Theorem~\ref{thm:var-avg}. We defer the exact expression of the optimal parameter to~\cite{approx-join-techreport}
due to space 
constraints.}

\fi

\subsubsection{Decentralized Sampling for Average}
\label{sec:avg:decentralized}

Minimizing the \emph{worst case} variance for \AVG (for the decentralized setting) is much more involved  than the average case. 
In most cases,
the objective function (variance) is neither convex nor concave in  $T_2$'s frequencies.
However, note that every term in 
\iftechreport
    Theorem~\ref{thm:opt-rate-avg}
\else
   Theorem~\ref{thm:var-avg} 
\fi
is an inner product $\langle x, y\rangle$, where $x$ and $y$ are two vectors stored on \partyone and \partytwo, respectively.
Fortunately, inner products can be approximated by transferring a very small amount of information using the AMS sketch\cite{AlonMS99, DobraGGR02}.
With such a sketch, we can derive an approximate sampling rate without communicating the full frequency statistics.



\section{Multiple Queries and Filters}
\label{sec:one-sample}

Creating a separate sample for each combination of aggregation function, aggregation column, and \WHERE clause is clearly impractical. 
In this section, we show how to create a single sample
        per join pattern that supports  multiple queries at the cost of some possible loss of approximation quality. 
First, we ignore the \WHERE clauses and then show how 
    they can be handled too.

\ignore{
In essence, one can take the union  of multiple universe samples to form a single sample 
    that is simultaneously useful for various join patterns.  
However, the resulting sample size  depends heavily
    on inter-column correlations
    and, in the worse case, can 
    be as large as creating a separate universe sample for each join pattern. 
To remove the need for capturing correlations, 
    here we present a formulation that
    creates a separate sample for every pair of table and join key (a join key can itself be one or more columns).
} 

\rev{
\ph{Multiple Tables and Queries}  
We formulate our input as a graph $G$=$\langle V,E\rangle$. The vertex set $V$ is the set of all table and join key pairs, and the edge set $E$ corresponds to all join queries of interest. 
Specifically,
for every join query between tables $T_1$ and $T_2$ on $J_1 = J_2$,
we have a corresponding edge $e$ between vertices $(T_1, J_1)\in V$ and $(T_2, J_2)\in V$ 
  (henceforth, we will use a query and its corresponding edge interchangeably).
This means $G$ is a multigraph, with potentially parallel edges or self-loops. 
For each vertex $v=(T,J)\in V$, we must output a sampling budget $\epsilon_v$ as well as the corresponding universe sampling rate $p_v$, which will be used to create 
    a sample $S = \mbox{UBS}_{p_v, \epsilon_v/p_v}(T, J)$.
    This sample will be used for any query that involves
    a join with $T$ on column(s) $J$.

According to Lemmas \ref{lem:jcount:var} and \ref{lem:jsum:var}, and Theorem~\ref{thm:var-avg}, for each edge 
$e$$=$$(v_1,$$v_2)$$\in$$E$, 
we can express the estimator variance of  its corresponding query as a function of $\epsilon_{v_1}, \epsilon_{v_2}, p_{v_1}, p_{v_2}$ and $p_e$, where $p_e$ is an auxiliary variable denoting the minimum of $p_1$ and $p_2$:
\begin{equation}
f_e(p, \epsilon_{v_1}, \epsilon_{v_2}, p_{v_1}, p_{v_2})\text{=}\frac{1}{p_e}(A_e\text{+}B_e \frac{p_1}{\epsilon_{v_1}}\text{+}C_e \frac{p_2}{\epsilon_{v_2}}\text{+}D_e \frac{p_1 p_2}{\epsilon_{v_1}\epsilon_{v_2}} )
\label{eq:error:form}
\end{equation}
where $A_e, B_e, C_e, D_e$ are constants that depend on the distributional information of the tables in $v_1$ and $v_2$.
To cast this as an optimization problem, we also take in a user specified weight $\omega_e$ for each edge $e$ and express our objective as:
\begin{equation} \label{eqn:multiple}
    F = \sum_{e = (v_1, v_2) \in E} \omega_e f_e(p_e, \epsilon_{v_1}, \epsilon_{v_2}, p_{v_1}, p_{v_2})
\end{equation}

The choice of $\omega_e$ values is up to the user. For example, they can  all be set to $1$, or to the relative frequency, importance, or
probability of appearance (e.g., based on  past workloads) of the query corresponding to $e$. 
Then, to find the optimal sampling parameters we   solve the following optimization:
\ignore{$\epsilon$'s and $p$'s, we minimize the objective $F$ subject to the constraint that $\underset{v=(T,J)\in V}{\Sigma} \epsilon_v \cdot \text{size}(T) \leq B$,} 
\begin{align} 
    \min_{\epsilon_v, p_v,p_e}~~F \text{~~~~~~subject to~~} \underset{v=(T,J)\in V}{\Sigma} \epsilon_v \cdot \text{size}(T) \leq B
    \label{eq:physical:design}
\end{align}
where $size(T)$ is the storage footprint of table $T$, and $B$ is the overall storage budget  for creating samples.
\cancut{Note that 
by replacing the non-linear $p_e=\min(p_{v_1},p_{v_1})$ constraints 
with $p_e \leq p_{v_1}$ and 
$p_e \leq p_{v_2}$, 
(\ref{eq:physical:design}) is reduced to 
    a smooth optimization problem,
        which can be solved numerically with off-the-shelf solvers~\cite{BV2014}.} }

\ph{Known Filters} To incorporate \WHERE clauses, 
we   simply regard a query with a filter $c$ on  $T_1 \bowtie T_2$ as a query without a filter but on a sub-table that satisfies $c$, namely $T'= \sigma_c(T_1 \bowtie T_2)$. 

\ph{Unknown Filters with Distributional Information} When the columns appearing in the \WHERE clause can be predicted but the exact constants are unknown, a similar technique can be applied. For example, if an equality constraint $C>x$  is anticipated but $x$ may take on $100$ different values, we can \emph{conceptually} treat it as $100$ separate queries, each with a different value of $x$ in its \WHERE clause. This reduces our problem 
\ignore{of sampling for a query with unknown values in its \WHERE clause} 
to that of sampling for multiple queries without a \WHERE clause, which we know how to handle using  equation~\eqref{eqn:multiple}.\footnote{Note that, even though each query in this case is on a different table, 
    they are all sub-tables of the same original table, and hence their sampling rate $p$ is the same.}
Here, the weight $\omega_i$ can be used to exploit 
any  distributional information that might be available. In general, 
    $\omega_i$ should be set to reflect the 
    probability of each possible \WHERE clause appearing in the future. 
For example, if there are
$R$ possible \WHERE clauses and all are equally likely,
we can set $\omega_i=1/R$, but if popular values in a column are more likely to appear in the filters, we can use the column's histogram to assign $\omega_i$. 

\ph{Unknown Filters}
When there is no information regarding the columns (or their values) in future filters, we  can take a different approach. 
Since  the estimator variance   is a monotone  function in the frequencies of each join key (see Theorem~\ref{thm:var-count-centralize},  Theorems~\ref{thm:var-sum-centralize} and \ref{thm:var-avg}), 
the larger the frequencies, the larger the variance. This means the worst case variance always happens when the \WHERE clause selects all tuples from the original table. Hence, in the absence of any distributional information regarding future \WHERE clauses,  
we can simply focus on the original query  without any filters to minimize our worst case variance.

\ignore{
\section{Combining Stratified Sampling and UBS} 
\label{sec:stratified}
In the case where the query also contain group-by clauses, we propose the following change to our UBS sampling scheme to also support stratified sampling: Instead of taking the parameter $p, q$, we take an additional parameter $k$. Then for each group $G_i$, if the size of $G_i$ is at most $k$, the sample keeps the entire group. Otherwise, it go on to perform the UBS sampling with parameter $p, q$ for the group $G_i$.
 \begin{algorithmic}
 \Function{$\mbox{Stratified-UBS}_{k ,p, q}$}{T, J}
    \State Initialize $S \leftarrow \emptyset$
    \For{ each group $G_i$}
        \If{$G_i$ contain at most $k$ elements}
            \State Add the entire $G_i$ to $S$
        \Else
        \For{ all tuples $t$ in $G_i$}
            \If{$h(t.J) < p$}
                \State $t$ joins $S$ independently with probability $q$.
            \EndIf
        \EndFor
        \EndIf
    \EndFor
    \State \Return S
 \EndFunction
 \end{algorithmic}
 
In stratified sampling, each group produces its own estimator for the query.
With similar spirit as Section~\ref{sec:one-sample}, for stratified sampling, we want to minimize a weighted average of the variance of estimator for each group. Notice that in $\mbox{Stratified-UBS}$, if a group has at most $k$ tuples, the entire group is kept and the estimator has $0$ variance for that group. Hence when considering the optimal parameter $p$ and $q$ it suffices to consider only groups with size bigger than $k$. There are several options for the scaling weight $\omega_i$ (c.f. Section~\ref{sec:one-sample}) in this case: a uniform scaling weight has $\omega_i = 1/g$ where $g$ is the number of distinct groups. We can also favors groups with larger size by setting $\omega_i = g_i/\sum_{i = 1}^g g_i$ to be weighted proportional to the group size.
}


\section{Experiments}
\label{sec:expr}

\tempcut{
Our experiments aim to answer the following questions:

\begin{enumerate}[label={(\roman*)},leftmargin=0.2em, itemindent=2.3em, topsep=0.1em, itemsep=0.1em]
    \item How does our optimal  sampling  compare to other baselines in  centralized   and  decentralized settings? 
    (\S \ref{sec:expr:centralized}, \S\ref{sec:expr:decentralized})
    \item How well does  our optimal UBS  sampling handle join queries with filters?
    (\S\ref{sec:expr:filters})
    \item How does our optimal UBS sampling perform when using a single sample for multiple queries?
    (\S\ref{sec:expr:combined})
    \item How does our optimal SUBS sampling compare to existing stratified sampling strategies?
    (\S\ref{sec:expr:stratified})
    \item How much does a decentralized setting reduce the
    resource consumption and sample creation overhead?    (\S\ref{sec:expr:sample_creation})
    \iftechreport
    \item How does our optimal  sampling  compare to a more general sampling
        scheme, namely two-level sampling~\cite{two-level-sampling}?
    (\S\ref{sec:expr:2lv})
    \fi

\end{enumerate}
}

\subsection{Experiment Setup}
\label{sec:expr:setup}

\ph{Hardware and Software}
We borrowed a cluster of 18 \textit{c220g5} nodes from  CloudLab~\cite{cloudlab}. 
 Each node was equipped with an Intel Xeon Silver 4114 processor with 10  cores (2.2Ghz each) and 192GB of RAM.
We used Impala 2.12.0 as our backend database to store   data  and execute queries.

\ph{Datasets} We used several real-life and synthetic datasets:
\begin{enumerate}[topsep=0.12cm,itemsep=0.05cm,leftmargin=0.2cm]
\item  \textbf{\instacart~\cite{instacart}.}
This is a real-world dataset from an online grocery. 
We used their \textit{orders} and \textit{order\_products} tables (3M and 32M tuples, resp.), 
joined on  \textit{order\_id}.

\item \textbf{\movielens~\cite{harper2016movielens}.}
This is a real-world movie rating dataset.
We used their 
\textit{ratings}  and \textit{movies} tables (27M and 58K tuples, resp.), joined on \textit{movieid}.

\item \textbf{\tpch~\cite{tpch}}.
We used a scale factor of \rev{1000}, and joined
\textit{l\_orderkey} of 
the fact table (\textit{lineitem}, 6B tuples) 
with \textit{o\_orderkey} of the 
largest dimension table (\textit{orders}, 1.5B tuples).
\ignore{Specifically, we joined  \textit{l\_orderkey} of
the
\textit{lineitem} table with \textit{o\_orderkey}  of the \textit{orders} table (\rev{6B and 1.5B} tuples, resp.). }

\item \textbf{\synthetic.}
 To better control the join key distribution, 
	we also generated several synthetic datasets,
	where tables $T_1$ and $T_2$ each had
		$100$M tuples and a join column $J$.
$T_1$ had an additional column $W$  for aggregation,  
 drawn from a power law distribution with range $[$1, 1000$]$ and $\alpha$$=$$3.5$.
 We varied the distribution of the join key in each table to be one of uniform, normal, or power law, creating different datasets (listed in Table~\ref{tab:sampling_rate_cent}).
 The values of column $J$ were  integers randomly drawn from $[$1, 10M$]$ according to the chosen distribution. 
Whenever joining with \emph{power2} (see below), we used 100K join keys in both relations.
For normal distribution, we used a truncated distribution with $\sigma$$=$$1000/5$.
We used two different variants of power law distribution for $J$, 
one with $\alpha$$=$$1.5$ and 10M join keys (referred to as \emph{power1}), and one with $\alpha$$=$$2.0$ and 100K join keys (referred to as \emph{power2}). 
We denote each synthetic dataset according to its tables' distributions, 
\textit{\texttt{S}\{distribution of $T_1$,distribution of $T_2$\}}, \eg,
\syn{uniform}{uniform}.
 
\end{enumerate} 

\ignore{\ph{Queries}
For each dataset, 
we tested a set of three queries that join two tables and calculate aggregates.
The aggregates in each of three queries for a dataset correspond to
one of the three most frequently used (non-extreme) aggregates,
\COUNT, \SUM, and \AVG, respectively.
}

\ph{Baselines}
We compared our optimal \UBS parameters (referred to as \OPT)
against six baselines. The \UBS parameters of these baselines,
	$B_1,\ldots,B_6$, 	
	are listed in Table \ref{tab:sample_baseline}.
$B_1$ and $B_6$ are simply universe and uniform  sampling, respectively.
 $B_2,\ldots,B_5$ represent different hybrid variants of these sampling schemes.
Sampling budgets were $\epsilon_1 = \epsilon_2 = 0.001$, except for \instacart and \movielens where, due to their small number of tuples,
we used $0.01$ and $0.1$, respectively.
\iftechreport
 In Section~\ref{sec:expr:2lv}, we also compare against a more general scheme, called \emph{two-level sampling}~\cite{two-level-sampling}, which utilizes significantly more parameters than \UBS (see Section~\ref{sec:related} for an overview of two-level sampling).
\fi


\begin{table}[t]
  \caption{\textbf{Six \UBS baselines, each with different   $p$  and $q$.}}
  \inv
  \centering
  \begin{tabular}{l|c|c|c|c|c|c}
  \hline
  & $B_1$ & $B_2$ & $B_3$ & $B_4$ & $B_5$ & $B_6$ \\
  \hline
  $p$ & 0.001 & 0.0015 & 0.003 & 0.333 & 0.6667 & 1.000 \\
  $q$ & 1.000 & 0.6667 & 0.333 & 0.003 & 0.0015 & 0.001 \\
  \hline
  \end{tabular}
  \label{tab:sample_baseline}
  \inv
\end{table}

\ph{Implementation} 
We implemented our optimal parameter calculations in Python application.
Our sample generation logic read required information, such as table size and join key frequencies, from the database, and then constructed SQL statements to build appropriate samples in the target database. We used Python to compute approximate answers from sample-based queries.

\ph{Variance Calculations}
We generated $\beta$$=$$500$ pairs of samples for each experiment, and 
re-ran the queries on each pair, to calculate the variance of our approximations. 

\subsection{Join Approximation: Centralized Setting}
\label{sec:expr:centralized}

\begin{figure*}[t]
\centering
\includegraphics[width=\textwidth]{./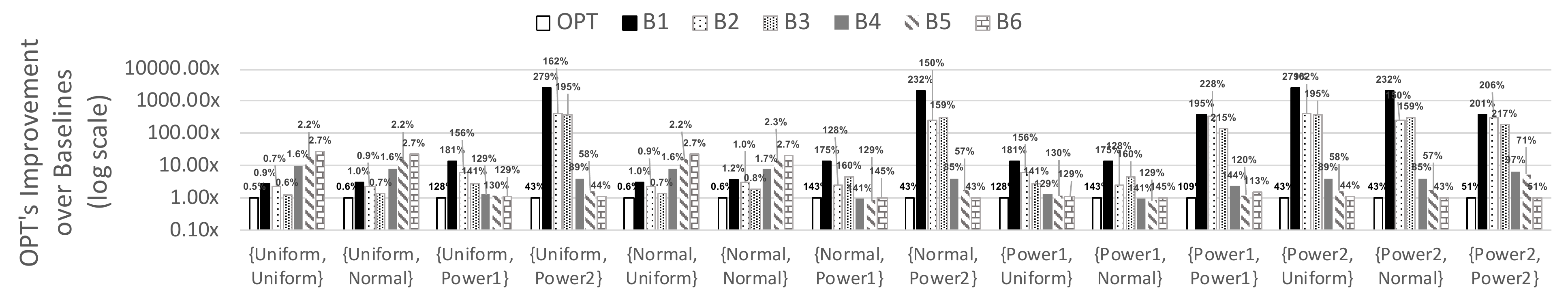}
\inv\inv\inv\inv
\caption{\textbf{\OPT's improvement in terms of variance for \COUNT over six baselines with synthetic dataset (percentages are relative error).}}
\label{fig:variance_count_synthetic_cent}
\inv\inv
\end{figure*}

\begin{figure*}[t]
\inv\inv
\centering
\includegraphics[width=\textwidth]{./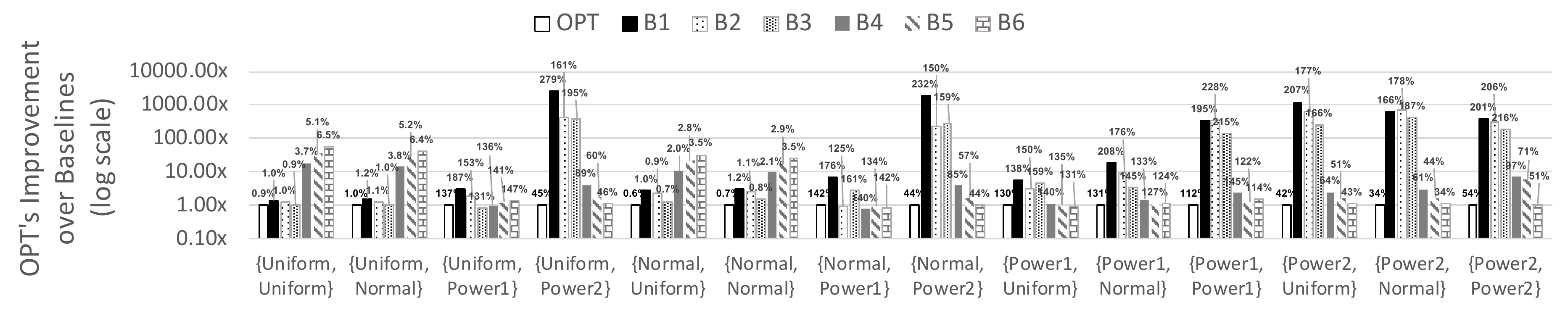}
\inv\inv\inv\inv
\caption{\textbf{\OPT's improvement in terms of variance for \SUM over six baselines with synthetic dataset (percentages are relative error).}}
\label{fig:variance_sum_synthetic_cent}
\inv
\end{figure*}

\begin{figure*}[t]
\centering
\includegraphics[width=\textwidth]{./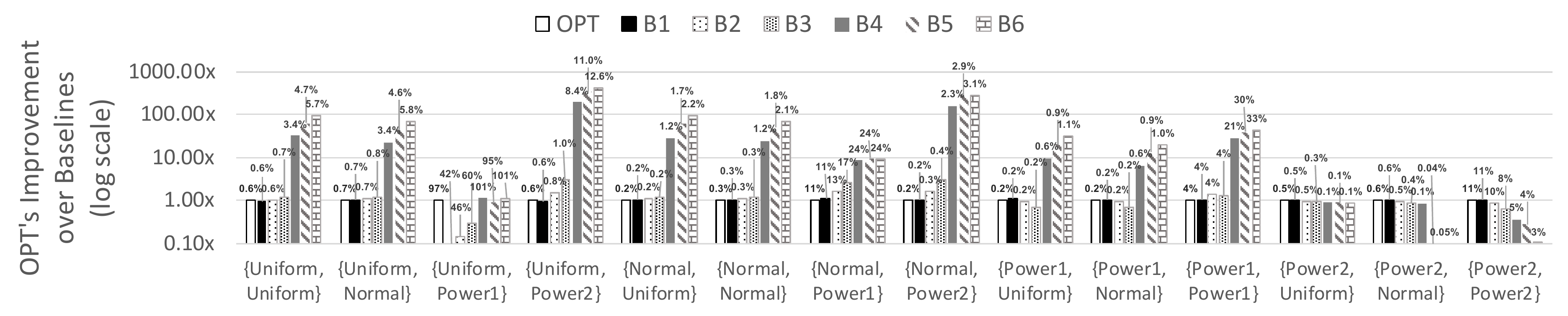}
\inv\inv\inv\inv
\caption{\textbf{\OPT's improvement in terms of variance for \AVG over six baselines with synthetic dataset (percentages are relative error).}}
\label{fig:variance_avg_synthetic_cent}
\inv
\end{figure*}

\begin{figure*}[!ht]
  \centering
  \begin{subfigure}{0.32\textwidth}
    \includegraphics[width=\textwidth]{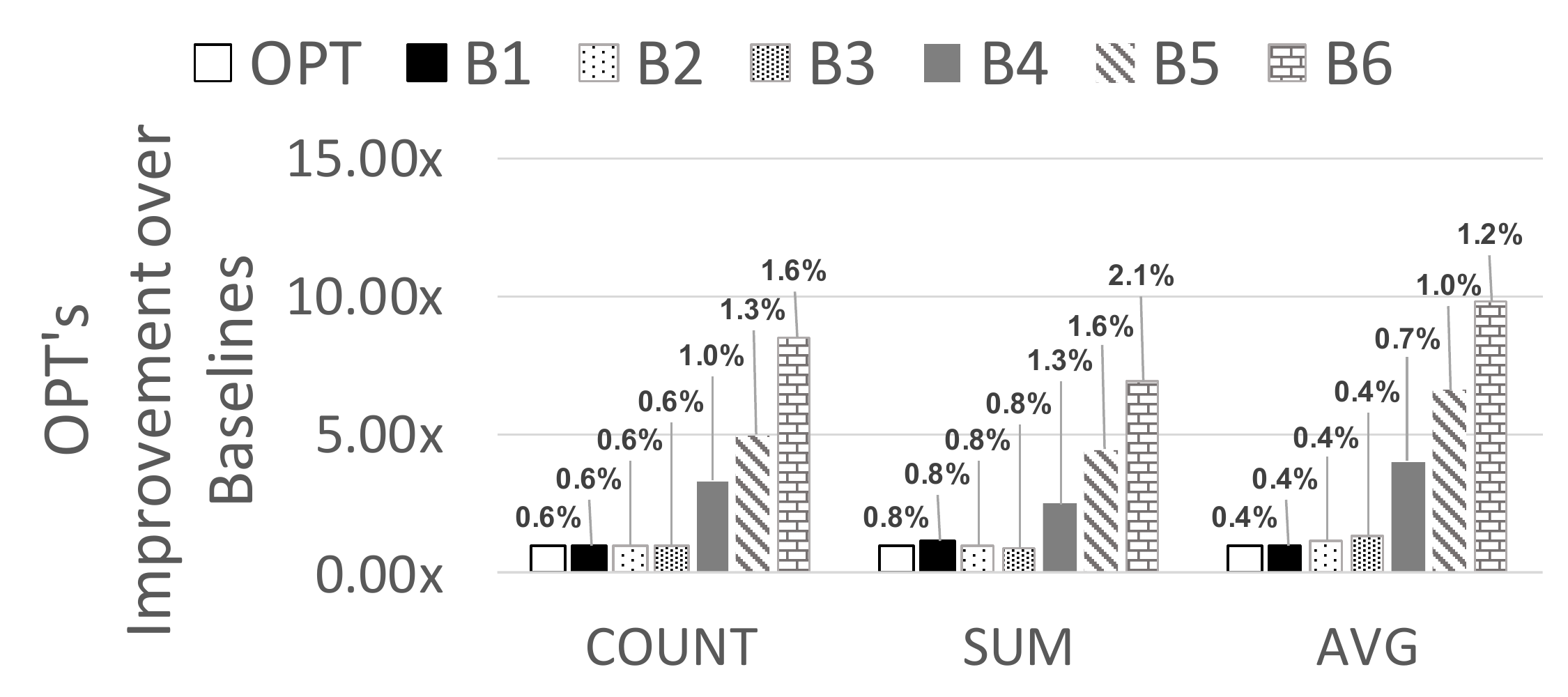}
    \inv\inv\inv
    \caption{\instacart}
    \label{fig:instacart_var}
  \end{subfigure}
  \begin{subfigure}{0.32\textwidth}
    \includegraphics[width=\textwidth]{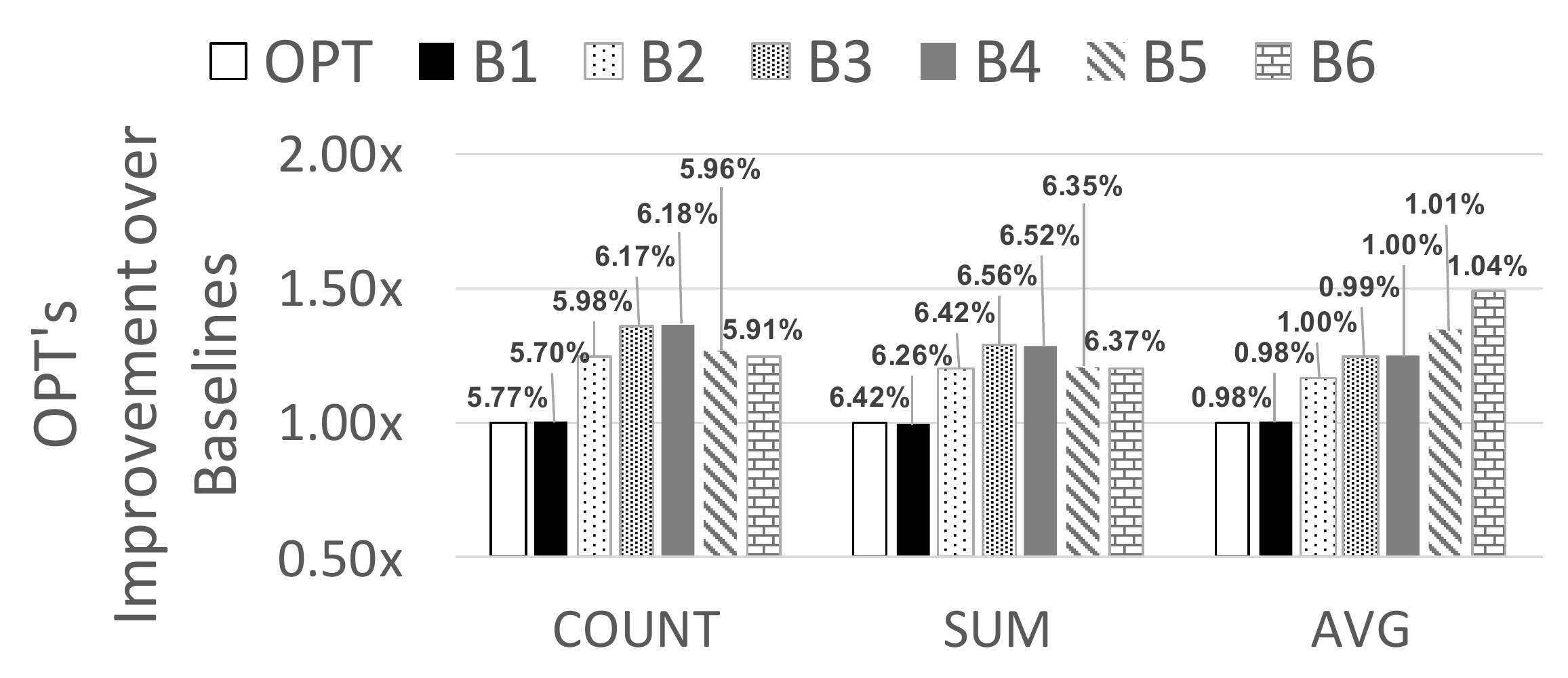}
    \inv\inv\inv
    \caption{\movielens}
    \label{fig:movielens_var}
  \end{subfigure}
  \begin{subfigure}{0.32\textwidth}
    \includegraphics[width=\textwidth]{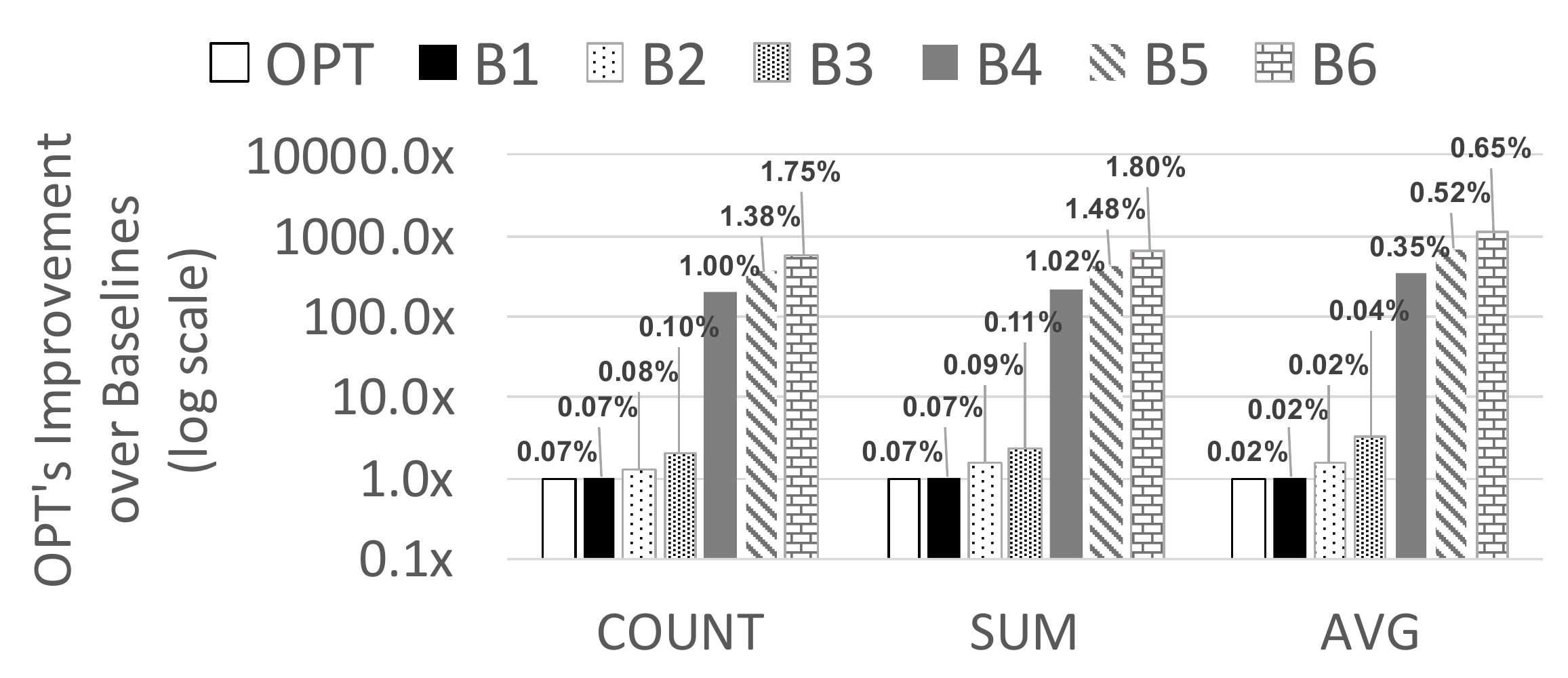}
    \inv\inv\inv
    \caption{\tpch}
    \label{fig:tpch_var}
  \end{subfigure}
   \inv\sinv
  \caption{\textbf{\OPT's improvement in terms of variance over the baselines on benchmark datasets (percentages are relative error).}}
  \label{fig:variance_others_cent}
  \inv
\end{figure*}

\begin{table}[t]
\caption{\textbf{Optimal sampling parameters (centralized setting).}}
\inv
\centering
\small
\begin{tabular}{c|p{0.45cm}p{0.55cm}|p{0.45cm}p{0.55cm}|p{0.45cm}p{0.55cm}}
\hline
Dataset &  \multicolumn{2}{|c|}{\COUNT} & \multicolumn{2}{|c|}{\SUM} & \multicolumn{2}{|c}{\AVG} \\
\hline
{}   & $p$   & $q$    & $p$   & $q$ & $p$ & $q$ \\
\syn{uniform}{uniform}   & 0.010 & 0.1   & 0.004  & 0.264 & 0.001 & 1.000 \\
\syn{uniform}{normal}   & 0.012 & 0.083   & 0.005  & 0.220 & 0.001 & 1.000 \\
\syn{uniform}{power1}   & 1.000 & 0.001   & 1.000  & 0.001 & 0.692 & 0.001 \\
\syn{uniform}{power2}   & 1.000 & 0.001  & 1.000 & 0.001 & 0.001  & 1.000 \\
\syn{normal}{uniform}   & 0.012 & 0.083   & 0.009  & 0.111 & 0.001 & 1.000 \\
\syn{normal}{normal}    & 0.014 & 0.069   & 0.011  & 0.093 & 0.001 & 1.000 \\
\syn{normal}{power1}   & 1.000 & 0.001   & 1.000  & 0.001 & 0.001 & 1.000 \\
\syn{normal}{power2}   & 1.000 & 0.001 & 1.000 & 0.001 &  0.001 & 1.000  \\
\syn{power1}{uniform}   & 1.000 & 0.001   & 1.000  & 0.001 & 0.001 & 1.000 \\
\syn{power1}{normal}    & 1.000 & 0.001   & 1.000  & 0.001 & 0.001 & 1.000 \\
\syn{power1}{power1}   &  1.000 & 0.001   & 1.000  & 0.001 & 0.001 & 1.000 \\
\syn{power2}{uniform}   & 1.000 & 0.001  & 1.000 & 0.001 & 0.001  & 1.000  \\
\syn{power2}{normal}   & 1.000 & 0.001 & 1.000 & 0.001 & 0.001 & 1.000 \\
\syn{power2}{power2}   & 1.000 & 0.001  & 1.000 & 0.001 & 0.001 & 1.000  \\
\instacart & 0.01 & 1.00 & 0.01 & 1.00 & 0.01 & 1.00 \\
\movielens & 0.1 & 1.00 & 0.1 & 1.00 & 0.1 & 1.00 \\
\tpch & 0.001 & 1.00 & 0.001 & 1.00 & 0.001 & 1.00 \\
\hline
\end{tabular}
\inv\inv
\label{tab:sampling_rate_cent}
\end{table}





Table~\ref{tab:sampling_rate_cent} shows the sampling rates used by \OPT for each dataset and aggregate function in the centralized setting.
 For \synthetic,
the optimal parameters were some mixture of uniform and universe sampling when both tables were only moderately skewed (\ie, uniform or normal distributions) for \COUNT and \SUM, whereas it reduced to a simple uniform sampling for power law distribution. 
This is due to the higher probability of missing extremely popular join keys with universe sampling.
To the contrary, for \AVG, \OPT reduced to a simple universe sampling in most cases. 
This is because maximizing the output size in this case was the best way to reduce  variance.
For the other datasets (\instacart, \movielens, and \tpch),
the optimal parameters led to universe sampling, regardless of aggregate type,
and their joins were PK-FK, 
hence making  uniform sampling less useful for the table with primary keys.


Figure~\ref{fig:variance_count_synthetic_cent} shows \OPT's improvement 
over the baselines 
in terms of variance for \COUNT queries.
\rev{Each bar is also annotated with the relative percentage error of the corresponding baseline.}
\OPT outperformed all   baselines in most cases, achieving over 10x lower variance than the worst baseline.
Figures~\ref{fig:variance_sum_synthetic_cent} and~\ref{fig:variance_avg_synthetic_cent} show the same experiment for \SUM and \AVG.
In both cases, \OPT achieved the minimum variance across all sampling strategies,
except for \AVG when \rev{$T_1$ or $T_2$} was a power law distribution. 
This is because \OPT for \AVG was calculated using a Taylor approximation, which is accurate only when the estimators of \SUM and \COUNT are both within the proximity of their true values.
 Moreover, sample variance  converges slowly to the theoretical variance,  particularly for skew distributions, such as power law.
This is why  estimated variances for \OPT were not  optimal for some \synthetic datasets. 
However, \OPT still achieved the lowest variance across all real-world datasets, as shown in 
Figure~\ref{fig:variance_others_cent}. Here, for the selected join key, \OPT determined that a full universe sampling was the best sampling scheme.

In summary, this experiment highlights  \OPT's ability in outperforming simple uniform or universe sampling---or choosing one of them, when optimal---for aggregates on joins.

\subsection{Join Approximation: Decentralized}
\label{sec:expr:decentralized}


We  evaluated both \OPT and other baselines under a decentralized setting using \instacart and \synthetic datasets.
Here, 
we constructed a possible worst case distribution for $T_2$ that was still somewhat realistic, 
given the distribution of $T_1$ and minimal information about $T_2$ (i..e, $T_2$'s cardinality). To do this, we used the following steps:
1) let $J_{MAX(T_1)}$ be the most frequent join key value in $T_1$; 
2) assign 75\% of the join key values of $T_2$ to have the value of $J_{MAX(T_1)}$ and draw the rest of the join key values from a uniform distribution.

 Figure~\ref{fig:variance_synthetic_dec} shows the results.
For \synthetic, the \OPT was the same   under both settings whenever
there was a power law distribution or the aggregate was \AVG.
This is because our assumption of the worst case distribution for $T_2$ was close to a power law distribution.
For \COUNT and \SUM with \synthetic dataset,
\OPT in the decentralized setting  had a much higher variance than \OPT in the centralized setting when there was no power law distribution.
With \instacart, 
\OPT in the decentralized setting was the same as \OPT in the centralized setting, which had the minimum variance among the baselines.
This illustrates that \OPT in the decentralized setting can perform well with real-world data where the joins are mostly PK-FK.
This also shows that if a reasonable assumption is possible on the distribution of $T_2$, \OPT can be as effective in the decentralized setting as it is in a centralized one, while requiring significantly less communication.

\begin{figure*}[!ht]
  \centering
  \begin{subfigure}{0.32\textwidth}
    \includegraphics[width=\textwidth]{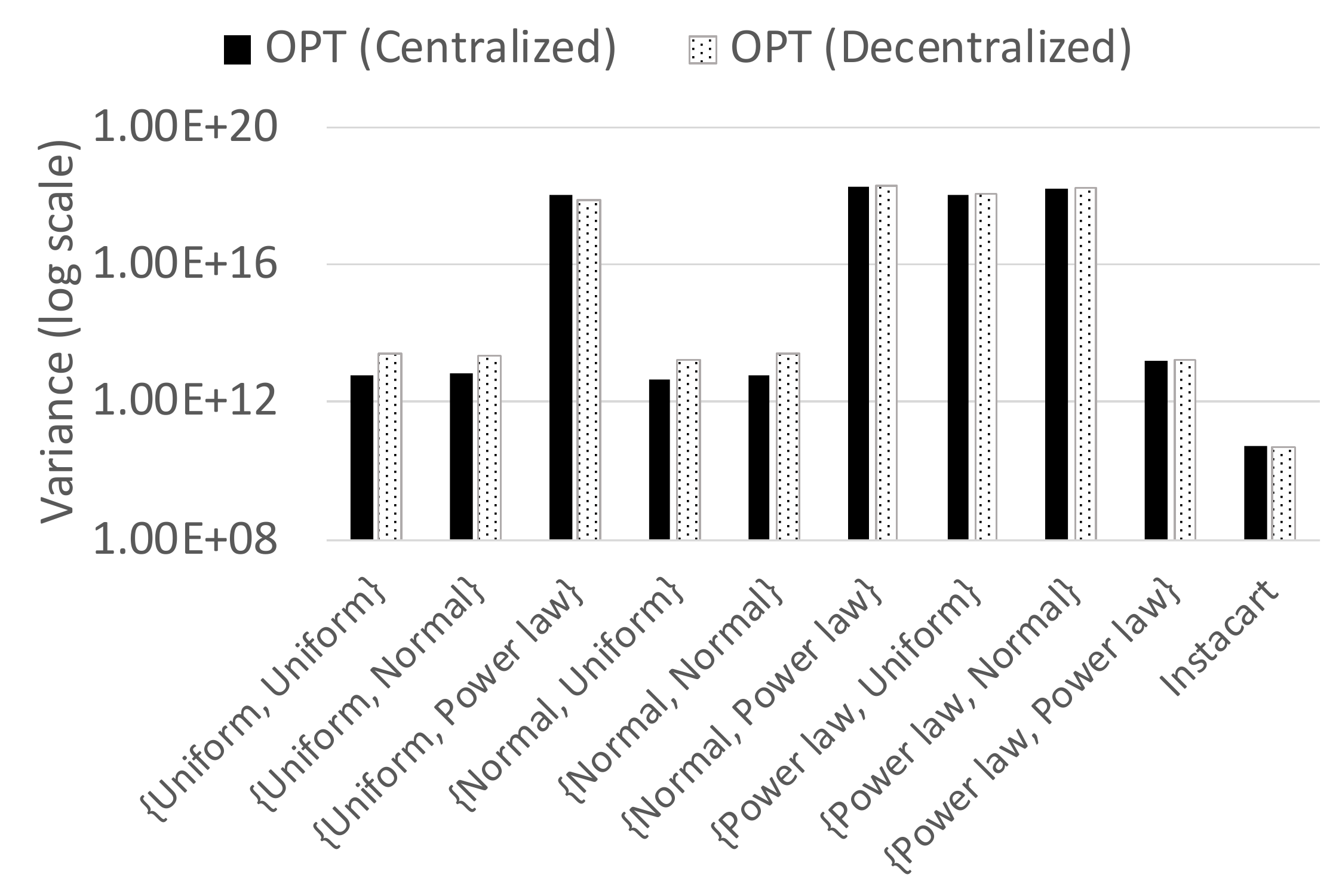}
    \inv\inv\inv
    \caption{\COUNT}
    \label{fig:variance_synthetic_count_dec}
  \end{subfigure}
  \begin{subfigure}{0.32\textwidth}
    \includegraphics[width=\textwidth]{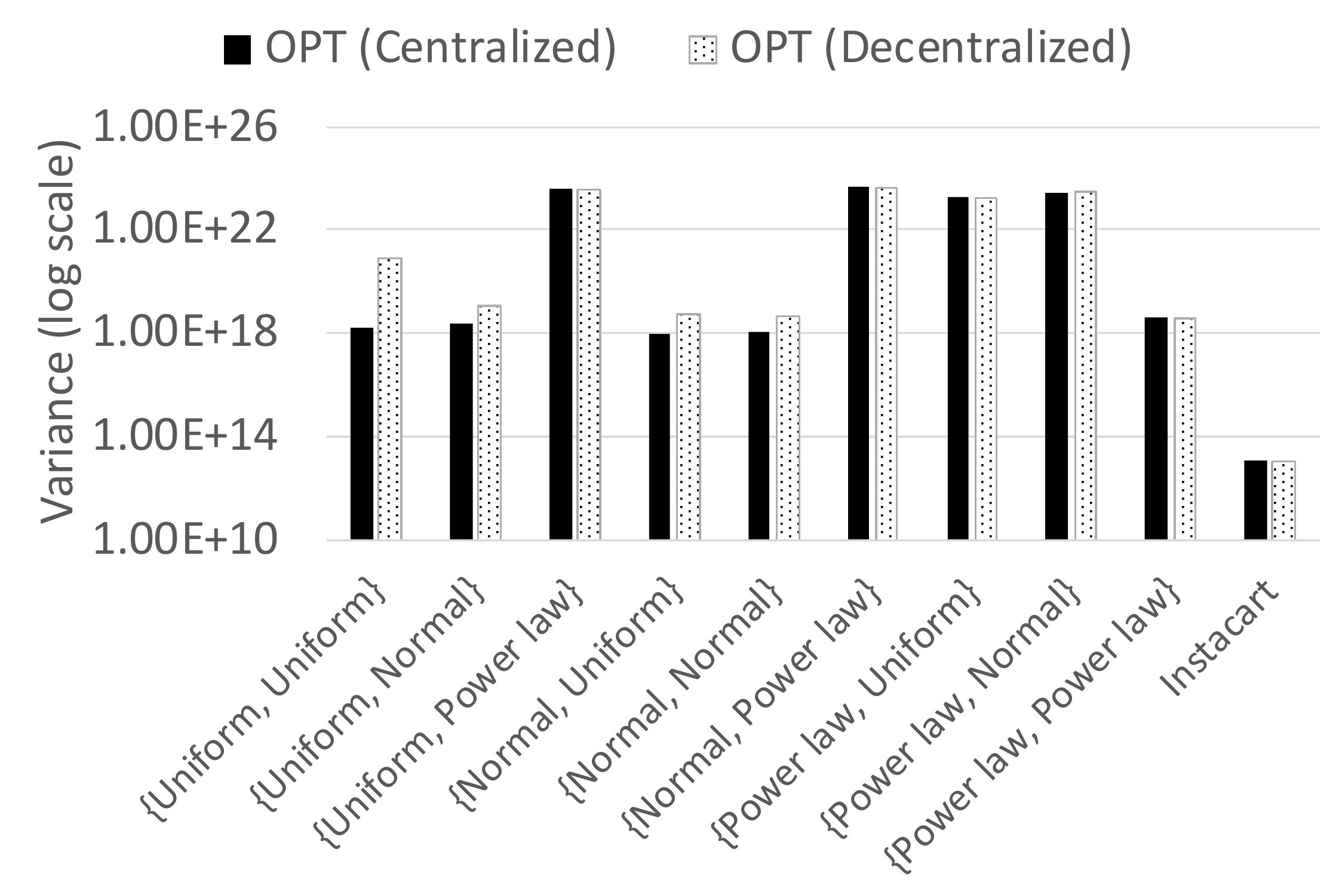}
    \inv\inv\inv
    \caption{\SUM}
    \label{fig:variance_synthetic_sum_sec}
  \end{subfigure}
  \begin{subfigure}{0.32\textwidth}
    \includegraphics[width=\textwidth]{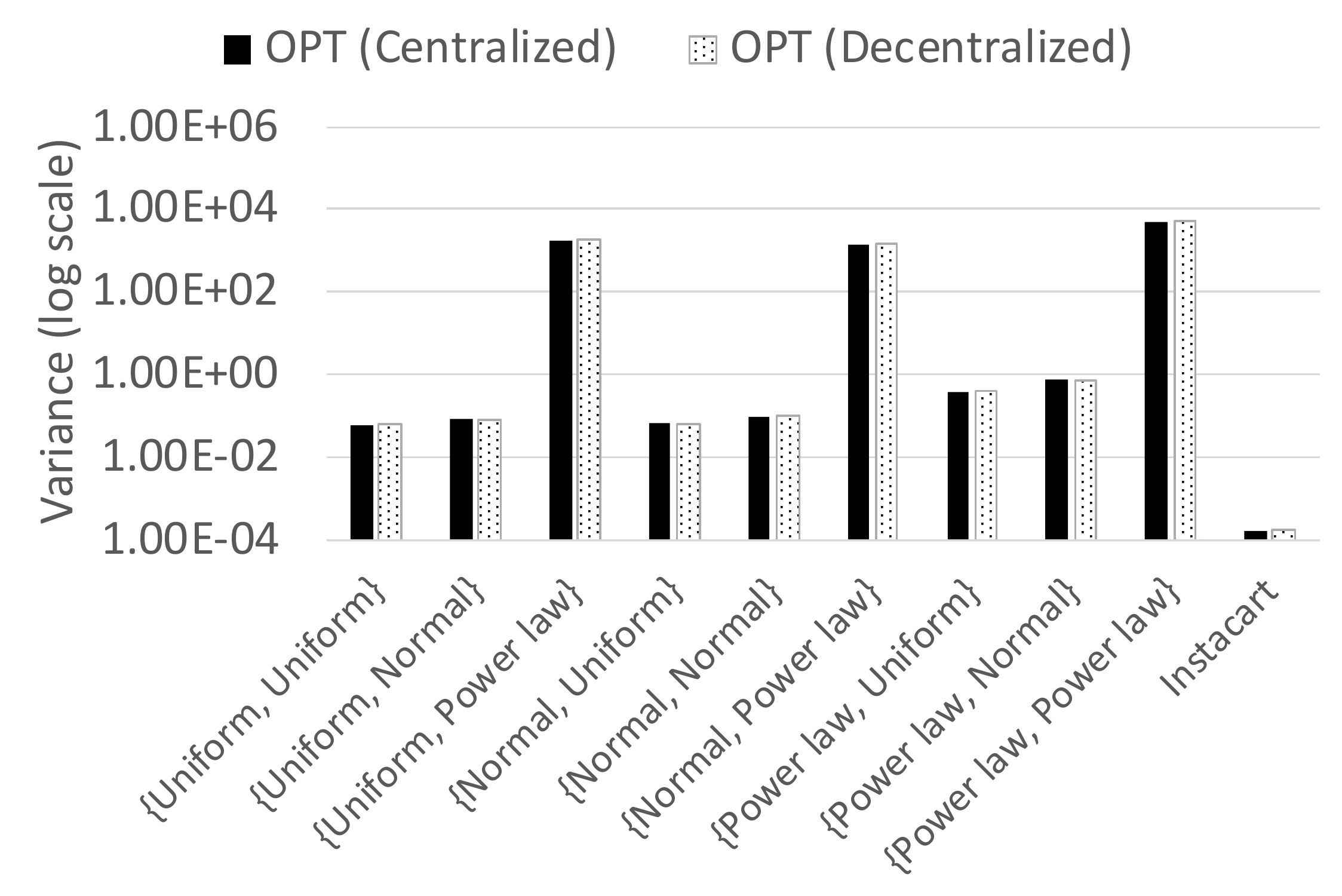}
    \inv\inv\inv
    \caption{\AVG}
    \label{fig:variance_synthetic_avg_dec}
  \end{subfigure}
   \inv\sinv
  \caption{\textbf{Variances of the query estimators for \OPT in the centralized and decentralized settings.}}
  \label{fig:variance_synthetic_dec}
\end{figure*}


\subsection{Join Approximation with Filters}
\label{sec:expr:filters}

\begin{figure*}[!ht]
  \centering
  \begin{subfigure}{0.23\textwidth}
    \includegraphics[width=\textwidth]{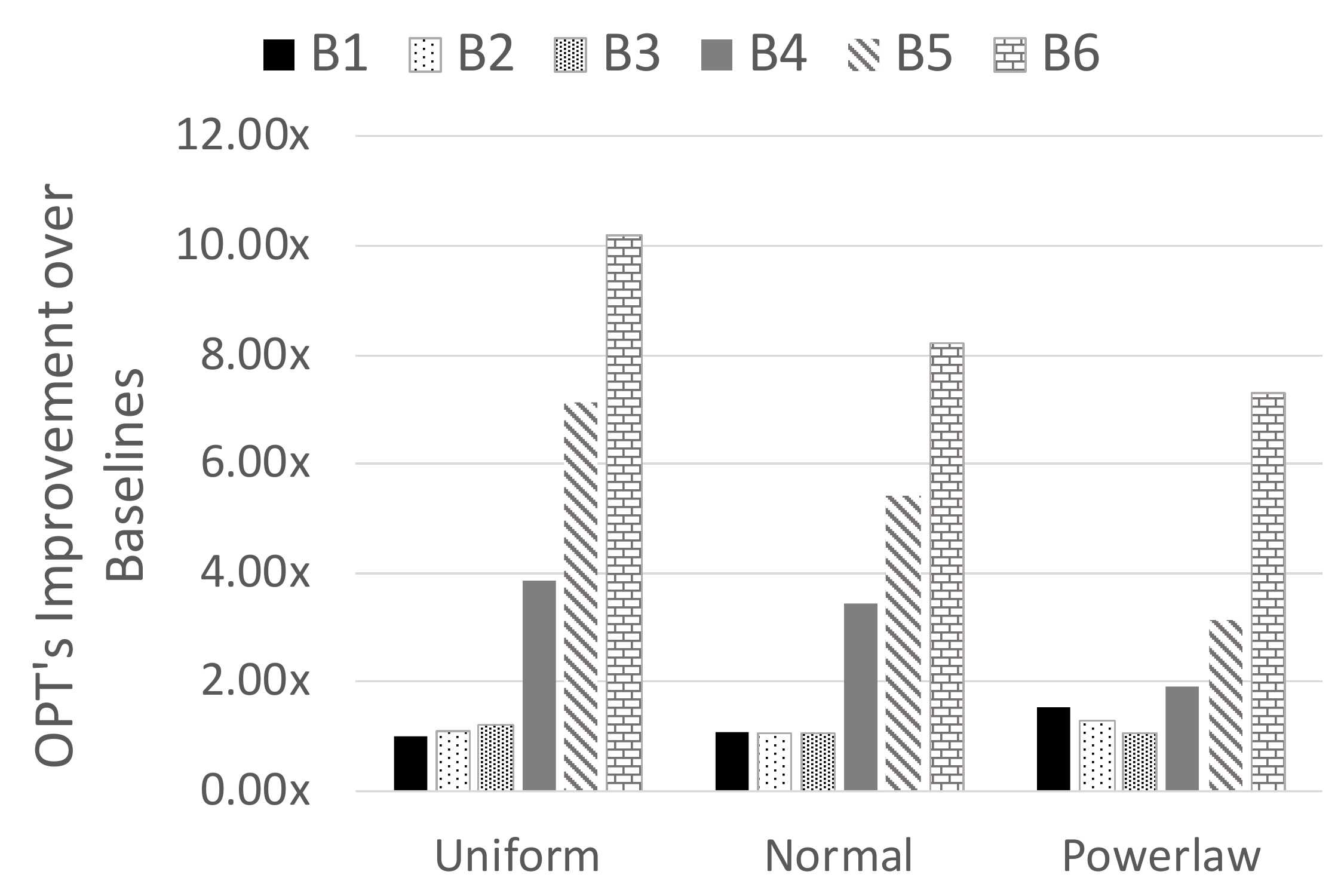}
    \inv\inv\inv
    \caption{\COUNT}
    \label{fig:filter_count}
  \end{subfigure}
  \begin{subfigure}{0.23\textwidth}
    \includegraphics[width=\textwidth]{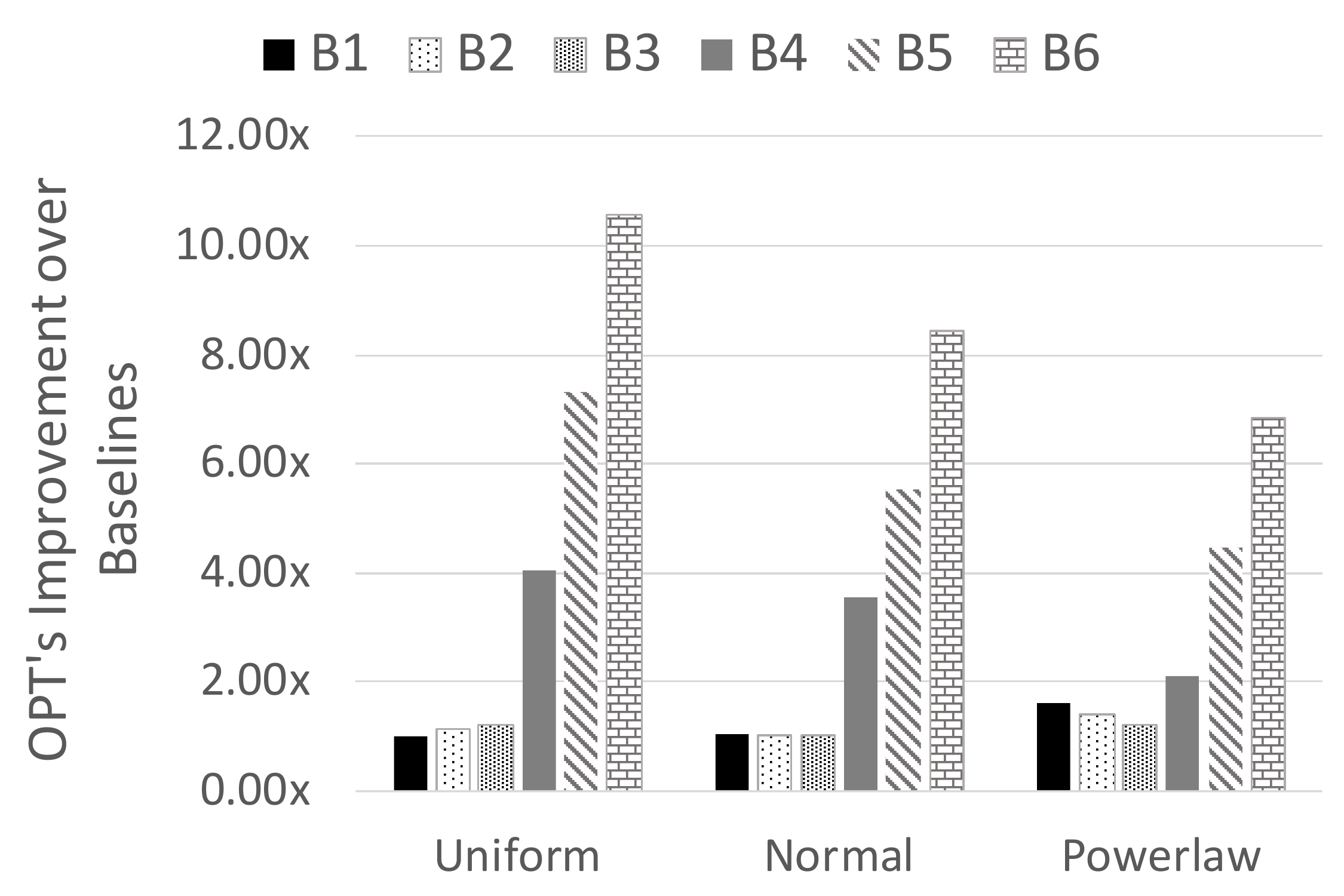}
    \inv\inv\inv
    \caption{\SUM}
    \label{fig:filter_sum}
  \end{subfigure}
  \begin{subfigure}{0.23\textwidth}
    \includegraphics[width=\textwidth]{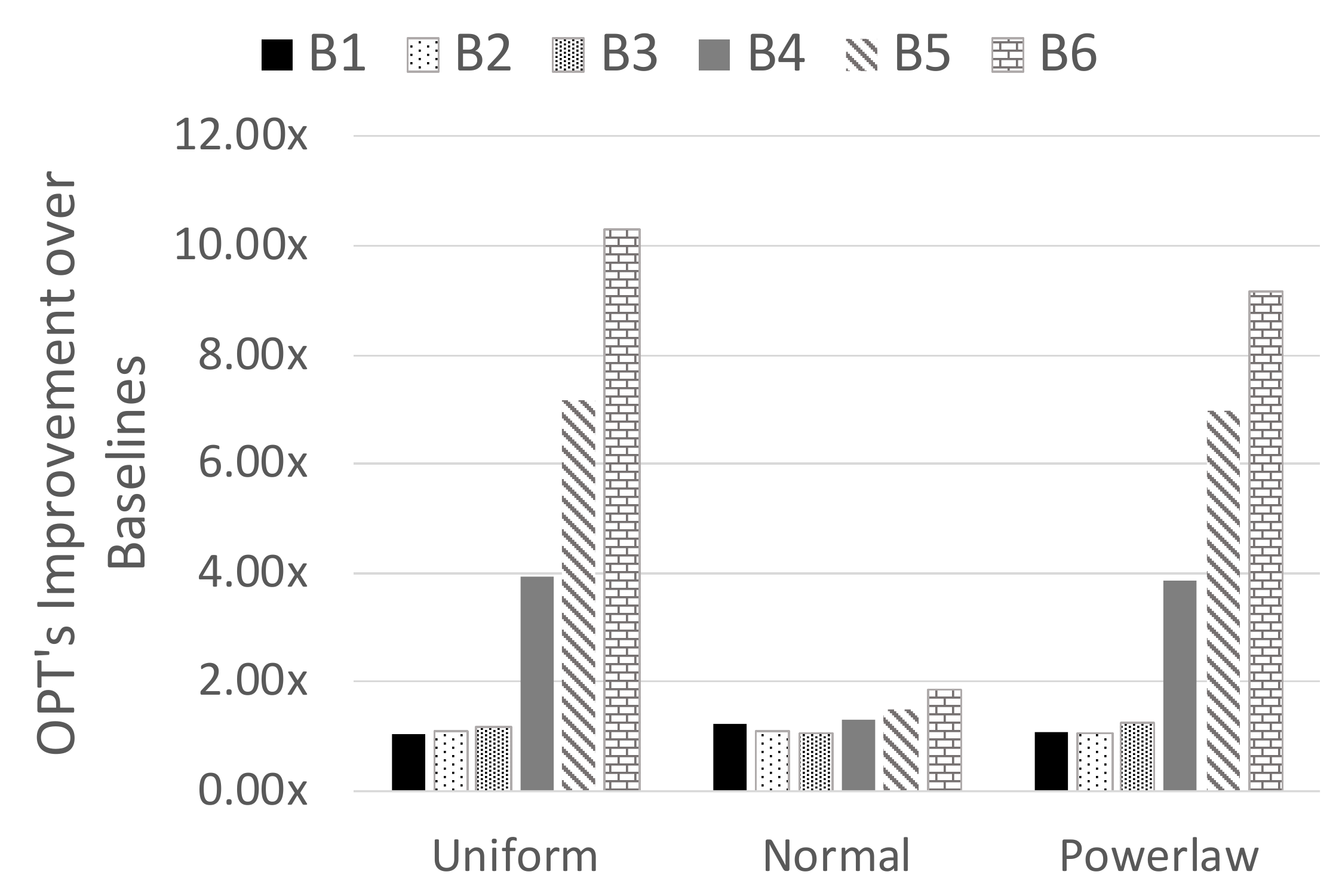}
    \inv\inv\inv
    \caption{\AVG}
    \label{fig:filter_avg}
  \end{subfigure}
  \begin{subfigure}{0.23\textwidth}
    \includegraphics[width=\textwidth]{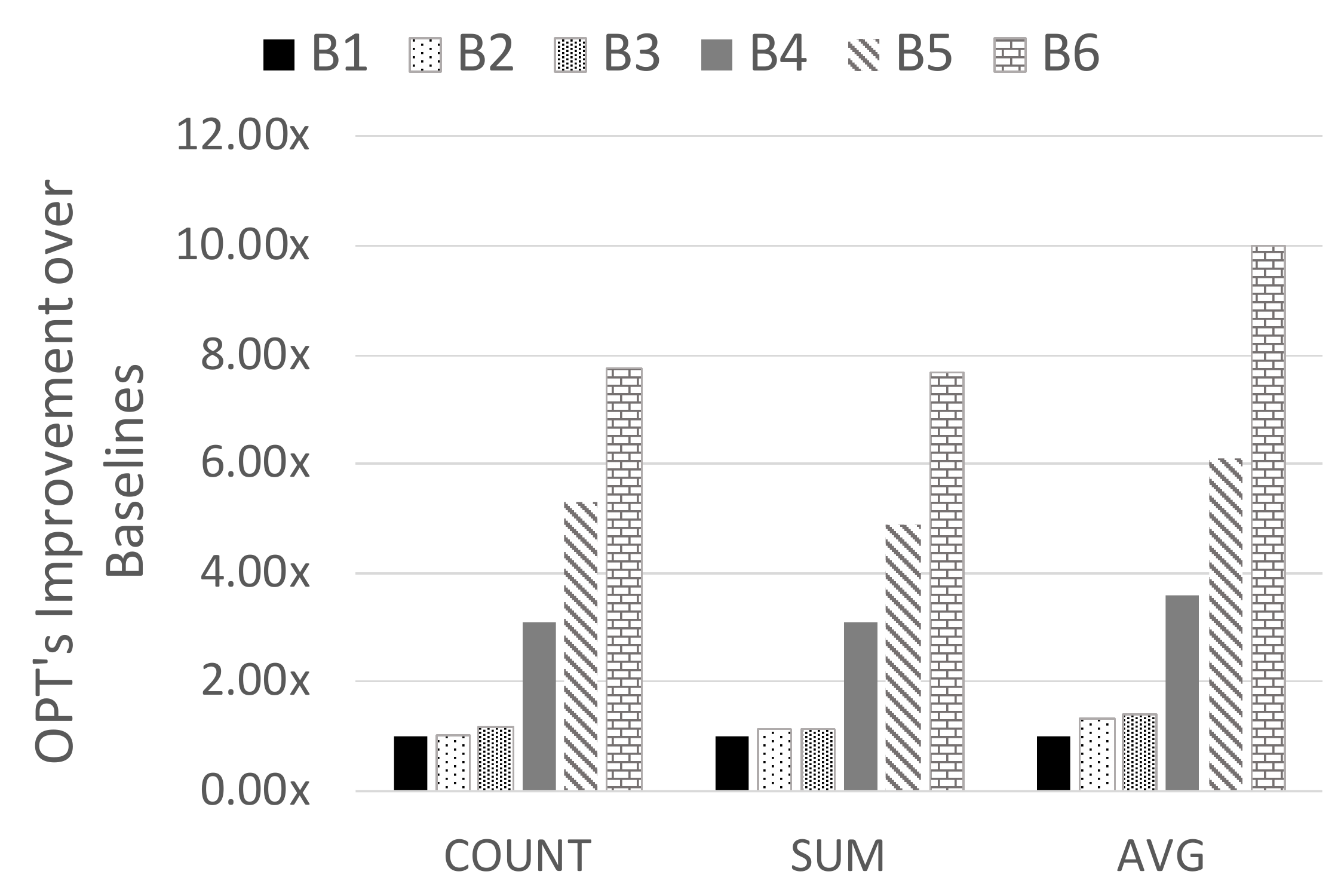}
    \inv\inv\inv
    \caption{\instacart}
    \label{fig:filter_instacart}
  \end{subfigure}
   \inv\sinv
  \caption{\textbf{\OPT's improvement in terms of the estimator's variance over six baselines in the presence of filters.}}
  \label{fig:filter_var}
  \inv
\end{figure*}

\begin{table}[t]
\caption{\textbf{Optimal sampling parameters  for \syn{uniform}{uniform} for different distributions of the filtered column $C$.}}
\inv
\centering
\begin{tabular}{c|p{0.6cm}p{0.7cm}|p{0.6cm}p{0.7cm}|p{0.6cm}p{0.7cm}}
\hline
Dist. of $C$ &  \multicolumn{2}{|c|}{\COUNT} & \multicolumn{2}{|c|}{\SUM} & \multicolumn{2}{|c}{\AVG} \\
\hline
{}   & $p$   & $q$    & $p$   & $q$ & $p$ & $q$ \\
Uniform &  0.010 & 1.000   & 0.010  & 1.000 & 0.010 & 1.000 \\
Normal   &  0.018 & 0.555   & 0.015  & 0.648 & 0.010 & 1.000 \\
Power law   &  0.051 & 0.195   & 0.050  & 0.201 & 0.010 & 1.000 \\
\hline
\end{tabular}
\inv
\label{tab:sampling_rate_filter}
\end{table}

To study \OPT's effectiveness in the presence of filters, we 
	 used \syn{uniform}{uniform} and \instacart datasets with $\epsilon$$=$$0.01$.
We added an extra column $C$ to $T_1$ in \syn{uniform}{uniform}, with  
 integers  in $[1, 100]$, and tried three distributions (uniform, normal, power law).
For \instacart,
we used the \textit{order\_hour\_of\_day} column for filtering, which had an almost normal distribution. 
We used an equality operator and  chose the comparison value $x$ uniformly at random. We calculated the average variance over all possible values of $c$.

Table~\ref{tab:sampling_rate_filter} shows the  sampling rates chosen by \OPT, while
Figure~\ref{fig:filter_var} shows \OPT's improvement over baselines in terms of  average variance. 
Again, \OPT  successfully achieved the lowest average variance among all baselines in all cases, up to 10x improvement compared to the worst baseline. 
This experiment confirms that \UBS with \OPT is highly effective for join approximation, even in the presence of filters.

\subsection{Combining Samples}
\label{sec:expr:combined}

\begin{table}[t]
\caption{\textbf{Sampling parameters ($p$ and $q$) of \OPT using individual samples for different aggregates versus a combined sample (\syn{normal}{normal} dataset).}}
\inv
\centering
\begin{tabular}{c|p{0.45cm}p{0.65cm}|p{0.45cm}p{0.65cm}|p{0.45cm}p{0.65cm}}
\hline
Scheme &  \multicolumn{2}{|c|}{\COUNT} & \multicolumn{2}{|c|}{\SUM} & \multicolumn{2}{|c}{\AVG} \\
\hline
{}   & $p$   & $q$    & $p$   & $q$ & $p$ & $q$ \\
\OPT (individual)  &  0.145 & 0.069   & 0.125  & 0.080 & 0.010 & 1.000 \\
\OPT (combined)   &  0.133 & 0.075   & 0.133  & 0.075 & 0.133 & 0.075 \\
\hline
\end{tabular}
\label{tab:sampling_rate_combined}
\inv\inv
\end{table}

\begin{figure}[ht]
	\inv\inv
	\centering
	\includegraphics[width=0.35\textwidth]{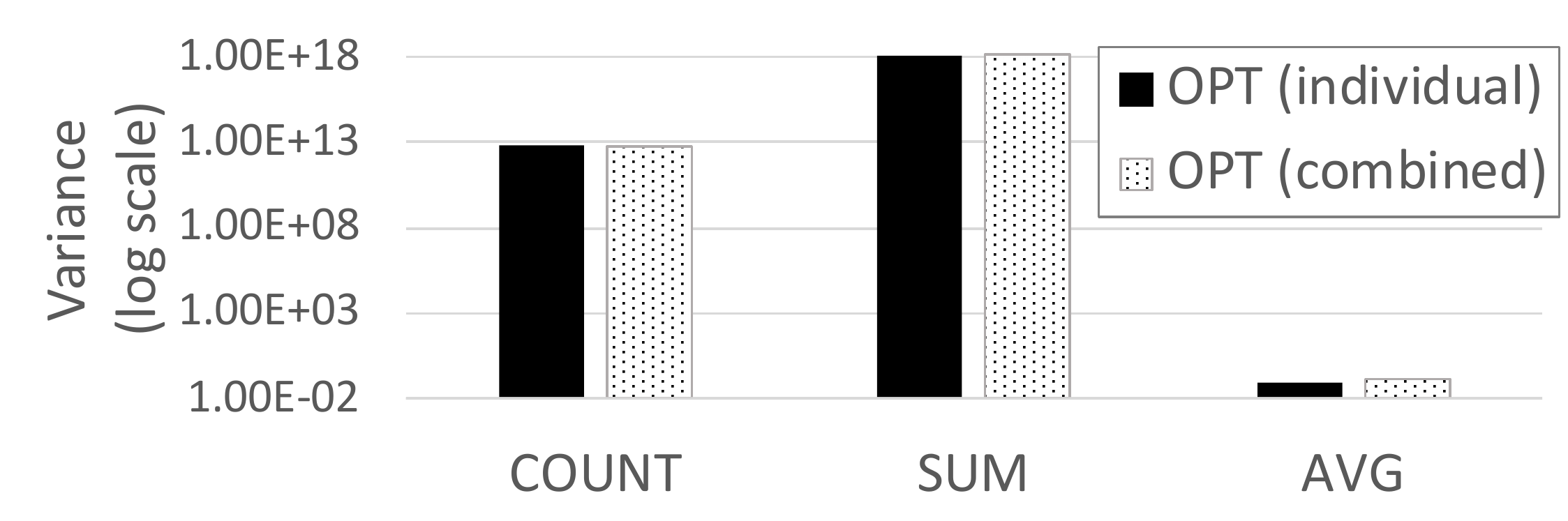}
	\inv
	\caption{\textbf{Variance of the query estimators for \OPT (individual) and \OPT (combined) for the \syn{normal}{normal} dataset.}}
	\inv
	\label{fig:variance_combined}
    \inv
\end{figure}

We evaluated the idea of using a single sample for multiple queries instead of generating individual samples for each query, as discussed in Section~\ref{sec:one-sample}.
Here, we use \OPT (individual) and \OPT (combined)  
	to denote the use of one-sample-per-query and one-sample-for-multiple-queries,
	respectively.
 For \OPT (combined), 
	we considered a scenario where each of \COUNT, \SUM, and \AVG is equally likely to appear. 
Table~\ref{tab:sampling_rate_combined} reports the sampling rates chosen in each case.
\ignore{Here, \OPT (combined) chose $p$ and $q$ that lie between the optimal sampling parameters for \COUNT and \SUM in \OPT (individual).
This is because the variance of an estimator increases much less by deviating from \AVG than the other two aggregates.}
 As shown in Figure~\ref{fig:variance_combined}, 
 without having to generate an individual sample for each query, 
the variances of \OPT (combined)  were only slightly higher than those of \OPT (individual).
This experiment shows that it is possible to create a single sample for multiple queries    without sacrificing too much optimality.

\subsection{Stratified Sampling}
\label{sec:expr:stratified}

\begin{figure*}[!ht]
\inv\inv
  \centering
  \begin{subfigure}{0.3\textwidth}
    \includegraphics[width=\textwidth]{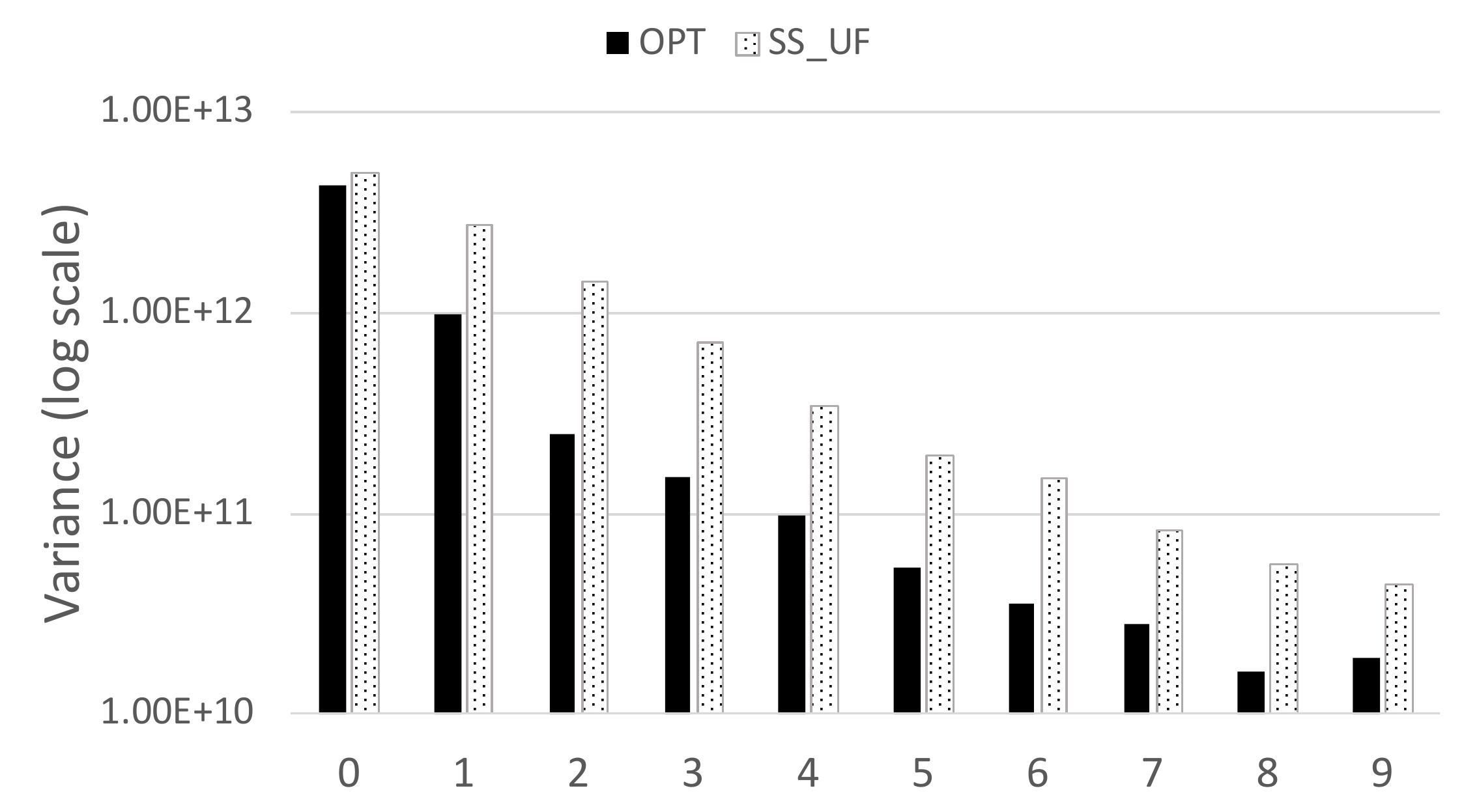}
    \inv\inv\inv
    \caption{\COUNT}
    \label{fig:stratified_count}
  \end{subfigure}
  \begin{subfigure}{0.3\textwidth}
    \includegraphics[width=\textwidth]{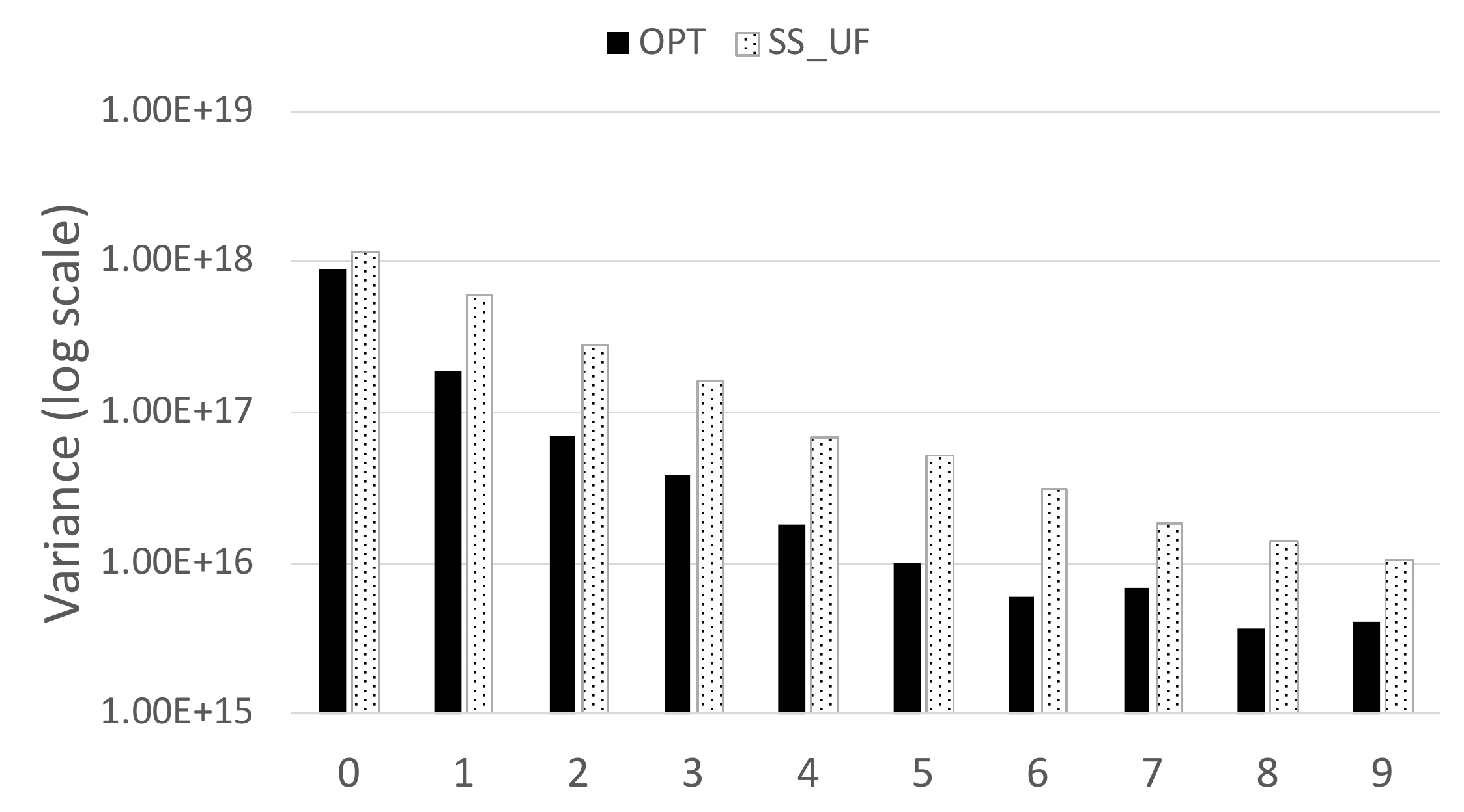}
    \inv\inv\inv
    \caption{\SUM}
    \label{fig:stratified_sum}
  \end{subfigure}
  \begin{subfigure}{0.3\textwidth}
    \includegraphics[width=\textwidth]{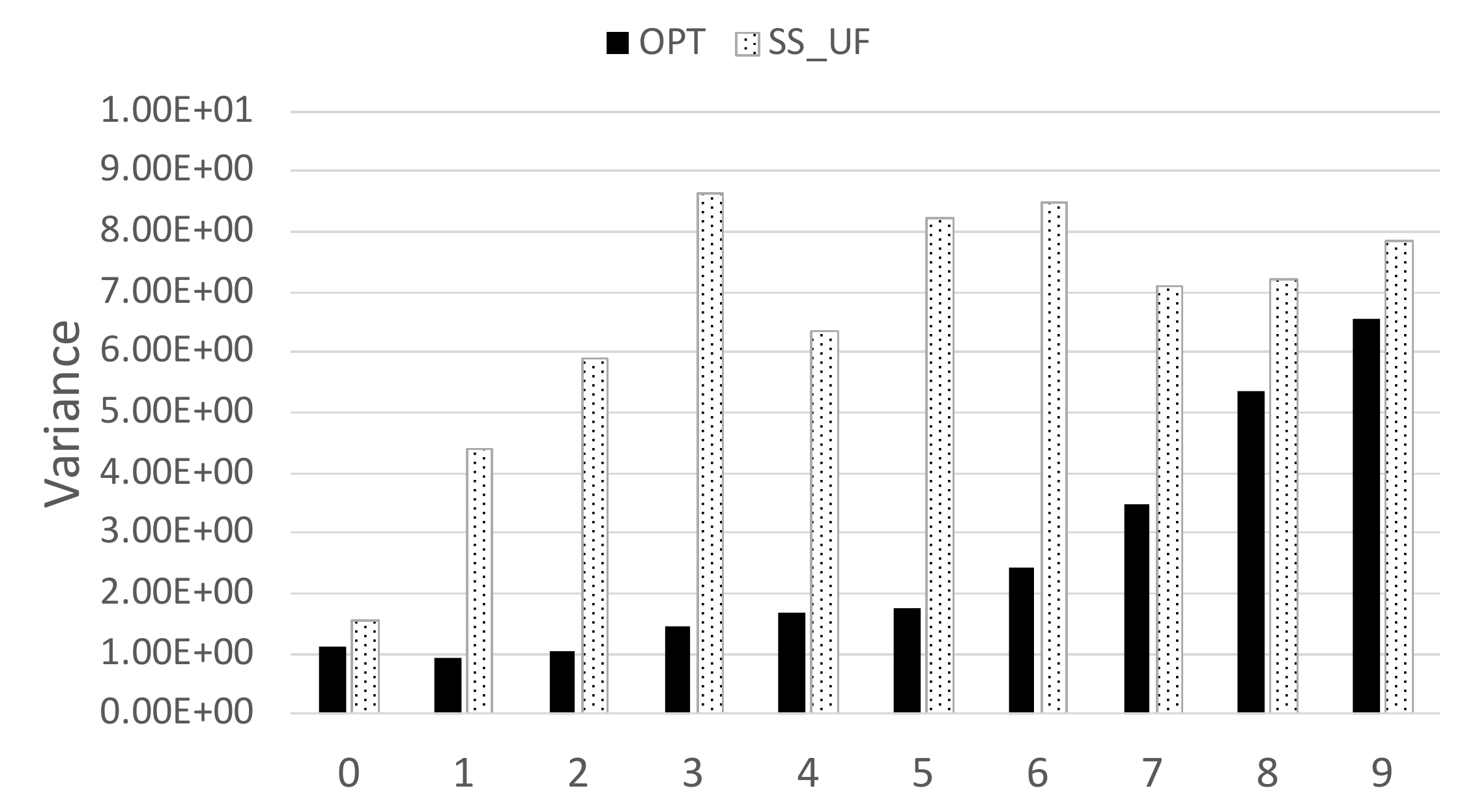}
    \inv\inv\inv
    \caption{\AVG}
    \label{fig:stratified_avg}
  \end{subfigure}
   \inv\sinv
  \caption{\textbf{Query estimator variance per group for 
  for a group-by join aggregate using
  \SUBS versus \SSUF.}}
  \label{fig:strafitied_var}
\end{figure*}

We also evaluated \SUBS for join queries with group-by.
Here, we used the \syn{normal}{normal} dataset, and added an extra group column $G$ to $T_1$ with integers from 0 to 9 drawn from a power law distribution with $\alpha = 1.5$. 
This time we did not randomize the groups, i.e., $G$=$0$ had the most tuples and 
$G$=$9$ had the fewest. 
This was to study SUBS performance with respect to the different group sizes.
As a baseline, we generated stratified samples for $T_1$ on $G$ \rev{with $\kkey = 100,000$}
 and uniform samples for $T_2$ with a $0.01$ sampling budget.
We denote this baseline as \SSUF. 
  For \SUBS, we used  parameters that matched the sample size of \SSUF, \ie, $\kkey = 100, \ktup = 100,000$.
Figure~\ref{fig:strafitied_var} shows the variance of query estimators for each of the 10 groups for different aggregations.
As expected,
\SUBS with \OPT achieved lower variances than \SSUF
across all aggregates and groups with different sizes.

\subsection{Overhead: Centralized vs.~Decentralized}
\label{sec:expr:sample_creation}

\begin{figure}[ht]
		\inv
	\centering
	\begin{subfigure}{0.225\textwidth}
    \includegraphics[width=\textwidth]{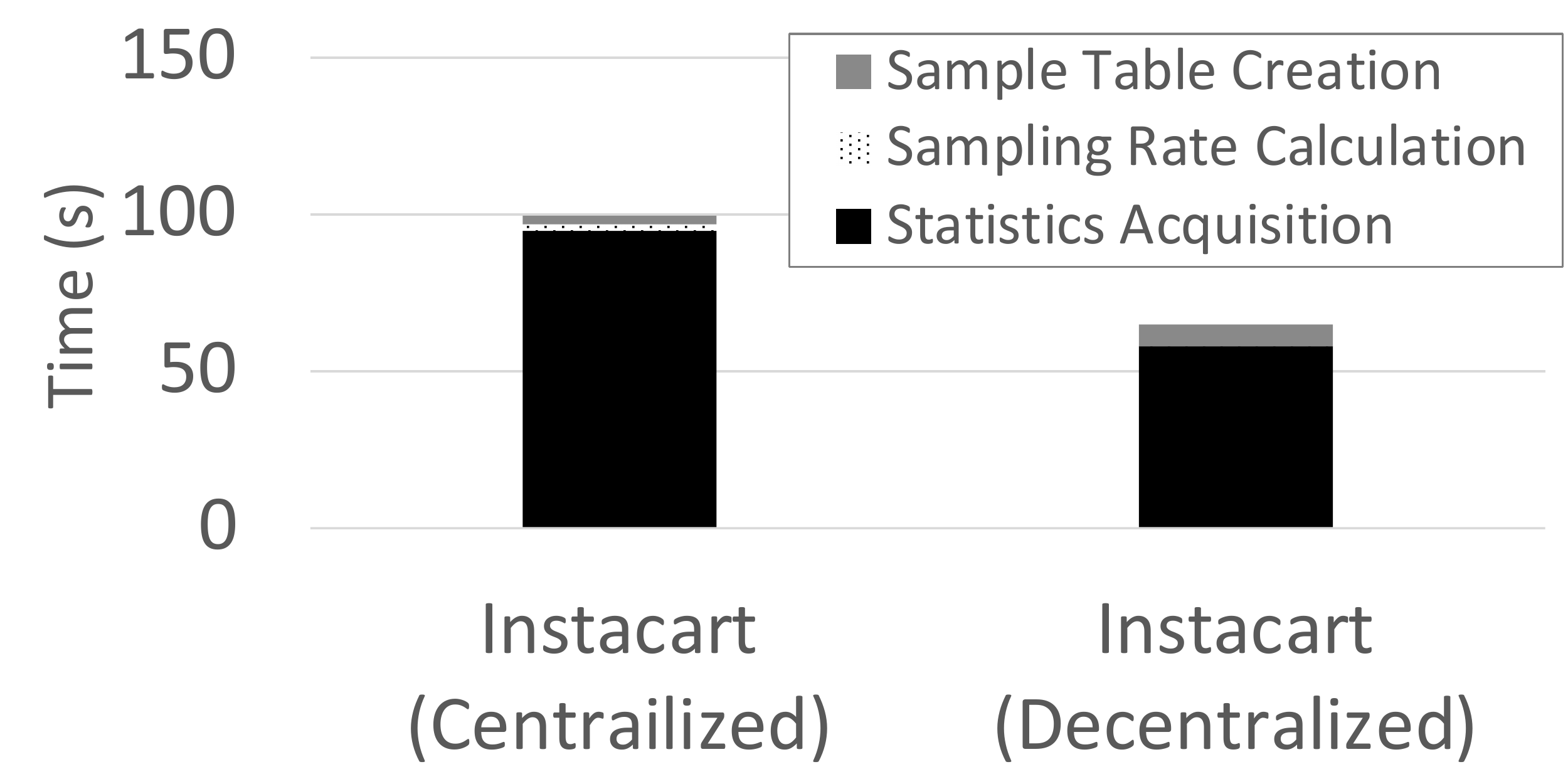}
    \inv\inv
    \caption{\instacart}
    \label{fig:instacart_gen_time}
  \end{subfigure}
  \begin{subfigure}{0.225\textwidth}
    \includegraphics[width=\textwidth]{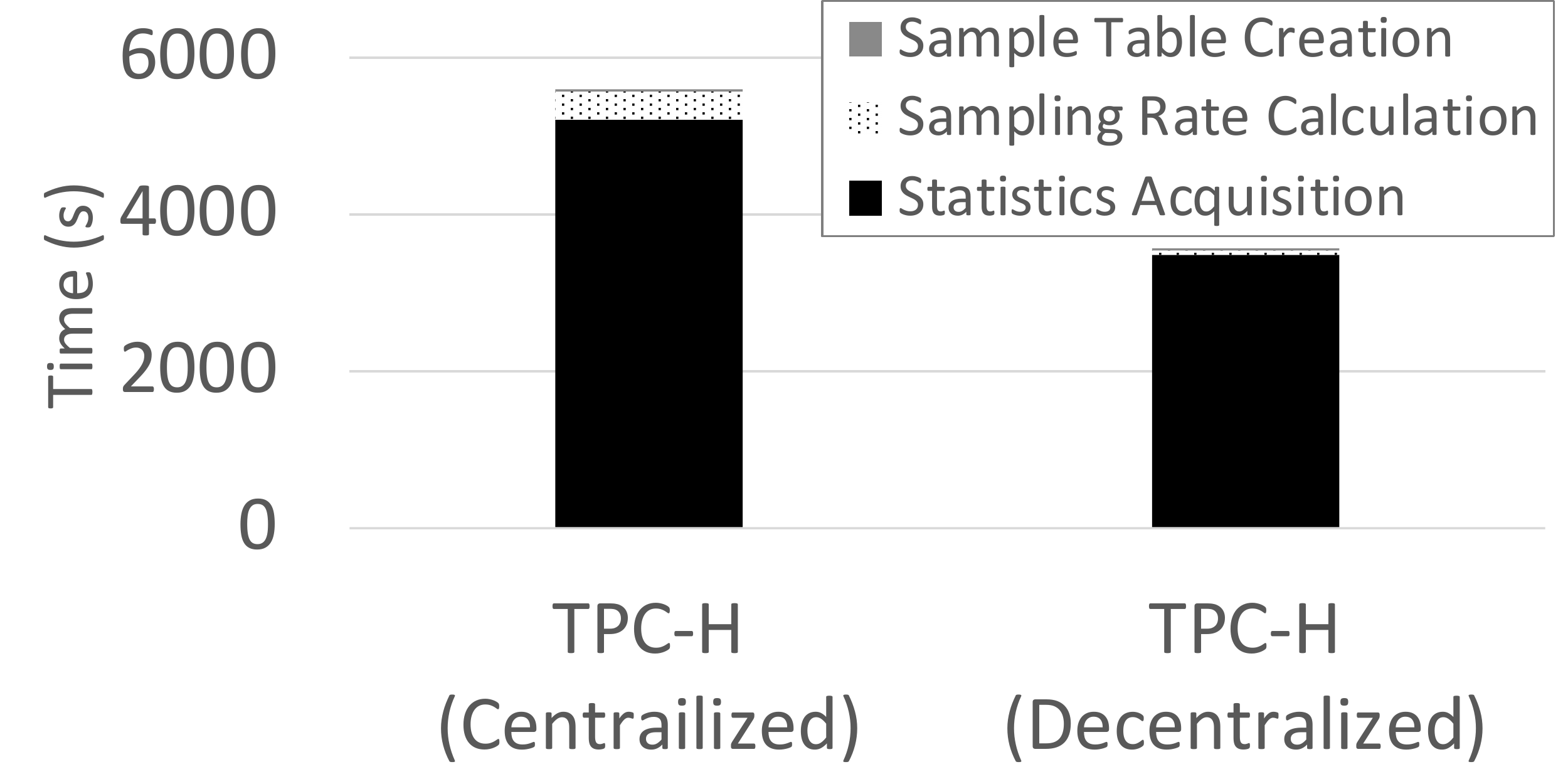}
    \inv\inv
    \caption{\tpch}
    \label{fig:tpch_gen_time}
  \end{subfigure}
	\inv
	\caption{\textbf{Time taken to generate samples for \instacart and \tpch in centralized vs. decentralized setting.}}
	\label{fig:sample_gen_time}
\end{figure}

\begin{figure}[t]
		\inv\inv
	\centering
	\begin{subfigure}{0.225\textwidth}
    \includegraphics[width=\textwidth]{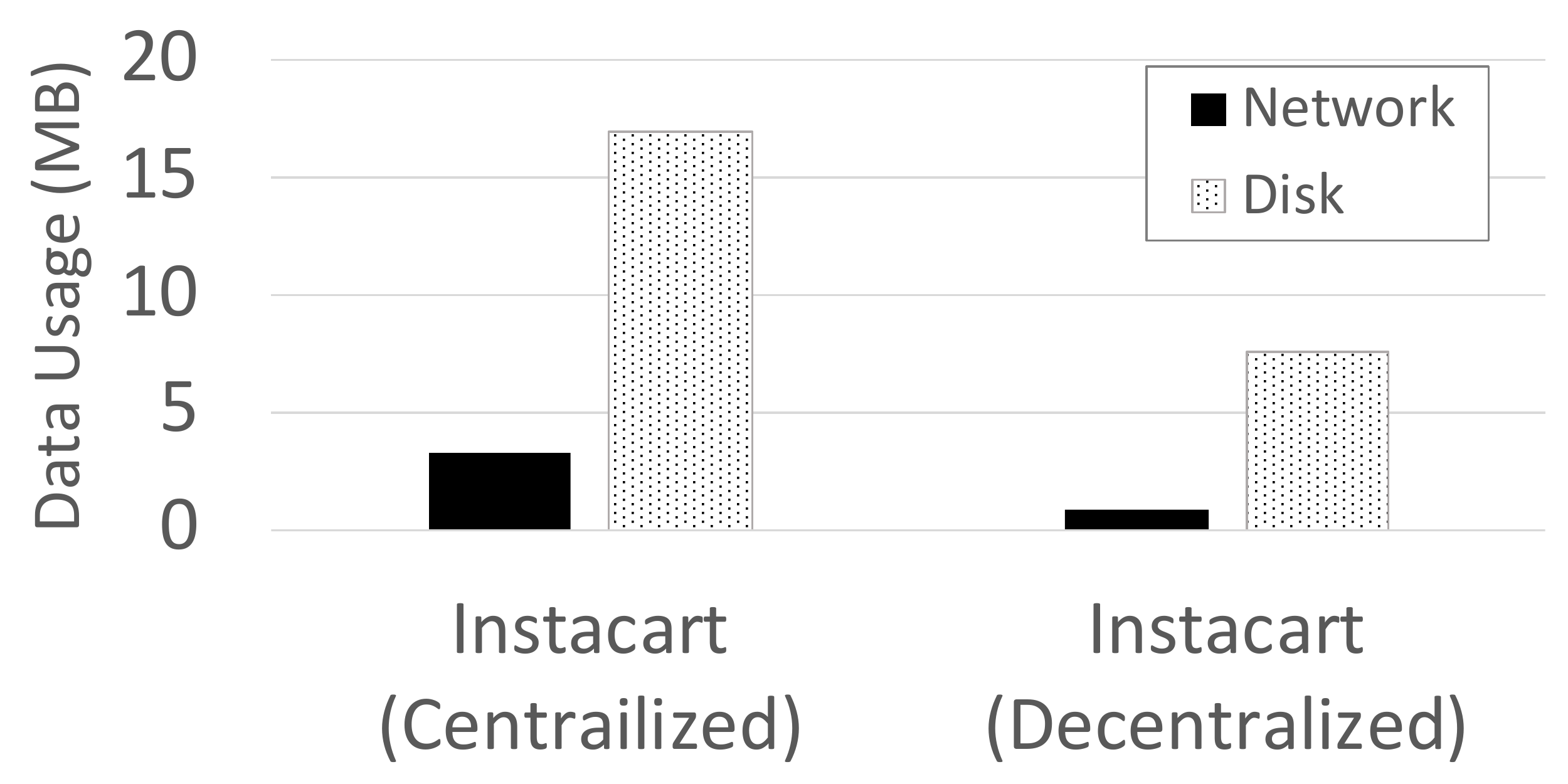}
    \inv\inv
    \caption{\instacart}
    \label{fig:instacart_gen_resource}
  \end{subfigure}
  \begin{subfigure}{0.225\textwidth}
    \includegraphics[width=\textwidth]{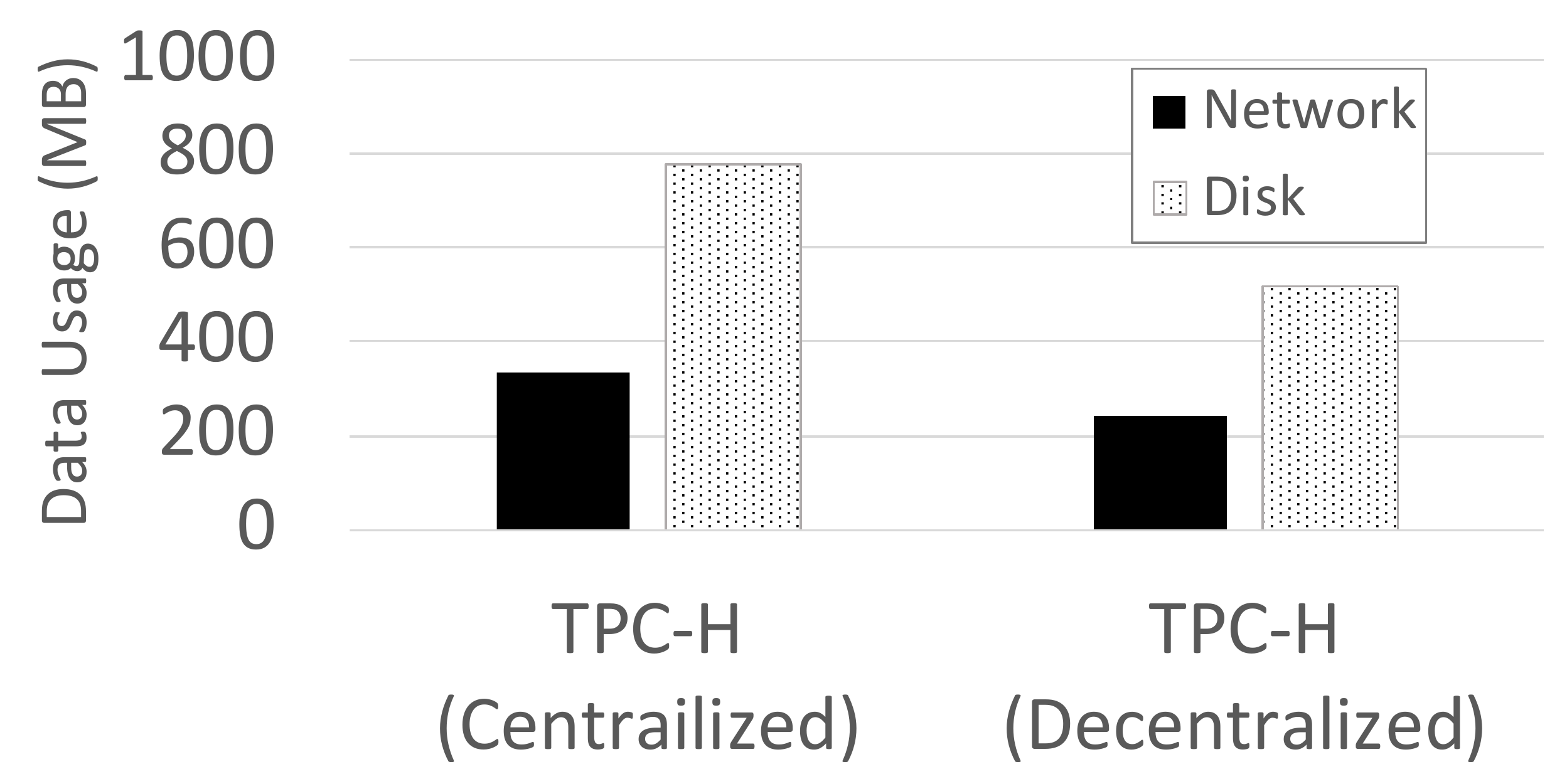}
    \inv\inv
    \caption{\tpch}
    \label{fig:tpch_gen_resource}
  \end{subfigure}
	\inv
	\caption{\textbf{Total network and disk bandwidth used to generate samples for \instacart and \tpch.}}
	\label{fig:sample_gen_resource}
	\inv\inv
\end{figure}

We compared the
    overhead of \OPT in centralized versus decentralized 
    settings, in terms of the sample creation time and resources, such as network and disk. 
\OPT should have  a much higher overhead 
in the centralized setting, 
as  it requires full frequency information of  every join key value in both tables. 
To quantify their overhead difference, 
we used \instacart and \tpch, and created a pair of samples for \SUM in each case. Here,
    the aggregation type did not matter, as the time spent calculating $p$ and $q$ was negligible compared to the time taken by transmitting the frequency vectors.

As shown in Figure~\ref{fig:sample_gen_time}, we measured the time for statistics acquisition, sampling rate calculation, and sample table creation.
Here, the time taken by collecting the frequencies was the dominant factor.
For \instacart, it took 65.16 secs from start to finish in the decentralized setting, compared to  99.98 secs in the centralized setting, showing 1.53x improvement in time.
For \tpch, it took  59.5 min in the decentralized setting, compared to 
91.7 mins in the centralized, 
showing a speedup of 1.54x.

We also measured the total network and disk I/O usage across the entire cluster, as shown in 
Figure~\ref{fig:sample_gen_resource}.
For \instacart,
compared to the decentralized setting, the centralized one used 3.66x ($0.9 \rightarrow 3.29$ MB) more network and 2.22x ($7.59 \rightarrow 16.9$ MB) more disk bandwidth.
Overall, the overhead was less for \tpch.
The centralized in this case  used 1.38x ($243.39 \rightarrow 337.04$ MB) more network and 1.49x ($519.03 \rightarrow 776.58$ MB) more disk bandwidth  than the decentralized setting.

This experiment shows the graceful tradeoff between the optimality of sampling and its overhead, making the decentralized variant an attractive choice for large datasets and distributed systems.

\iftechreport

\subsection{UBS vs. Two-Level Sampling}
\label{sec:expr:2lv}

Two-level sampling (\TLS)~\cite{two-level-sampling}     is similar to our UBS scheme in that it also applies stratified sampling before Bernoulli sampling. However, unlike UBS which 
    applies the same sampling rate to all tuples,
      \TLS uses a different universe
      sampling rate for each join key, i.e.,
    \TLS is strictly more expressive than UBS.
Thus, by using significantly more parameters (i.e., number of distinct join keys), 
    \TLS should be able to achieve 
    a lower 
        variance than any UBS scheme (which uses only two parameters $p$ and $q$).
To empirically measure this gap,
    we compared the relative error of \TLS
        versus \OPT.
Since \TLS is originally designed for cardinality estimation, 
we compared their \COUNT error on our \synthetic datasets (since \COUNT is symmetric, we only have 6 combinations), 
with 100K tuples, 100 keys, and $\epsilon$$=$$ 1\%$ budget.
Here, we used 
As shown in Figure~\ref{fig:opt-2lv}, 
    \TLS's error was slightly lower than \OPT,
        and both errors increased with the skew  in the join keys, e.g., when a power law was involved. 
This was as expected: having a separate parameter for each join key means more complexity, but also allows \TLS to better adopt to the distribution of the data.

\begin{figure}[t]
    \inv
	\centering
    \includegraphics[width=0.45\textwidth]{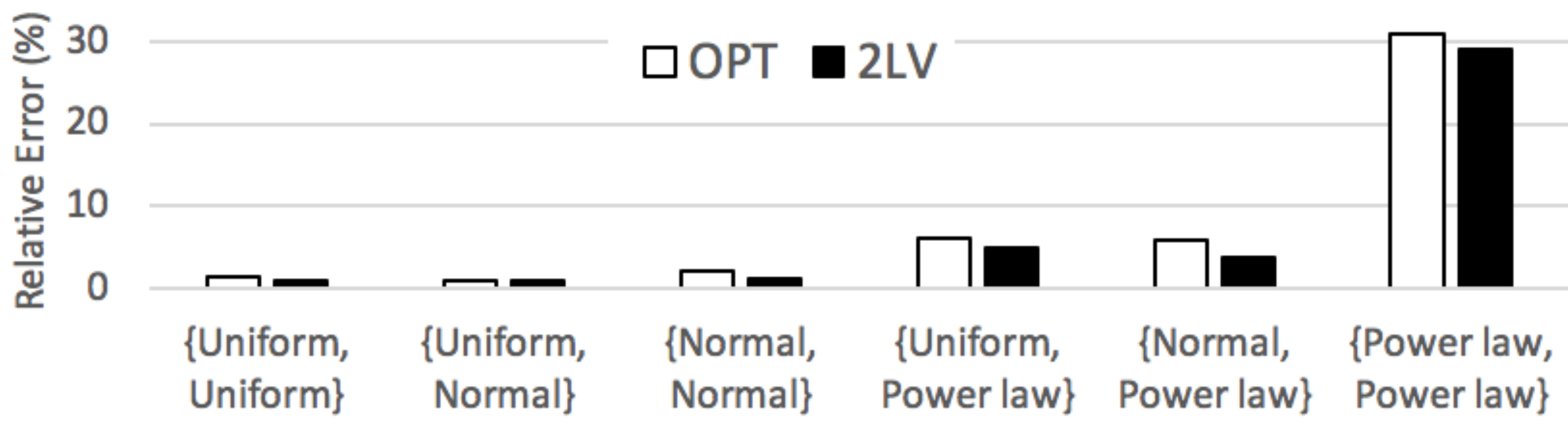}
    \inv
\caption{\textbf{Optimal UBS vs. two-level sampling~\cite{two-level-sampling}.}}
	\label{fig:opt-2lv}
	\inv\inv
\end{figure}

\fi


\section{Related Work}
\label{sec:related}

\ph{Online Sample-based Join Approximation}
 Ripple Join~\cite{ripple_join, ripple_join_scalable} is an online join algorithm that operates under the assumption that the tuples of the original tables are processed in a random order.
Each time, it retrieves a random tuple (or a set of random tuples) from the tables,
and then joins the new tuples with the previously read tuples and with each other.
\rev{SMS~\cite{disk-based-join} 
speeds up the hashed version of Ripple Join
 when hash tables exceed memory.}
 Wander Join~\cite{wander-join} tackles the problem of $k$-way chain join 
and eliminates the random order requirement of Ripple Join. However, it requires an index on every join  column in each of the tables.
 Using indexes, Wander Join performs a set of random walks 
 and obtains a \emph{non-uniform} but independent sample of the join.
Maintaining an approximation of the size of all partial joins can help overcome the
 non-uniformity problem~\cite{wander-join-2018, wander-join-xdb}.
 
\ignore{they address the problem of non-uniformity by maintaining an approximation of the size of all partial joins and using this information to bias their random walks, thereby yielding an independent sample that is close to uniform. 
} 

\ph{Offline Sample-based Join Approximation} 
AQUA~\cite{aqua1} acknowledges the quadratic reduction and the non-uniformity of the output when joining two uniform random samples. 
The same authors propose \emph{Join Synopsis}~\cite{join_synopses}, 
	which computes a sample of one of the tables and joins it with the other tables as a sample of the actual join. 
Chaudhuri et al.~\cite{random-sampling-joins} also point out that a join of independent samples from two relations does not yield an
    independent sample of their join, and
  propose using precomputed statistics to overcome this problem.
\rev{However, their solution 
can be quite costly, as it
requires collecting full frequency information of the relation.}
Zhao et al.~\cite{wander-join-2018} \rev{provide}
	a better trade-off between sampling efficiency and the join size upper bound.
Hashed sampling \rev{(a.k.a. universe)} \cite{hashed-samples} is proposed
			in the context of selectivity estimation for set similarity queries.
 Block-level uniform sampling~\cite{block-level-sampling} is less accurate but  more efficient than tuple-level sampling. 
Bi-level sampling~\cite{bi-level-sampling-ola, bi-level-sampling} performs
Bernoulli sampling at both the block- and tuple-level,
	as a trade-off between accuracy and I/O cost of sample generation.
Kamat and Nandi~\cite{Kamat017} use simple stratified sampling on join column but with an objective function measuring the amount of randomness of the sample scheme, which shows improvement over simple correlated sampling.

\ph{AQP Systems on Join}
Most AQP systems rely on sampling and support certain types of joins~\cite{kandula2016quickr, mozafari_sigmod2018_verdict, aqua1, surajit-optimized-stratified, mozafari_eurosys2013, ganti2000icicles, pf-ola, wander-join-xdb}.
STRAT~\cite{surajit-optimized-stratified} discusses the use of uniform and stratified sampling, and how those can support certain types of join queries. 
More specifically, STRAT only supports PK-FK joins between a fact table and one or more dimension table(s).
BlinkDB~\cite{mozafari_eurosys2013} extends STRAT and considers multiple stratified samples instead of a single one. 
 As previously mentioned, AQUA~\cite{aqua1} supports foreign key joins using join synopses.
\textit{Icicles}~\cite{ganti2000icicles}   samples   tuples that are more likely to be required by future queries, but, similar to AQUA,  only supports foreign key joins.
PF-OLA~\cite{pf-ola} is a framework for parallel online aggregation. 
It studies  parallel joins with group-bys, when partitions of the two tables fit in memory.
XDB~\cite{wander-join-xdb} integrates Wander Join in PostgreSQL. 
  Quickr~\cite{kandula2016quickr} does not create offline samples.
Instead, it uses universe sampling to support equi-joins, where the group-by columns and the value of aggregates are not correlated with the join keys.
VerdictDB~\cite{mozafari_sigmod2018_verdict} is a universal AQP framework that supports all three types of samples (uniform, universe, and stratified).
\tempcut{VerdictDB utilizes a technique called \textit{variational subsampling},
	which creates subsamples of the sample such that it only requires a single 
	join---instead of repeatedly joining the subsamples multiple times---to produce 				accurate aggregate approximations.}
ApproxJoin~\cite{QuocABBCFS18} uses Bloom Filters in conjunction with stratified sampling to efficiently produce a sample to the join when relations are distributed across different nodes. 

\ph{Join Cardinality Estimation}
There is extensive work on join cardinality estimation (\ie,\texttt{count(*)}) 
	in the database  community~\cite{VengerovMZC15, pitoura2008self, sampling-re-optimization, EstanN06, alon2002tracking, swami1994estimation, done-right, learned-cardinalities} as an important step 
		of the query optimization process for joins.
 Two-level sampling~\cite{two-level-sampling} first applies universe sampling to the join values, and then, 
for each join value sampled, it performs Bernoulli sampling. However, unlike 
our UBS scheme which applies the same rate to all keys, two-level sampling 
    uses a different rate during its universe 
    sampling for each  join key.
In other words, 
    two-level sampling is a more complex 
        scheme with significantly more parameters than UBS (which requires only two parameters, $p$ and $q$), and is thus less amenable to efficient and decentralized implementation.
Furthermore, 
two-level sampling applies two different sampling methods, whereas bi-level sampling~\cite{bi-level-sampling} 
  uses only 
		Bernoulli sampling but at different granularity levels.
End-biased sampling~\cite{EstanN06} samples each tuple with a probability proportional to 	the frequency of its join key.
Index-based sampling~\cite{done-right} 
 and deep learning~\cite{learned-cardinalities} 
 have also been utilized 
to improve  cardinality estimates.


\ph{Theoretical Studies}
The question about the limitation of sample-based approximation of joins, to the best of our knowledge, has not been asked in the theory community.
 However, the past work in communication complexity on set intersection and inner product estimation has implications for join approximation. 
 In this problem, the Alice and Bob possess respectively two 
 vectors $x$ and $y$ and they wish to compute their inner product $t = \langle x, y \rangle$ without exchanging the vector $x$ and $y$. 
 \ignore{Consider two $d$-dimensional vectors $x$ and $y$, 
    each   
	owned by a different party, say Alice and Bob.
 In our context, we can think of $x$ and $y$ as frequency vectors of two tables and $\langle x, y \rangle$ as the size of their join.
Alice computes a summary of her vector, $\beta(x)$,
  Alice and Bob each compute a summary  
of their vector, $\beta(x)$ and $\beta(y)$,
 and a third party estimates  the inner product $\langle x, y \rangle$ using the information in $\beta(x)$ and $\beta(y)$.}
In the one-way model, 
Alice computes a summary $\beta(x)$ and sends it to Bob, who will estimate
 $\langle x, y \rangle$ using $y$ and $\beta(x)$.
For this problem, 
  \cite{min-wise-hashing} shows that any estimator produced by $s$ bits of communication has variance at least $\Omega(dt/s)$. 
Estimating inner product for $0,1$ vectors is directly related to estimating \SUM and \COUNT for a PK-FK join. A natural question is whether the join is still hard even if frequencies are all larger than $1$.  Further, the question of whether estimating \AVG is also hard is not answered by prior work.

\section{Conclusion}
\label{sec:conclusion}

Our goal in this paper was to
    improve our understanding of join approximation 
        using offline samples, and formally address  
    some of the key open questions faced 
    by practitioners using and building AQP engines. 
We defined generic sampling schemes that cover 
    the most common sampling strategies, as well as 
    as their combinations. 
Within these schemes, we (1) provided an information-theoretical lower bound on the lowest error achievable   by any offline sampling scheme,
 (2) derived optimal strategies that match this 
    lower bound within a constant factor,
    and (3) offered a decentralized variant 
    that requires minimal communication of statistics 
        across the network.
These results allow practitioners to 
        quickly determine---e.g., based on the distribution of the join columns---if joining offline samples
            will be futile or  will yield a reasonable accuracy. 
We also expect our hybrid samples to improve 
    the accuracy of
     database learning~\cite{mozafari_sigmod2017_dbl} and 
    selectivity estimation~\cite{quicksel_tr} for join queries.    
        
\section{Acknowledgement}

This research is in part supported by the National Science Foundation through grants 1553169 and 1629397.

\balance

\nocite{Kamat017,QuocABBCFS18}

\bibliographystyle{plain}
\bibliography{mozafari,approximate,ldb,dawei}

\iftechreport

\appendix
\section{Omitted Proofs}
\label{app:omitted_proof}

\worstexpectation*

\begin{proof}
We can simply consider two identical tables $T_1$, $T_2$ of $n$ tuples, each having join key $1, 2, \ldots, n$. Their join has size $n$.
Since each tuple of $T_1$ joins with exactly one tuple of $T_2$, the size of the join of the samples must not be larger than the size of sample of $T_1$.
Since, by assumption, the expected size of the sample of $T_1$ is at most $\alpha n$, the expected size of the join of the samples must also be at most $\alpha n$.
\end{proof}

\jcountvar*

\begin{proof}
Let $X_v$ and $Y_v$ be the random variable denoting the number of tuple in $S_1$ and $S_2$ with value $v$ given that $h(v) \leq \pmin$. Therefore, $X_v$ and $Y_v$ are  binomial random variables with parameter $(a_v, q_1)$ and $(b_v, q_2)$. Let $Z_v$ be defined as the number of tuples in $S_1 \bowtie S_2$ with join value $v$. By construction, we have:
\begin{align*}
    Z_v \overset{\mathrm{def}}{=} 
    \begin{cases}
    X_v Y_v & \mbox{with probability $p$} \\
    0 & \mbox{otherwise}
    \end{cases}
\end{align*}
And
\begin{equation*}
    \hat{J} = \frac{1}{pq_1q_2} \sum_{v \in \mathcal{U}} Z_v.
\end{equation*}

To analyze the variance of $Z_v$, we use the law of total variance: Let $W_v = X_vY_v$, we have
\begin{align*}
\var(Z_v) = \E[\var(Z_v \mid W_v) + \var(E[Z_v \mid W_v])
\end{align*}
We have
\begin{align*}
    E[\var(Z_v \mid W_v)] &= p(1-p) \E[W_v^2] \\
     &= p(1-p) (\var(W_v) + \E^2[W_v])
\end{align*}
And 
\begin{align*}
    \var(E[Z_v \mid W_v]) &= p^2\var(W_v)
\end{align*}
Combining the two terms we have
\begin{align*}
    \var(Z_v) = p \var(W_v) + p(1-p) \E^2[W_v]
\end{align*}
where
\begin{align*}
    \var(W_v) =&  \E[X_v^2] \E[Y_v^2] - \E^2[X_v] \E^2[Y_v] \\
              =& (a_vq_1 (a_vq_1 + 1 - q_1))(b_vq_2(b_vq_2 + 1 - q_2)) \\
              &- q_1^2q_2^2a_v^2b_v^2 \\
              =& a_v^2b_vq_1^2q_2(1 - q_2) + a_vb_v^2 q_1(1 - q_1) q_2^2 \\
              &+ a_vb_vq_1(1-q_1)q_2(1-q_2)
\end{align*}
and
\begin{align*}
    \E^2[W_v] = q_1^2q_2^2a_v^2b_v^2.
\end{align*}
Therefore, 
\begin{equation*}
\begin{split}
    \var[Z_v] &=  p(a_v^2b_vq_1^2q_2(1 - q_2) + a_vb_v^2 q_1(1 - q_1) q_2^2 \\
    & + a_vb_vq_1(1-q_1)q_2(1-q_2)) + p(1 - p)q_1^2q_2^2a_v^2b_v^2
\end{split}
\end{equation*}

Using our notation for frequency vectors and its moments. The variance of the join size estimator $\hat{J}$ can be written as:
\begin{align*} \label{eqn:var-count}
      &\var(\Jcount)\\
     =& \sum_{v \in U} \frac{1}{p^2q_1^2q_2^2}\var(Z_v) \\
     =&\frac{1-p}{p} \sum_v a_v^2 b_v^2 + \frac{1 - q_2}{pq_2} \sum_{v} a_v^2 b_v \\ 
     &+ \frac{1 - q_1}{pq_1} \sum_v a_v b_v^2  + \frac{(1-q_1)(1-q_2)}{pq_1q_2} \sum_v a_v b_v) 
\end{align*}
\end{proof}

\samerate*

\begin{proof}
Define $p_2$ and $q_2$ to be the universe and Bernoulli sampling rate for $T_2$ and $p = \min\{p_1, p_2\}$. For fixed $p_1, q_1$ and $\epsilon_2$, we write $\var[\Jcount]$ as a function of $p_2$. 

If $p_2 \geq p_1$, we have $p = p_1$:
\begin{align*}
     &\var(\Jcount) \\ 
    =& \frac{1-p}{p} \sum_v a_v^2 b_v^2 + \frac{1 - q_2}{pq_2} \sum_{v} a_v^2 b_v \\
    &+ \frac{1 - q_1}{pq_1} \sum_v a_v b_v^2  + \frac{(1-q_1)(1-q_2)}{pq_1q_2} \sum_v a_v b_v  \\
    = &\frac{1-p}{p} \sum_v a_v^2 b_v^2 + (1{\epsilon_2} - \frac{1}{p_2}) \sum_{v} a_v^2 b_v\\
    &+ \frac{1 - q_1}{q_1} \sum_v a_v b_v^2  + \frac{1-q_1}{q_1} (\frac{p_2}{\epsilon} - 1)\sum_v a_v b_v)
\end{align*}
which is nondecreasing in terms of $p_2$. Therefore, the minimum is attained when $p_2 = p_1$. Intuitively, increasing $p_2$ when $p_2 \geq p_1$ decreases the Bernoulli sampling rate without increasing the universe sampling rate over join, and Hence only decrease the quality of samples. 

When $p_2 \leq p_1$, $p = p_2$:
\begin{align*}
      &\var(\Jcount) \\
     =&\frac{1-p}{p} \sum_v a_v^2 b_v^2 + \frac{1 - q_2}{pq_2} \sum_{v} a_v^2 b_v \\
     &+ \frac{1 - q_1}{pq_1} \sum_v a_v b_v^2  + \frac{(1-q_1)(1-q_2)}{pq_1q_2} \sum_v a_v b_v \\
     =& (\frac{1}{p_2} - 1) \sum_v a_v^2 b_v^2 + (\frac{1}{\epsilon_2} - \frac{1}{p_2}) \sum_{v} a_v^2 b_v \\
     &+ \frac{1}{p_2}\frac{1 - q_1}{q_1} \sum_v a_vb_v^2 + \frac{1}{\epsilon_2}\frac{1 - q_1}{q_1} (1 - \frac{\epsilon_2}{p_2}) \sum_v a_vb_v \\
    =& \frac{1}{p_2} \sum_v(a_v^2b_v^2 - a_v^2b_v) + \frac{1}{p_2} \frac{1 - q_1}{q_1} \sum_v (a_vb_v^2 - a_vb_v)\\
      &- \sum_v a_v^2b_v^2 + \frac{1}{\epsilon_2} \sum_v a_v^2b_v - \frac{1}{\epsilon_2} \frac{1 - q_1}{q_1} \sum_v a_vb_v
\end{align*}
Since $0 < q_1 \geq 1$, and $a_v^2b_v^2 - a_v^2 b_v, a_v b_v^2 - a_vb_v \geq 0$ since both $a_v$ and $b_v$ are nonnegative integers. The variance function is nonincreasing in terms of $p_2$, thereby attains its minimum when $p_2 = p_1$.
\end{proof}

\thmvarcountcent*
\begin{proof}
This is simply obtained by plugging in $q_1 = \epsilon_1 / p$ and $q_2 = \epsilon_2 / p$ to Lemma~\ref{lem:jcount:var}.
\end{proof}



\optrate*

\begin{proof}
By Lemma~\ref{lem:same-rate}, the two table has equal universe sampling rate in the optimal sampling scheme. Thus we assume $p_1 = p_2 = p$ and $p$ is a real number between $\max\{\epsilon_1, \epsilon_2\}$ and $1$, and $q_1 = \epsilon_1 / p$ and $p_2 = \epsilon_2 / p$. Thus we have: 
\begin{align*}
     &\var(\Jcount) \\
    =&  \frac{1-p}{p} \sum_{v} a_v^2 b_v^2 + (\frac{1}{\epsilon_2} - \frac{1}{p}) \sum_v a_v^2 b_v + (\frac{1}{\epsilon_1} - \frac{1}{p}) \sum_v a_v b_v^2 \\
    &+ (\frac{p}{\epsilon_1\epsilon_2} - \frac{1}{\epsilon_1} - \frac{1}{\epsilon_2} - \frac{1}{p}) \sum_v a_v b_v \\
    =& \frac{1}{p}(\sum_v a_v^2 b_v^2 - \sum_v a_v^2 b_v - \sum_v a_v b_v^2 + \sum_v a_v b_v)\\
     & +\frac{p}{\epsilon_1 \epsilon_2} \sum_v a_v b_v
     - \sum_v a_v^2 b_v^2 + \frac{1}{\epsilon_2} \sum_v a_v^2 b_v \\
     &+ \frac{1}{\epsilon_1} \sum_v a_v b_v^2 - (\frac{1}{\epsilon_1} - \frac{1}{\epsilon_2}) \sum_v a_v b_v
\end{align*}

Notice only the first two terms $\frac{1}{p}(\sum_v a_v^2 b_v^2 - \sum_v a_v^2 b_v - \sum_v a_v b_v^2 + \sum_v a_v b_v) + \frac{1}{q} \frac{1}{\epsilon_1 \epsilon_2} \sum_v a_v b_v$ depends on $q$. Moreover, both terms has nonnegative coefficients: 
\begin{align*}
 & \sum_v a_v^2 b_v^2 - \sum_v a_v^2 b_v - \sum_v a_v b_v^2 + \sum_v a_v b_v \\
=& \sum_v (a_v^2 - a_v)(b_v^2 - b_v) \\
\geq &0 \tag{$a_v, b_v$ are nonnegative integers.}
\end{align*}

Since $p$ takes on value between $\max\{\epsilon_1, \epsilon\}$ and $1$, by AM-GM inequality and monotonicity of the variance function, the term is minimized when 
\begin{align*}
    p =& \min\{1, \max\{\epsilon_1, \epsilon_2, \\ &\sqrt{\frac{\epsilon_1\epsilon_2\sum_v(a_v^2 b_v^2 - a_v^2b_v - a_vb_v^2 + a_vb_v)}{\sum_v{a_vb_v}}}\}\}
\end{align*}
\end{proof}

\robustcount*
\begin{proof}
Since $\var[\Jcount]$ is strictly convex as a function $b_v$'s in its domain. To maximize this function, it suffices to consider only the extreme points in its feasible polytope. These are the all $0$ vector $\bm{0}$, and the vector $\bm{\hat{b}}_v$ for each join key $v \in \mathcal{U}$ that has $n_b$ at its $v$-th entries and $0$ everywhere else. It is easy to see that since all coefficients for are positive, the maximizing value is achieved by the vector $\bm{\hat{b}}_{v^*}$
\end{proof}

\thmcountdist*
\begin{proof}
Consider the variance of our count estimator given by Theorem~\ref{thm:var-count-centralize}:
\begin{equation} 
\begin{gathered}
    \var[\Jcount] = (\frac{1}{p} - 1) \sum_v a_v^2 b_v^2 + (\frac{1}{\epsilon_2} - \frac{1}{p})\sum_v a_v^2 b_v \\ 
    + (\frac{1}{\epsilon_1} - \frac{1}{p})\sum_v a_v b_v^2 + (\frac{p}{\epsilon_1 \epsilon_2} - \frac{1}{\epsilon_1} - \frac{1}{\epsilon_2} + \frac{1}{p})\sum_v a_v b_v.
\end{gathered}
\end{equation}
Fixed any $p$, under the constraint that for all $v$, $a_v \leq F_a$ and $b_v \leq F_b$, the variance is maximized when $a_v = F_a$ and $b_v = F_b$. This defines the worst case input under such constraint. So to obtain the optimate sampling rate for the worst case input, it suffice to substitute $a_v = F_a$ and $b_v = F_b$ to Theorem~\ref{thm:opt-rate}, and the theorem follows.
\end{proof}

\lemsumunbias*
\begin{proof}
Similar to $\Jcount$, each pair of tuples $(t_1, t_2)$ in the join appears in the join of the sample with probability $\pmin q_1q_2$. We have:
\begin{align*}
     &E[\Jsum] = \frac{1}{\pmin q_1q_2} E[SUM_W] \\
    =& \frac{1}{\pmin q_1q_2} \sum_{\substack{(t_1, t_2): \\ t_1 \in T_1, t_2 \in T_2 \\ t_1.J = t_2.J}} ((\pmin q_1q_2)t_1.c + (1 - \pmin q_1q_2)\cdot 0) \\
    =& \sum_{\substack{(t_1, t_2): t_1 \in T_1, t_2 \in T_2 \\ t_1.J = t_2.J}} t_1.c
\end{align*}
\end{proof}

\varsum*

\begin{proof}
We analyze the variance of $\Jsum$ using a similar process as \texttt{COUNT(*)}. 
For each join value $v$, define $X_v$ to be the sum of the $c$ values in $S_1$, and $Y_v$ to be the number of tuples in $S_2$ whose value for column $J$ is $v$. Define $W_v = X_v Y_v$ and 
\begin{equation*}
    Z_v = 
    \begin{cases}
    W_v & \mbox{with probability $p$} \\
    0 & \mbox{otherwise}
    \end{cases}
\end{equation*}
We have:
\begin{equation*}
    E[X_v] = q_1 a_v \mu_v 
\end{equation*}
and
\begin{equation*}
    E[X_v^2] = (q_1(1 - q_1) a_v (\mu_v^2 + \sigma_v^2) + q_1^2 a_v^2\mu_v^2  
\end{equation*}

Recall that $E[Y_v] = q_2 a_v$ and $E[Y_v^2] = b_vq_2(b_vq_2 + 1 - q_2)$. Hence
\begin{align*}
      &\var[W_v] \\ 
    =& E[X_v^2]E[Y_v^2] - E^2[X_v]E^2[Y_v] \\
    =&  ((q_1(1 - q_1) a_v (\mu_v^2 + \sigma_v^2)\\
    & + q_1^2 a_v^2\mu_v^2)(b_vq_2(b_vq_2 + 1 - q_2)) - q_1^2 a_v^2 \mu_v^2 q_2^2 b_v^2
\end{align*}
And 
\begin{align*}
    &\var(Z_v)\\
    =& p\var(W_v) + p(1 - p)E^2[W_v] \\
    =& p ((q_1(1 - q_1) a_v (\mu_v^2 + \sigma_v^2) - q_1^2 a_v^2\mu_v^2)(b_vq_2(b_vq_2 + 1 - q_2)) \\
    &- q_1^2 a_v^2 \mu_v^2 q_2^2 b_v^2) + p(1 - p) q_1^2 q_2^2 a_v^2 \mu_v^2 b_v^2
\end{align*}
 Hence 
 \begin{align*}
      &\var(\Jsum) \\ 
     = &\frac{1}{p^2q_1^2q_2^2} \sum_{v} \var(Z_v) \\
     =& \frac{1 - q_2}{pq_2} a_v^2 \mu_v^2 b_v + \frac{1 - q_1}{pq_1} a_v(\mu_v^2 + \sigma_v^2) b_v^2 \\
     &+ \frac{(1 - q_1)(1 - q_2)}{p q_1 q_2} a_v (\mu_v^2 + \sigma_v^2) b_v + \frac{1 - p}{p} a_v^2 \mu_v^2 b_v^2  
 \end{align*}
\end{proof}

\sameratesum*

\begin{proof}
Similar to Lemma~\ref{lem:same-rate}, we show that for any fixed $p_1$ between $1$ and $\max_{\epsilon_1, \epsilon_2}$ and $q_1 = \epsilon_1 / p_1$, the variance of the optimal sampling parameters is minimized when the universe sampling $p_2$ rate of $T_2$ is the same as $p_1$:

\case 1: If $p_2 \geq p_1$: This case has simple intuition. When $p_2 \geq p_1$, the join universe sampling rate for both table $p = \min\{p_1, p_2\} = p_1$. Hence increasing $p_2$ beyond $p_1$ do not increase the join universe sampling rate, and only decreases the Bernoulli sampling rate for table $T_2$ and increases the variance of the overall estimator. In particular, we have $p = p_1$ and $pq_1 = \epsilon_1$, we have:
\begin{align*}
      &\var(\Jsum) \\
     =& \sum_{v}( \frac{1 - q_2}{pq_2} a_v^2 \mu_v^2 b_v + \frac{1 - q_1}{pq_1} a_v(\mu_v^2 + \sigma_v^2) b_v^2 \\ 
     &+ \frac{(1 - q_1)(1 - q_2)}{p q_1 q_2} a_v (\mu_v^2 + \sigma_v^2) b_v + \frac{1 - p}{p} a_v^2 \mu_v^2 b_v^2 )\\
      =& \sum_{v} (\frac{p_2}{\epsilon_2} - 1) \cdot \frac{1}{p} a_v^2 \mu_v^2 b_v + \frac{1 - q_1}{pq_1} a_v(\mu_v^2 + \sigma_v^2) b_v^2 \\
      &+ (\frac{p_2}{\epsilon_2} - 1) \frac{(1 - q_1)}{p q_1}  a_v (\mu_v^2 + \sigma_v^2) b_v + \frac{1 - p}{p} a_v^2 \mu_v^2 b_v^2) 
\end{align*}
which is a non-decreasing function in terms of $p_2$ and it is minimized when $p_2 = p_1$.

\case 2: If $p_2 \leq p_1$: Here, a smaller $p_2$ can result in smaller join universe sampling rate in exchange of large Bernoulli sampling rate for $T_2$, We want to show overall decreasing $p_2$ will result in a large variance of $\Jsum$. In this case, $p = p_2$ and $pq_2 = \epsilon$, we have:
\begin{align*}
     &\var(\Jsum)\\ 
    =& \sum_{v}(\frac{1}{p} a_v^2 \mu_v^2 b_v^2 + \frac{1 - q_2}{pq_2} a_v^2 \mu_v^2 b_v + \frac{1 - q_1}{pq_1} a_v(\mu_v^2 + \sigma_v^2) b_v^2\\
    &+ \frac{(1 - q_1)(1 - q_2)}{p q_1 q_2} a_v (\mu_v^2 + \sigma_v^2) b_v -  a_v^2 \mu_v^2 b_v^2 ) \\
    =& \sum_v (\frac{1}{p_2} a_v^2 \mu_v^2 b_v^2 + (\frac{1}{\epsilon_2} - \frac{1}{p_2}) a_v^2 \mu_v^2 b_v \\
    &+ \frac{1}{p_2} \cdot \frac{1 - q_1}{q_1} a_v(\mu_v^2 + \sigma_v^2) b_v^2\\
    &+ (\frac{1}{\epsilon_2} - \frac{1}{p_2})\frac{(1 - q_1)}{q_1} a_v (\mu_v^2 + \sigma_v^2) b_v - a_v^2 \mu_v^2 b_v^2) \\
    =& \sum_{v} (\frac{1}{p_2} a_v^2 \mu_v^2 (b_v^2 - b_v) + \frac{1}{\epsilon_2} a_v^2 \mu_v^2 b_v \\
    &+ \frac{1}{p_2} \frac{1 - q_1}{q_1} a_v(\mu_v^2 + \sigma_v^2) (b_v^2 - b_v) \\
    &+ \frac{1}{\epsilon_2}\frac{(1 - q_1)}{q_1} a_v (\mu_v^2 + \sigma_v^2) b_v - a_v^2 \mu_v^2 b_v^2 )
\end{align*}
Since $b_v$'s are nonnegative integers so $b_v^2 \geq b_v$, $\var(\Jsum)$ is a non-increasing function in terms of $p_2$ and is minimized when $p_2 = p_1$.
\end{proof}

\thmsumcent*
\begin{proof}
Similar to Theorem~\ref{thm:var-count-centralize}, this expression can be obtained by substituting $q_1 = \epsilon_1 / p$ and $q_2 = \epsilon_2 / p$ back to the variance given in Theorem~\ref{lem:jsum:var}.
\end{proof}

\thmoptratesum*
\begin{proof}
This is analogous to Theorem~\ref{thm:opt-rate}. Since $p$ takes on value between $\max\{\epsilon_1, \epsilon\}$ and $1$, by AM-GM inequality and monotonicity of the variance function, the variance is minimized when 
\begin{align*}
    p =& \min\{1, \max\{\epsilon_1, \epsilon_2, \\
    &(\epsilon_1\epsilon_2 (\sum_v (a_v^2\mu_v^2 b_v^2 - a_v^2 \mu_v^2 b_v - a_v(\mu_v^2 + \sigma_v^2) b_v^2 \\
    &+ a_v (\mu_v^2 + \sigma_v^2)b_v))/(\sum_v{a_v(\mu_v^2 + \sigma_v^2)b_v}))^{1/2}\}\}.
\end{align*}
\end{proof}

\lemapprox*

\begin{proof}
We focus on the first inequality $1/2 h^*(p) \leq h'(p)$. The second inequality holds since it is $h^*(p)$ is obtained by maximizing over a larger subset. 

Observe that we can group the terms in~\ref{eqn:var-sum} into $f_1(\bm{b})$ and $f_2(\bm{b})$, where

\[
f_1(\bm{b},p) = \sum_v (\frac{1}{\epsilon_2} - \frac{1}{p}) a_v^2 \mu_v^2 b_v + (\frac{1}{p} - 1) a_v^2 \mu_v^2 b_v^2 
\]
and
\begin{align*}
f_2(\bm{b},p) =& (\frac{1}{\epsilon_1} - \frac{1}{p}) a_v(\mu_v^2 + \sigma_v^2) b_v^2 \\
&+ (\frac{p}{\epsilon_1\epsilon_2} - \frac{1}{\epsilon_1} - \frac{1}{\epsilon_2} + \frac{1}{p}) a_v (\mu_v^2 + \sigma_v^2) b_v
\end{align*}
That is, $f_1$ consists of  combinations of $a_v^2\mu_v^2$'s and $f_2$ consists of combinations of $a_v(\mu_v^2 + \sigma_v^2)$. Define $\bm{b}^v$ to be the vector that has $n_b$ on its $v$-th coordinate and $0$ everywhere else. We can rewrite $h_i$ as:
\[
h_v(p) = f_1(\bm{b}^v, p) + f_2(\bm{b}^v, p)
\]
By the choice of $v_1$ and $v_2$, we have for every $v$ that $f_1(\bm{b}^v, p) \leq f_1(\bm{b}^{v_1}, p)$ and $f_2(\bm{b}^v, p) \leq f_2(\bm{b}^{v_2}, p)$. Therefore, we have for all $v$ that 
\[
f_1(\bm{b}^v, p) + f_2(\bm{b}^v, p) \leq 2 \max\{f_1(\bm{b}^{v_1}, p), f_2(\bm{b}^{v_2}, p)\}
\]
Hence we have 
\begin{align*}
    h^*(p) &= \max_v f_1(\bm{b}^v, p) + f_2(\bm{b}^v, p)
    \\ &\leq 2\max\{f_1(\bm{b}^{v_1}, p), f_2(\bm{b}^{v_2}, p)\} \leq 2h'(p)
\end{align*}
\end{proof}

\thmvaravg*
\begin{proof}
For any function $f(x, y)$,  the bivariate first order Taylor expansion about any $(x_0, y_0)$ is
\begin{align*}
    f(x, y) = f(x_0, y_0) + \frac{\partial f}{\partial x}(x_0, y_0)(x - x_0) + \frac{\partial f}{\partial y}(x_0, y_0)(y - y_0) + R
\end{align*}
where $R$ is a remainder of smaller order terms. Consider $X, Y$ as two random variables with mean $\mu_x$ and $\mu_y$, , we can approximate $E[f(X, Y)]$ by a expanding $E[f(x, y)]$ around $(\mu_x, \mu_y)$:
\begin{align*}
    E[f(X, Y)] &\approx E[f(\mu_x, \mu_y)] + E[\frac{\partial f}{\partial X}(\mu_x, \mu_y)(X - \mu_x)] \\
    &+ \frac{\partial f}{\partial Y}(\mu_x, \mu_y)(X - \mu_y) \\
               & = f(\mu_x, \mu_y)
\end{align*}
Now consider a second order Taylor expansion around $(x_0, y_0)$:
\begin{align*}
    f(x, y) &= f(x_0, y_0) + \frac{\partial f}{\partial X}(x_0, y_0)(x - x_0) + \frac{\partial f}{\partial Y}(x_0, y_0)(y - y_0) \\
            &+ \frac{1}{2}(\frac{\partial^2 f}{\partial X^2}(x_0, y_0)(X - x_0)^2 + \frac{\partial^2 f}{\partial Y^2}(x_0, y_0)(Y - y_0)^2 \\
            &+ \frac{\partial^2 f}{\partial X \partial Y}(x_0, y_0)(X - x_0)(Y - y_0))\\
\end{align*}
We can similar expand $E[f(X, Y)]$ around $(\mu_x, \mu_y)$ and obtain 
\begin{align*}
    E[f(X, Y)] &= f(\mu_x, \mu_y) + \frac{1}{2}(E[\frac{\partial^2 f}{\partial X^2}(\mu_x, \mu_y)(X - \mu_x)^2] \\
               &+ E[\frac{\partial^2 f}{\partial Y^2}(x_0, y_0)(y - y_0))^2] \\
               &+ E[\frac{\partial^2 f}{\partial X \partial Y}(\mu_x, \mu_y)(X - \mu_x)(Y - \mu_y)]\\
               &= f(\mu_x, \mu_y) + \frac{1}{2}(\frac{\partial^2 f}{\partial X^2}(\mu_x, \mu_y)\var[X] \\
               &+ \frac{\partial^2 f}{\partial Y^2}(x_0, y_0)\var[Y] \\
               &+ \frac{\partial^2 f}{\partial X \partial Y}(\mu_x, \mu_y)\cov[X, Y]\\
\end{align*}
Plugging in $f(S, C) = S/C$, we have
\begin{align}
    \var[S/C] &\approx (\frac{E[S]^2}{E[C]^2})( \frac{\var[S]}{E[S]^2} - \frac{2\cov[S, C]}{E[S]E[C]} + \frac{\var[C]}{\E[C]^2})
\end{align}
Notice that the expression of $E[S]$, $E[C]$, $\var[S]$ and $\var[C]$ has already been given in Theorem~\ref{thm:var-count-centralize}, Theorem~\ref{thm:var-sum-centralize}. The term $\cov[X, Y] = E[(X - \mu_x)(Y - \mu_y)]$ can be obtained similar to Theorem~\ref{thm:var-count-centralize}. Hence the theorem follows. 
\end{proof}

\thmoptrateavg*
\begin{proof}
The proof is identical to Theorem~\ref{thm:opt-rate} and Theorem~\ref{thm:opt-rate-sum}.
\end{proof}

\section{Omitted Algorithms}
\label{app:omitted_algo}

\subsection{Ommitted algorithm in Section~\ref{sec:sum:decentralized}}

The algorithm determines the universe sampling rate $p$ for a decentralized setting for \SUM query is as follows:
\begin{enumerate2}
\item $v_1 = \argmax_v a_v^2 \mu^2$ amd $v_2 = \argmax_v a_v(\mu_v^2 + \sigma_v^2)$
\item If $v_1 = v_2$, return 
\begin{align*}
p &= \min\{1, \max\{\epsilon_1, \epsilon_2, \\
    &(\epsilon_1\epsilon_2 (a_{v_1}^2\mu_{v_1}^2 n_b^2 \\
    &- a_{v_1}^2 \mu_{v_1}^2 n_b - a_{v_1}(\mu_{v_1}^2 + \sigma_{v_1}^2) n_b^2 \\
    &+  a_{v_1} (\mu_{v_1}^2 + \sigma_{v_1}^2)n_b)/({a_{v_1}(\mu_{v_1}^2 + \sigma_{v_1}^2)n_b}))^{1/2}\}\}.
\end{align*}
\item Otherwise, for $i = 1, 2$, let 
\begin{align*}
            &h_i(p) \\
    &= (\frac{1}{\epsilon_2} - \frac{1}{p}) a_{v_i}^2 \mu_{v_i}^2 n_b + (\frac{1}{\epsilon_1} - \frac{1}{p}) a_{v_i}(\mu_{v_i}^2 + \sigma_{v_i}^2) n_b^2 \\ 
&+ (\frac{p}{\epsilon_1\epsilon_2} - \frac{1}{\epsilon_1} - \frac{1}{\epsilon_2} + \frac{1}{p}) a_{v_i} (\mu_{v_i}^2 + \sigma_{v_i}^2) n_b \\
&+ (\frac{1}{p} - 1) a_{v_i}^2 \mu_{v_i}^2 n_b^2 .
\end{align*}
\item Find the roots $p^*$ of $h_1(p) = h_2(p)$, this can be reduced to solving a quadratic equation since $h_i(p)$ are in the form $A_i p + B_i /p + C_i$. Let $p_1, p_2$ be the roots.
\item Let $p_3$ and $p_4$ be the minimizer of $h_1$ and $h_2$, given by:
\begin{align*}
p_3 &= \min\{1, \max\{\epsilon_1, \epsilon_2, \\
    &(\epsilon_1\epsilon_2 (a_{v_1}^2\mu_{v_1}^2 n_b^2 \\
    &- a_{v_1}^2 \mu_{v_1}^2 n_b - a_{v_1}(\mu_{v_1}^2 + \sigma_{v_1}^2) n_b^2 \\
    &+  a_{v_1} (\mu_{v_1}^2 + \sigma_{v_1}^2)n_b)/({a_{v_1}(\mu_{v_1}^2 + \sigma_{v_1}^2)n_b}))^{1/2}\}\}.
\end{align*}
And similarly for $p_4$ where we replace $v_1$ by $v_2$.
\item Let $p_5 = \max\{\epsilon_1, \epsilon_2\}$. 
\item Compute $j = \argmin_{i: \epsilon_1, \epsilon_2 \leq p_i \leq 1}\{\max\{h_1(p_i), h_2(p_i)\}\}$.
\item Return $p = p_j$.
\end{enumerate2}

\section{Other Decentralized Protocols}
\label{app:other-decentralized}

In Section~\ref{sec:algorithm}, we analyzed  decentralized 
    settings using a \textsc{Dictatorship} protocol. 
 In \textsc{Dictatorship}, one of the parties determines the 
    universe sampling rate for everyone based on (i) its local statistics as well as (ii) the table size and the budget received from the other party. 
    
An alternative protocol is a \textsc{Voter} protocol, where each party proposes (i) a universe sampling rate and 
(ii) a worst case variance if this rate were to be adopted by everyone. Once this information is exchanged, all parties adopt the rate with the best worst case variance. While offering a better worst case variance by design, \textsc{Voter} does not guarantee that the actual variance will indeed be lower than that of \textsc{Dictatorship}. This is because the actual variance depends on the local statistics of the other parties as well. 

Moreover, for \SUM and \AVG queries, which unlike \COUNT queries involve an aggregate column, one can show that \textsc{Voter} is unnecessary: it is always better  to
simply adopt the rate proposed by the party with has the aggregate column. This is because the party without this information can only assume an arbitrary distribution of the 
aggregate column, leading to overestimation of the worst case variance. 

Note that, if one modifies \textsc{Voter} such 
    that each party also shares their full statistics 
        with others, the protocol then reduces 
        to a centralized setting but with significantly more communication.

There is yet a more complex protocol that one can apply in a decentralized setting: an iterative, multi-round protocol, called \textsc{Explorer}. 
In this protocol, the parties each choose an arbitrary value as their initial sampling parameter, say $p_1^0$ and $p_2^0$, 
and produce a sample of their own table using their own parameter, say $S_1^0$ and $S_2^0$, respectively. 
Then, they each share their chosen sampling rate and sample with the other party. Then, each party uses its own local table and the sample received from the other party to derive a new sampling parameter, say $p_1^1$ and $p_2^1$, respectively. 
This process continues iteratively until the sampling parameters converge, or the amount of communication exceeds a fixed budget.   \textsc{Explorer}, however, is significantly more expensive than both \textsc{Dictatorship} and \textsc{Voter}. A full analysis of the \textsc{Explorer} protocol is beyond the scope of the current paper, and we leave to future work.

\fi

\end{document}